\newtheorem{theorem}{Theorem}
\newtheorem{proposition}{Proposition}
\newtheorem{lemma}{Lemma}
\newcommand{\A}{\mathcal{A}}
\newcommand{\B}{\mathcal{B}}
\newcommand{\R}{\mathbb{R}}
\newcommand{\tA}{\text{\tiny A}}
\newcommand{\tB}{\text{\tiny B}}
\newcommand{\tL}{\text{\tiny L}}
\newcommand{\tR}{\text{\tiny R}}
\newcommand{\tLR}{\text{\tiny LR}}
\newcommand{\tRL}{\text{\tiny RL}}
\newcommand{\tSR}{\text{\tiny SR}}
\newcommand{\dl}{d_{\tL}}
\newcommand{\dr}{d_{\tR}}
\newcommand{\dlr}{d_{\tLR}}
\newcommand{\drl}{d_{\tRL}}
\newcommand{\deltalr}{\delta_{\tLR}}
\newcommand{\deltarl}{\delta_{\tRL}}
\newcommand{\blr}{\beta_{\tLR}}
\newcommand{\brl}{\beta_{\tRL}}
\newcommand{\mul}{\mu_{\tL}}
\newcommand{\mur}{\mu_{\tR}}
\newcommand{\as}{\alpha^{\ast}}
\newcommand{\sigmas}{\sigma^{\ast}}
\newcommand{\sigmaa}{\sigma_{\tA}}
\newcommand{\deltaa}{\delta_{\tA}}
\newcommand{\SQUARE}{\Gamma}
\newcommand{\SQUAREAB}{\square}
\newcommand{\RECTII}{\Xi}
\newcommand{\lL}{\textsc{l}}
\newcommand{\lR}{\textsc{r}}
\newcommand{\lC}{\textsc{c}}
\newcommand{\lB}{\textsc{b}}
\newcommand{\lA}{\textsc{a}}
\newcommand{\Bc}{B^{\circ}}
\newcommand{\Bb}{\bar{B}}
\newcommand{\LER}{\Lambda}
\newcommand{\SL}{S^{\tL}}
\newcommand{\SR}{S^{\tR}}
\newcommand{\FL}{F^{\tL}}
\newcommand{\FR}{F^{\tR}}
\newcommand{\ds}{d^{\ast}}
\newcommand{\pth}[2][\!]{#1\left({#2}\right)}
\newcommand{\relphantom}[1]{\mathrel{\phantom{#1}}}
\newenvironment{denseitems}{\list{$\bullet$}%
  {\labelwidth3em\itemsep0pt\parsep0pt\topsep0.6ex}}{\endlist}
\newcommand{\pccc}{\ensuremath{\mathit{CCC}}}
\newcommand{\pcsc}{\ensuremath{\mathit{CSC}}}
\newcommand{\plrl}{\ensuremath{\mathit{LRL}}}
\newcommand{\prlr}{\ensuremath{\mathit{RLR}}}
\newcommand{\prsr}{\ensuremath{\mathit{RSR}}}
\newcommand{\plsl}{\ensuremath{\mathit{LSL}}}
\newcommand{\prsl}{\ensuremath{\mathit{RSL}}}
\newcommand{\plsr}{\ensuremath{\mathit{LSR}}}
\newcommand{\AD}{A^{\Delta}}
\newcommand{\BD}{B^{\Delta}}
\newcommand{\CD}{C^{\Delta}}
\DeclareMathOperator{\dub}{dub}
\renewcommand{\l}{\ell}
\newcommand{\Th}{\vartheta}
\DeclareMathOperator{\llsl}{\textsc{lsl}}
\DeclareMathOperator{\lrsr}{\textsc{rsr}}
\DeclareMathOperator{\llsr}{\textsc{lsr}}
\DeclareMathOperator{\lrsl}{\textsc{rsl}}
\DeclareMathOperator{\lrslp}{\textsc{rsl}^{\ast}}
\DeclareMathOperator{\llrl}{\textsc{lrl}}
\DeclareMathOperator{\lrlr}{\textsc{rlr}}
\newcommand{\arc}[1]{\ddddot{#1}}
\let\geq\geqslant
\let\leq\leqslant
\renewcommand{\showkeyslabelformat}[1]{\normalfont\tiny\ttfamily#1}
\def\section{\@startsection {section}{1}{\z@}{-3.5ex plus -1ex minus
-.2ex}{2.3ex plus .2ex}{\large\bf}}
\def\subsection{\@startsection{subsection}{2}{\z@}{-3.25ex plus -1ex
minus -.2ex}{1.5ex plus .2ex}{\normalsize\bf}}
\def\@fnsymbol#1{\ensuremath{\ifcase#1\or *\or 1\or 2\or
    3\or 4\or 5\or 6\or 7 \or 8\ or 9 \or 10\or 11 \else\@ctrerr\fi}}
\begin{document}

\title{The Cost of Bounded Curvature%
  \thanks{This research was supported in part by Mid-career Researcher
    Program through NRF grant funded by the~MEST
    (No.~R01-2008-000-11607-0) and in part by the NRF grant
    2011-0030044 (SRC-GAIA) funded by the government of Korea.}}

\author{Hyo-Sil Kim%
  \thanks{Department of Computer Science, KAIST, Daejeon, Korea.
    \{hyosil,otfried\}@tclab.kaist.ac.kr.}
  \and
  Otfried Cheong\footnotemark[2]}

\maketitle

\begin{abstract}
  We study the motion-planning problem for a car-like robot whose
  turning radius is bounded from below by one and which is allowed to
  move in the forward direction only (Dubins car).  For two robot
  configurations $\sigma, \sigma'$, let $\ell(\sigma, \sigma')$ be the
  shortest bounded-curvature path from~$\sigma$ to~$\sigma'$.  For $d
  \geq 0$, let $\ell(d)$ be the supremum of $\ell(\sigma, \sigma')$,
  over all pairs $(\sigma, \sigma')$ that are at Euclidean
  distance~$d$.  We study the function~$\dub(d) = \ell(d) - d$, which
  expresses the difference between the bounded-curvature path length
  and the Euclidean distance of its endpoints.  We show that $\dub(d)$
  decreases monotonically from $\dub(0) = 7\pi/3$ to $\dub(\ds) =
  2\pi$, and is constant for $d \geq \ds$.  Here $\ds \approx
  1.5874$. We describe pairs of configurations that exhibit the
  worst-case of $\dub(d)$ for every distance~$d$.
\end{abstract}

\section{Introduction}

%In general, 
\emph{Motion planning} or \emph{path planning} involves computing a
feasible path, possibly optimal for some criterion such as time or
length, of a robot moving among obstacles; see the book by
Lavalle~\cite{l-pa-06} and book chapters by Halperin et
al.~\cite{hkl-04} and Sharir~\cite{s-04}.  A robot generally comes
with physical limitations, such as bounds on its velocity,
acceleration or curvature. Such differential constraints restrict the
geometry of the paths the robot can follow. In this setting, the goal
of motion planning is to find a feasible (or optimal) path satisfying
both global (obstacles) and local (differential) constraints if it
exists.

In this paper, we study the \emph{bounded-curvature} motion planning
problem which models a car-like robot. 
%%Parallel-parking is difficult because 
A car (with front-wheel steering) is constrained to move in the direction
that the rear wheels are pointing, and it has a fixed maximum steering
angle. This makes the car travel in a motion with fixed \emph{minimum
turning radius}, which means that the car must follow a
\emph{curvature-constrained} path. More precisely, we have the following
robot model:

\paragraph{Robot model (Dubins car).} 
The robot is considered a rigid body that moves in the plane. A
\emph{configuration} of the robot is specified by both its location, a
point in~$\R^2$ (typically, the midpoint of the rear axle), and its
orientation, or direction of travel.  The robot is
constrained to move in the forward direction, and its turning radius is
bounded from below by a positive constant, which can be assumed to be equal
to one by scaling the space. In this context, the robot follows a
bounded-curvature path, that is, a differentiable curve whose curvature is
constrained to be at most one almost everywhere.

\medskip

Planning the motion of a car-like robot has received considerable
attention in the literature.  
In this paper, we consider the cost of this restriction: How much
longer is the shortest path made by such a robot compared to the
Euclidean distance travelled?   

Formally, consider two configurations $\sigma$ and $\sigma'$. Let
$\ell(\sigma, \sigma')$ denote the length of a shortest
curvature-constrained path from~$\sigma$ to~$\sigma'$, and let
$d(\sigma, \sigma')$ denote the Euclidean distance between $\sigma$
and~$\sigma'$.  We define
\begin{align}
  \label{eq:def-dub}
  \dub(d) & = \sup \{ \ell(\sigma,\sigma') - d \mid
  \text{$\sigma$, $\sigma'$ configurations with $d(\sigma,\sigma') =
    d$}\}.
\end{align}
Note that the supremum here is not a maximum, as the path length is
not a continuous function of the orientations at the two endpoints.
Our goal is to understand the function $\dub: \R \mapsto \R$ in
detail.  While this is a natural and fundamental question related to
motion planning with bounded curvature, it is also a relevant question
that has repeatedly appeared in the literature, with only partial
answers so far.

Dubins~\cite{d-cmlcvcpitpt-57} showed that the shortest
curvature-constrained path between two configurations consists of at most
three segments, each of which is either a straight segment or a circular
arc of radius one.  Using ideas from control theory, Boissonnat et
al.~\cite{bcl-spbcp-94}, in parallel with Sussmann and
Tang~\cite{st-sprsc-91}, gave an alternative proof.
Sussmann~\cite{s-s3dppcb-95} extended the characterization to the
3-dimensional case. Bui et al.~\cite{bsbl-spsnrmf-94} discussed how the
types of optimal paths partition the configuration space, and also proved
that optimal paths for free final orientation have at most two
segments~\cite{bb-arcomfop-94}.  Significant work has been done on the
problems of deciding whether a bounded-curvature path exists between given
configurations among different kinds of obstacles and finding the shortest
such path~\cite{fw-pcm-91, bk-cbtnc-05, rw-ctdccspp-98,
  jc-pspmr-92, aw-aaccsp-01, bk-caasbcp-08,
  art-mpscrmo-95,ablrsw-ccspcp-02, bl-ptacspbcmo-03,
  bgkl-ltaccpbcsp-02}.

At least two interesting problems have been studied where not
configurations but only \emph{locations} for the robot are given.  The
first problem considers a \emph{sequence} of points in the plane, and
asks for the shortest curvature-constrained path that visits the
points in this sequence.  In the second problem, the \emph{Dubins
  traveling salesman problem}, the input is a \emph{set} of points in
the plane, and asks to find a shortest curvature-constrained path
visiting all points.  Both problems have been studied by researchers
in the robotics community, giving heuristics and experimental
results~\cite{sfb-opptspdv-05,mc-rhpdtsp-06,leny_07}. From a
theoretical perspective, Lee et al.~\cite{lcksc-accspsp-00} gave a
linear-time, constant-factor approximation algorithm for the first
problem.  No general approximation algorithms are known for the Dubins
traveling salesman problem (the approximation factor of the known
algorithms depends on the smallest distance between points).

All this work depends on some knowledge of the function~$\dub$.  Lee
et al.~\cite{lcksc-accspsp-00}, for instance, prove that the
approximation ratio of their algorithms is $\max(\A, \pi/2 + \B/\pi)$,
where $\A = 1 + \sup\{\dub(d)/d \mid d \geq 2\}$ and $\B =
\sup\{\dub(d) + d \mid d \leq 2\}$.  They claim without proof that
$\dub(d) \leq 2\pi$ for $d\geq 2$ and derive from this that $\A = 1 +
\pi$ and $\B \leq 5\pi/2 + 3$, leading to an approximation ratio of
about~$5.03$.  We give the first proof of~$\A = 1 + \pi$, and improve
the second bound to~$\B = 2 + 2\pi$, improving the approximation ratio
of their algorithm to~$2 + 2/\pi + \pi/2 \approx 4.21$.

Savla et al.~\cite{sfb-tspdv-08} prove that $\dub(d) \leq \kappa \pi$,
where $\kappa \in [2.657,2.658]$, and conjecture based on numerical
experiments that the true bound is $7\pi/3$.  We show that this is
indeed true.

\paragraph{Results.}  We show that $d \mapsto \dub(d)$ is a decreasing
function with two breakpoints, at~$\sqrt{2}$ and at~$\ds \approx 1.5874$
(see Figure~\ref{fig:dub-graph}). More precisely, we have $\dub(0) =
7\pi/3$ and the two breakpoint values are $\dub(\sqrt{2}) = 5\pi/2 -
\sqrt{2}$, and $\dub(\ds) = 2\pi$. The function $\dub(d)$ is constant and
equal to~$2\pi$ for~$d \geq \ds$.
\begin{figure}[h]
  \centerline{\includegraphics{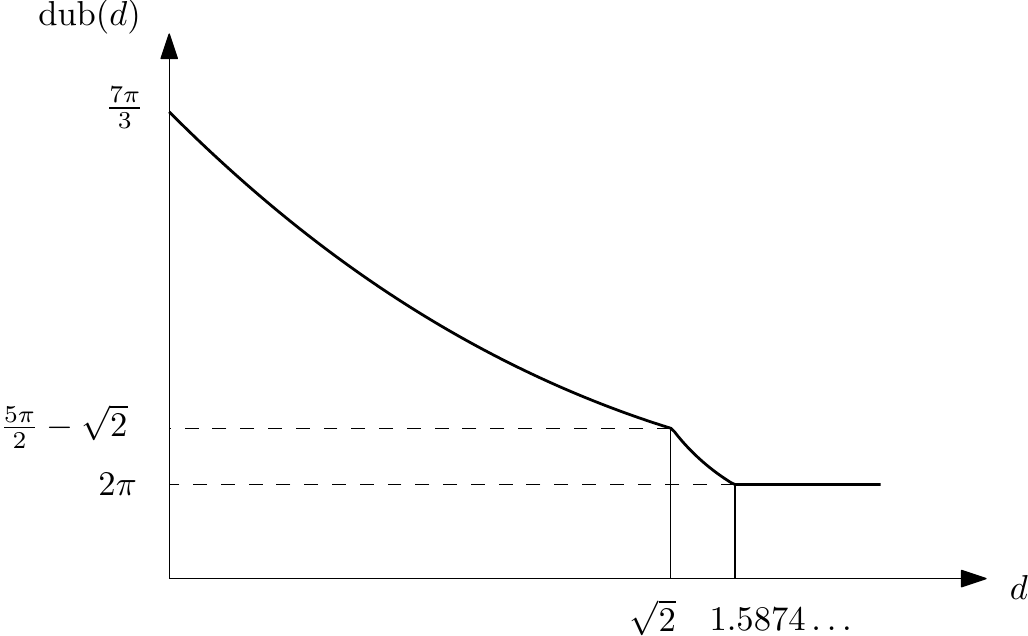}}
  \caption{The graph of the function $\dub(d)$.}
  \label{fig:dub-graph}
\end{figure}

For $0 \leq d < \sqrt{2}$ and for $d \geq \ds$, the supremum
in~\eqref{eq:def-dub} is in fact a maximum, and we give
configurations~$\sigma, \sigma'$ at distance~$d$ such that
$\ell(\sigma, \sigma') = \dub(d) + d$.  Perhaps surprisingly, for
$\sqrt{2} \leq d < \ds$, there are no such configurations---the
supremum is not a maximum.

Our proof is long and contains calculations that some readers may find
tedious.  After laying the necessary groundwork in
Section~\ref{sec:preliminaries}, we will therefore provide only a
high-level proof in Section~\ref{sec:overview}.  We fill in the
details in Sections~\ref{sec:case-c} to~\ref{sec:case-b}, leaving some
of the more technical or tedious calculations to an appendix.

\section{Preliminaries}
\label{sec:preliminaries}

\paragraph{Notations.}

For two points $P$ and~$Q$, we denote by $\overline{PQ}$ the line
segment with endpoints~$P$ and~$Q$, and by $\arc{PQ}$ an arc of unit
radius with endpoints~$P$ and~$Q$. (If the length of $\overline{PQ}$
is less than two then there are four such arcs, so unless it is clear
from the context, we will specify the supporting circle and the
orientation of the arc.)  We denote the length of the segment
$\overline{PQ}$ as $|\overline{PQ}|$ or simply as~$|PQ|$, and the
length of the arc~$\arc{PQ}$ as~$|\arc{PQ}|$.

\medskip

Without loss of generality, we assume that the starting configuration
is $(0, 0, \alpha)$---that is, we start at the origin~$S = (0,0)$ with
orientation~$\alpha$---and the final configuration is~$(d, 0,
\beta)$---that is, we arrive at $F = (d,0)$ with
orientation~$\beta$. Here, $\alpha$ and $\beta$ express the
orientation of the robot as an angle with the positive $x$-axis, and
$d \geq 0$ is the Euclidean distance of the two configurations.

The \emph{open} unit (radius) disks tangent to the
starting and final configurations are denoted~$L_S, R_S, L_F, R_F$,
where the letters $L$ or $R$ depend on whether the disk is located on
the left or right side of the direction vector (see
Figure~\ref{fig:example-configurations}).
\begin{figure}
  \centerline{\includegraphics{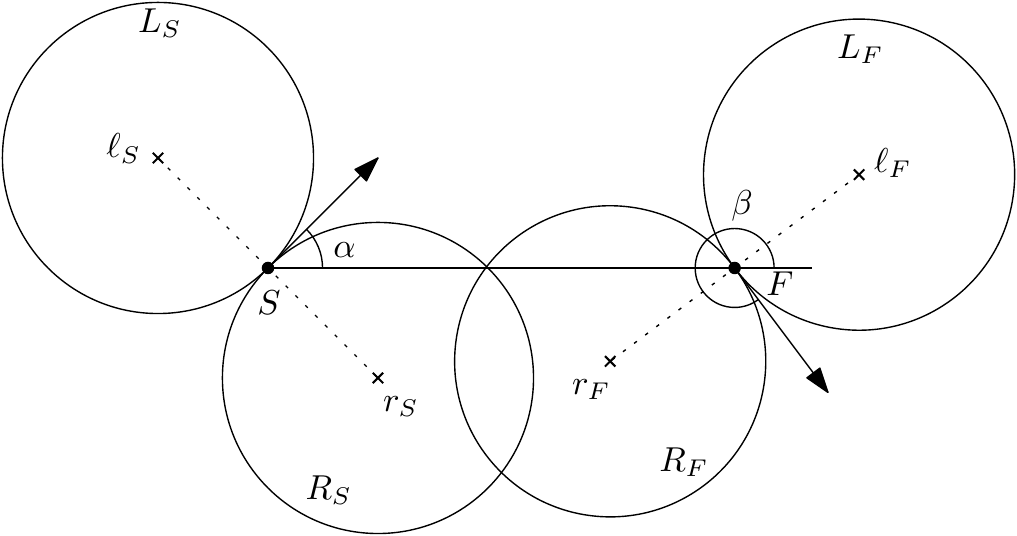}}
  \caption{An example of a pair of configurations.}
  \label{fig:example-configurations}
\end{figure}

Let $\ell_S, r_S, \ell_F, r_F$ denote the centers of $L_S, R_S, L_F,
R_F$, respectively. For future reference, we note their coordinates:
\begin{align*}
  \ell_S & = (\cos(\alpha+\pi/2), \sin(\alpha+\pi/2))  =  (-\sin\alpha,
  \cos\alpha) \\
  r_S & = (\cos(\alpha-\pi/2), \sin(\alpha-\pi/2))  =  (\sin\alpha,
  -\cos\alpha) \\
  \ell_F & = (d+\cos(\beta+\pi/2), \sin(\beta+\pi/2))  =  (d-\sin\beta,
  \cos\beta) \\
  r_F & =  (d+\cos(\beta-\pi/2), \sin(\beta-\pi/2))  =  (d+\sin\beta,
  -\cos\beta).
\end{align*}

\paragraph{Distances between centers.} 

The following distances will be frequently used:
\begin{align}
  \dl & = |\ell_{S}\ell_{F}| =
  \sqrt{(d-\sin\beta+\sin\alpha)^2+(\cos\beta-\cos\alpha)^2}
  \label{eq:dl-alpha-beta}\\
  \dr & = |r_{S}r_{F}| =
  \sqrt{(d+\sin\beta-\sin\alpha)^2+(-\cos\beta+\cos\alpha)^2}
  \label{eq:dr-alpha-beta}\\ 
  \dlr & = |\ell_{S}r_{F}| = 
  \sqrt{(d+\sin\beta+\sin\alpha)^2+(-\cos\beta-\cos\alpha)^2}
  \label{eq:dlr-alpha-beta}\\
  \drl & = |r_{S}\ell_{F}| = 
  \sqrt{(d-\sin\beta-\sin\alpha)^2+(\cos\beta+\cos\alpha)^2}
  \label{eq:drl-alpha-beta}.
\end{align}

\paragraph{Dubins paths.} Dubins~\cite{d-cmlcvcpitpt-57} showed that
for two given configurations in the plane, shortest
bounded-curvature paths consist of arcs of unit radius circles
($C$-segments) and straight line segments ($S$-segments); moreover, such
shortest paths are of type \pccc{} or \pcsc{}, or a substring thereof. These
types of paths are referred to as \emph{Dubins paths}.

For given $d \geq 0$, $\alpha, \beta \in [0, 2\pi]$, there are up to six
types of Dubins paths.  The two path types~\plsl{} and~\prsr{} use outer
tangents---these path types exist for any choice of~$d, \alpha, \beta$.
The two path types~\plsr{} and~\prsl{} use inner tangents, and exist only
when the corresponding disks are disjoint.  In particular, \plsr{} exists
if and only if $\dlr \geq 2$, and \prsl{} exists if and only if $\drl \geq
2$.   The remaining two path types \plrl{} and \prlr{} exist whenever there
is a disk tangent to the two disks, and so \plrl{}  exists if and only if
$\dl \leq 4$, and \prlr{} exists if and only if $\dr \leq 4$.

Dubins showed that in \plrl- and \prlr-paths the middle circular arc has
length larger than~$\pi$. This implies that of the two unit radius disks
tangent to $L_S$ and~$L_F$, only one is a candidate for the middle arc of
an \plrl-path, and similar for \prlr-paths.

\medskip

For $d \geq 0$ and $0 \leq \alpha,\, \beta \leq 2\pi$, we define
$\llsl(d, \alpha, \beta)$ to be the length of the $LSL$-path from $(0,
0, \alpha)$ to $(d, 0, \beta)$.  We define $\lrsr, \llsr, \lrsl,
\llrl, \lrlr$ similarly, defining the length to be~$\infty$ if no path
of that type exists.  The length of the shortest bounded-curvature
path from $S$ to~$F$ is then
\[
\l(d, \alpha, \beta) = \min \big\{\llsl(d, \alpha, \beta), 
\lrsr(d, \alpha, \beta), 
\llsr(d, \alpha, \beta), \lrsl(d, \alpha, \beta), 
\llrl(d, \alpha, \beta), \lrlr(d, \alpha, \beta) \big\},
\]
and our goal is to bound $\dub(d) = \sup_{0 \leq \alpha, \beta \leq
  2\pi} \l(d, \alpha, \beta) - d$. (Note that the supremum here is not
always a maximum as the function~$\l$ is not continuous.)

We will often suppress the argument~$d$ for these functions when the
distance~$d$ is fixed and understood.

\paragraph{Monotonicity of the Dubins cost function.}

Let~$\SQUAREAB = [0, 2\pi]^{2}$ denote the range of~$(\alpha, \beta)$.
Consider two distances~$d_1 < d_2$ and $(\alpha, \beta) \in
\SQUAREAB$, and assume that we have a bounded-curvature path from $(0, 0,
\alpha)$ to~$(d_1, 0, \beta)$ of length~$\ell \leq \dub(d_1) + d_1$.
If this path has a horizontal tangent where the path is oriented to
the right (in the direction of the positive $x$-axis), then we can
insert a horizontal segment of length~$d_2 -d_1$ at this point, and
obtain a path from $(0, 0, \alpha)$ to $(d_2, 0, \beta)$ of
length~$\ell + (d_2 - d_1) \leq \dub(d_1) + d_2$. See, for instance,
Figure~\ref{fig:monotonicity-rsl}(a).
\begin{figure}
  \centerline{\subfigure[$\ell(d_2, \alpha, \beta) 
      \leq \lrsl(d_1, \alpha, \beta) +(d_2
      -d_1)$.]{\includegraphics{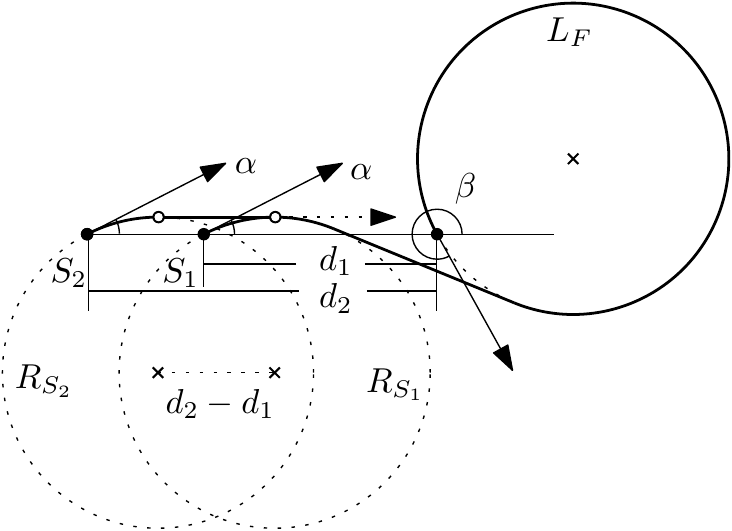}} 
    \hspace{1cm}
    \subfigure[An \prlr-path that does not have a right horizontal
      tangent.]{\includegraphics{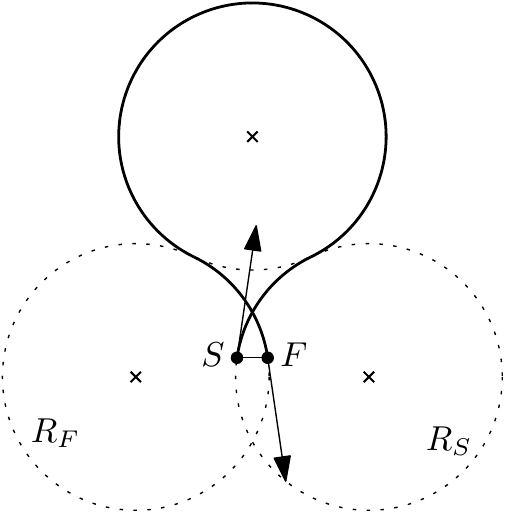}}} 
  \caption{(a) Monotonicity of the length function of \pcsc-paths;
    (b) Non-monotonicity of the length function of \pccc-path.}
  \label{fig:monotonicity-rsl}
\end{figure}

If this was possible for all $(\alpha, \beta) \in \SQUAREAB$, then it
would imply that $\dub(d_2) \leq \dub(d_1)$, and it would follow that
the Dubins cost function is monotone.  Unfortunately, not all Dubins
paths have horizontal tangents with the correct orientation (see
Figure~\ref{fig:monotonicity-rsl}(b) for an example), and so proving
the monotonicity of the Dubins cost function will require much more
work.  However, we can start with the following lemma:
\begin{lemma}
  \label{lem:monotonicity-csc}
  Let $d_1 < d_2$, and $(\alpha, \beta) \in \SQUAREAB$.  If there is a
  path of length~$\ell$ of type \prsr, \plsl, \plsr, or \prsl{}
  from~$(0, 0, \alpha)$ to~$(d_1, 0, \beta)$, then there is a path of
  length~$\ell + (d_2 - d_1)$ from~$(0, 0, \alpha)$ to~$(d_2, 0,
  \beta)$.
\end{lemma}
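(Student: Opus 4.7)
The plan is to construct a path from $(0, 0, \alpha)$ to $(d_2, 0, \beta)$ of length $\ell + (d_2 - d_1)$ by locating a point on the given $CSC$-path where the tangent direction is exactly $0$ (horizontal, pointing in the positive $x$-direction), and splicing in a horizontal segment of length $d_2 - d_1$ at that point. Such a splice preserves the curvature constraint, shifts the remainder of the path by $(d_2 - d_1, 0)$, and so produces a bounded-curvature path arriving at $(d_2, 0)$ with final orientation $\beta$ of length $\ell + (d_2 - d_1)$. The key fact to prove is therefore: every $CSC$-path from $(0, 0, \alpha)$ to $(d_1, 0, \beta)$ with $d_1 \geq 0$ has a point where the tangent direction is $0$.

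I would prove this key fact separately for each of the four path types. If the $S$-segment has direction $\theta = 0$, the splice is inside the $S$-segment and there is nothing more to do. Otherwise, one of the two $C$-arcs must sweep through direction $0$: on a left-turning (CCW) arc, the tangent direction equals $0$ at the bottom of the supporting circle, and on a right-turning (CW) arc, at the top. So for each type it suffices to check that at least one of the two arcs contains the corresponding special point. For \plsl, where the tangent direction $\psi$ is monotonically non-decreasing, this reduces to ruling out the bad configuration $0 < \alpha \leq \theta \leq \beta$ (in $[0, 2\pi)$), in which $\psi$ sweeps only the arc $[\alpha, \beta]$ and never reaches $0 \equiv 2\pi$. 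A direct calculation with $\ell_F - \ell_S = (d_1 + \sin\alpha - \sin\beta,\; \cos\beta - \cos\alpha)$ shows that this bad case forces $\beta \in (\pi, 2\pi)$, so $\sin\beta < 0$, and then the quantity
\begin{equation*}
  \operatorname{Im}\bigl(e^{-i\beta}(\ell_F - \ell_S)\bigr) \;=\; 1 - \cos(\alpha - \beta) - d_1 \sin\beta
\end{equation*}
is strictly positive; this forces $\theta - \beta$ into $(0, \pi)$ modulo $2\pi$, contradicting $\theta \leq \beta$.

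The \prsr{} case follows from \plsl{} by reflecting the picture across the $x$-axis (which swaps left- and right-turns and replaces $\alpha \mapsto 2\pi - \alpha$, $\beta \mapsto 2\pi - \beta$). For \plsr{} and \prsl{} the $S$-segment is an inner tangent and $\psi$ is no longer monotone, but an analogous bad configuration can be defined and ruled out by the same kind of computation carried out at the appropriate pair of centers ($\ell_S, r_F$ for \plsr{}, $r_S, \ell_F$ for \prsl{}). The main obstacle in writing this up is the modular bookkeeping of angles across these four sub-cases: since each arc can sweep up to $2\pi$ and the directions $\alpha, \beta, \theta$ live on a circle, one has to be careful about which intervals on $S^1$ are being compared. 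Once every sub-case is verified, splicing in the horizontal segment at the identified point yields the required path of length $\ell + (d_2 - d_1)$.
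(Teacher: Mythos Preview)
Your overall strategy---locate a point where the tangent direction is~$0$ and splice in a horizontal segment of length $d_2 - d_1$---is exactly the paper's. The difference is in how the key fact is verified. The paper argues purely geometrically: for an \prsr-path the topmost point of the path must have tangent in the $+x$ direction (both arcs are clockwise, so a local maximum of~$y$ on an arc is the top of its circle, where the tangent points right); for an \prsl-path, either the first arc contains the top of~$R_S$ or the last arc contains the bottom of~$L_F$ (both have tangent direction~$0$), and if neither holds, the path can never return to the positive $x$-axis. The remaining two types follow by reflection in the $x$-axis. This avoids all the modular-angle bookkeeping you flag as the main obstacle.

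Your computational route for \plsl{} has a genuine gap at the last step. From $\operatorname{Im}\bigl(e^{-i\beta}(\ell_F - \ell_S)\bigr) > 0$ you correctly deduce $\sin(\theta - \beta) > 0$, but this does \emph{not} contradict $\theta \leq \beta$: with $\theta,\beta \in (0,2\pi)$ and $\theta \leq \beta$ it only says $\beta - \theta > \pi$, i.e.\ the second arc has length~$\gamma_2 > \pi$, which is allowed. To finish you need the symmetric computation at~$\alpha$: the same $y$-integral reasoning that gives $\beta > \pi$ also gives $\alpha < \pi$, so $\sin\alpha > 0$ and
\[
\operatorname{Im}\bigl(e^{-i\alpha}(\ell_F - \ell_S)\bigr) \;=\; \cos(\alpha-\beta) - 1 - d_1\sin\alpha \;<\; 0,
\]
forcing $\gamma_1 = \theta - \alpha > \pi$ as well. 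Then $\gamma_1 + \gamma_2 > 2\pi$ contradicts $\gamma_1 + \gamma_2 = \beta - \alpha < 2\pi$. So your approach is salvageable, but as written the \plsl{} case is incomplete, and the \plsr/\prsl{} cases are only sketched; the paper's topmost-point argument sidesteps all of this in two sentences.
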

\begin{proof}
  It suffices to show that any of these path types must have a
  horizontal tangent oriented in the positive $x$-direction.  By
  symmetry, it suffices to show this for \prsr- and \prsl-paths.  The
  topmost point on a \prsr-path necessarily has the correct tangent,
  so consider an \prsl-path.  It consists of a right-turning
  arc~$\arc{ST_1}$ on~$R_S$, a segment~$\overline{T_1T_2}$, and a
  left-turning arc~$\arc{T_2F}$ on~$L_F$.  If $\arc{ST_1}$ contains
  the topmost point of~$R_S$, or if~$\arc{T_2F}$ contains the
  bottommost point of~$L_F$, these have the correct tangent, and we
  are done.  Otherwise the path cannot possibly reach the positive
  $x$-axis, a contradiction.
\end{proof}

\paragraph{Symmetries.}

For a fixed $d \geq 0$, determining $\dub(d)$ essentially amounts to
finding $(\alpha, \beta) \in \SQUAREAB$ maximizing $\l(\alpha, \beta)$
(``essentially'' since the maximum may not actually be assumed).  We
now observe that the function $\l(\alpha, \beta)$ has two symmetries.

First, we can mirror a path around the $x$-axis. This maps $\alpha$ to
$-\alpha$, $\beta$ to $-\beta$, left disks to right disks, and right
disks to left disks.  As a result, we have, say, $\llsr(d,
\alpha,\beta) = \lrsl(d, -\alpha, -\beta)$, and in general we have
$\l(d, \alpha, \beta) = \l(d, -\alpha, -\beta)$. See
Figure~\ref{fig:x-mirror}.

Second, we can mirror a path around the line $x = d/2$ and reverse the
direction of the path.  If the original path connected $(0,0,\alpha)$
with $(d, 0, \beta)$, the new path connects $(0, 0, 2\pi - \beta)$ to
$(d, 0, 2\pi - \alpha)$.  The transformation maps left disks to left
disks and right disks to right disks, so we have, for instance
$\llsr(d, \alpha, \beta) = \llsr(d, 2\pi - \beta, 2\pi - \alpha)$, and
in general $\l(d, \alpha, \beta) = \l(d, 2\pi - \beta, 2\pi -
\alpha)$. See Figure~\ref{fig:vertical-mirror}.
\begin{figure}
  \centerline{\subfigure[Original path]{\includegraphics{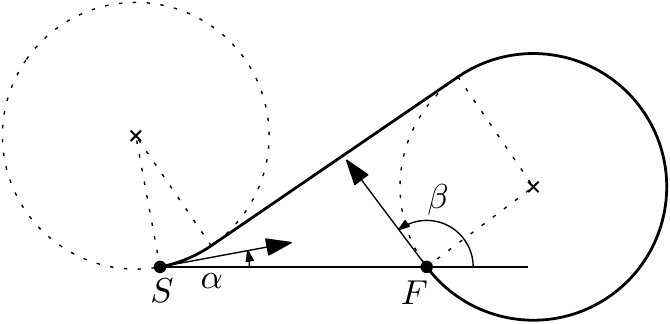}}
    \hspace{1cm}
    \subfigure[Mirrored along $x$-axis]{\includegraphics{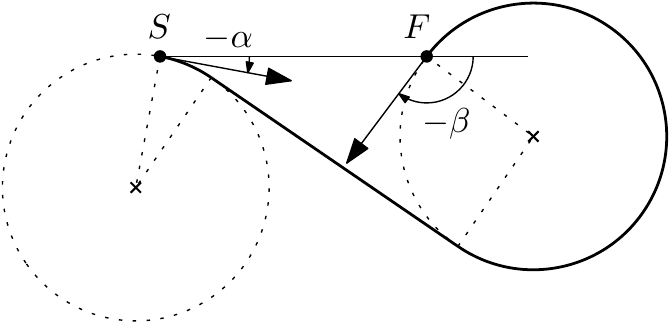}}}
  \centerline{\subfigure[Point symmetry in $(\pi,\pi)$]{%
      \includegraphics{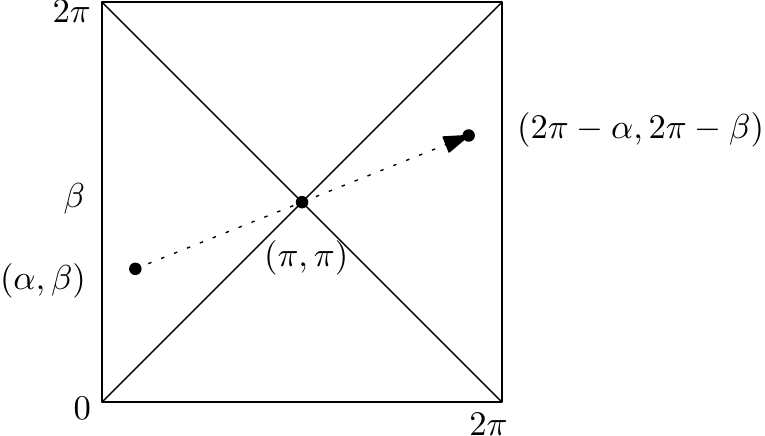}}}
  \caption{Symmetry by mirroring the path along the $x$-axis.}
  \label{fig:x-mirror}
\end{figure}

\begin{figure}
  \centerline{\subfigure[Original path]{\includegraphics{figg/original}}
    \hspace{1cm}
    \subfigure[Mirrored along the line $x=d/2$ while reversing the
      direction]{\includegraphics{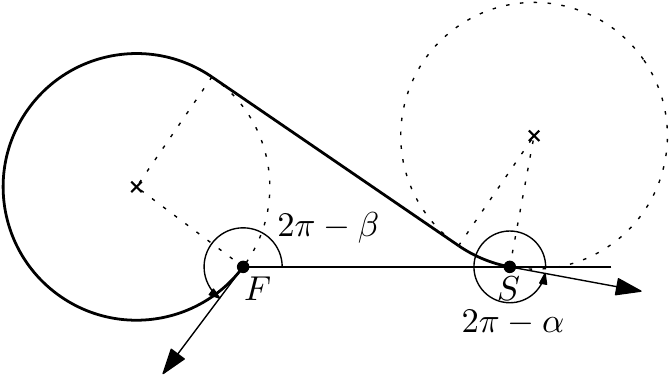}}}
  \centerline{\subfigure[Reflection around the line $\beta=2\pi-\alpha$]{%
      \includegraphics{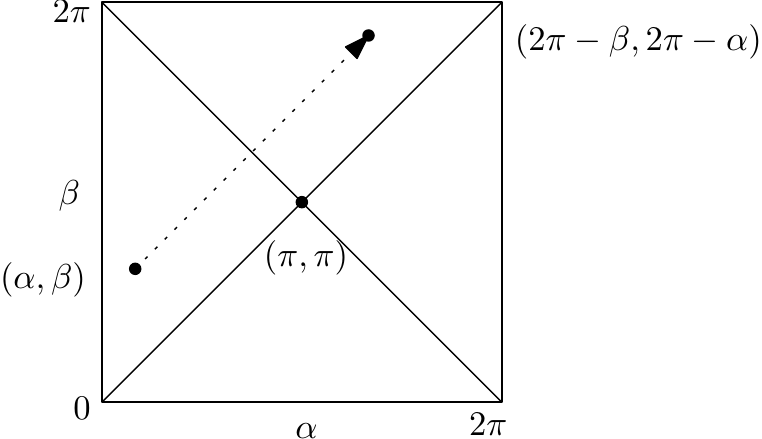}}}
  \caption{Symmetry by mirroring the path along the line $x=d/2$ while
    reversing the direction.}
  \label{fig:vertical-mirror}
\end{figure}

Considered as symmetries on~$\SQUAREAB$, the mapping $(\alpha, \beta)
\mapsto (-\alpha, -\beta)$ is a point symmetry in $(\pi,\pi)$, while
the mapping $(\alpha, \beta) \mapsto (2\pi - \beta, 2\pi - \alpha)$ is
a reflection around the line $\beta = 2\pi - \alpha$.

It follows that $\sup_{(\alpha, \beta) \in [0, 2\pi]^{2}}\ell(d,
\alpha, \beta) = \sup_{(\alpha, \beta) \in \Delta} \ell(d, \alpha,
\beta)$, where $\Delta$ is the triangle with corners $(0,0)$,
$(\pi,\pi)$, and $(0,2\pi)$, or in other words the region
\begin{align*}
  \Delta: \quad 0 \leq \alpha \leq \pi \quad \text{and} 
  \quad \alpha \leq \beta \leq 2\pi - \alpha.  
\end{align*}
In the following we will thus be able to restrict our considerations
to the triangle~$\Delta$ (see Figure~\ref{fig:Delta-alpha-beta}).

\paragraph{A new parameterization.}

We now introduce a new parameterization of the $(\alpha,
\beta)$-plane, which will sometimes be more convenient to work with:
\begin{align*}
\sigma = \frac{\beta+\alpha}{2},\quad \delta = \frac{\beta-\alpha}{2}.  
\end{align*}
In other words, we have 
\begin{align*}
\alpha = \sigma - \delta, \quad \beta = \sigma + \delta.
\end{align*}

Recall our triangle $\Delta$ from above. In the
$(\sigma,\delta)$-representation, the triangle~$\Delta$ is the
triangle
\begin{align*}
  \Delta: \quad 0 \leq \sigma \leq \pi \quad 
  \text{and} \quad 0\leq \delta\leq \sigma,
\end{align*}
or the bottom right half of the square~$\SQUARE = [0, \pi]^{2}$ (see
Figure~\ref{fig:Delta-sigma-delta}).  In this representation, our
first symmetry maps $(\sigma, \delta)$ to $(-\sigma, -\delta)$, while
the second symmetry maps $(\sigma, \delta)$ to $(-\sigma, \delta)$.
We thus have point symmetry in the origin, as well as mirror symmetry
around the $\delta$-axis.  In addition, $(\sigma+\pi, \delta+\pi)$
represents the same angles as $(\sigma, \delta)$ since
$\alpha=\sigma-\delta$ and $\beta=\sigma+\delta$, and so we also have
point symmetry in the point $(\pi/2,\pi/2)$, or in other words
$\ell(d, \sigma,\delta) = \ell(d, \pi-\sigma,\pi-\delta)$.

\begin{figure}
  \centerline{\subfigure[$(\alpha, \beta)$-representation]{%
      \includegraphics{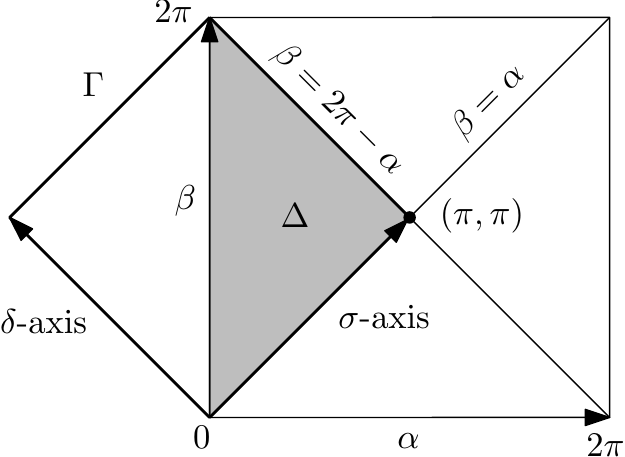}%
      \label{fig:Delta-alpha-beta}}
    \hspace{2cm}
    \subfigure[$(\sigma, \delta)$-representation]{%
      \includegraphics{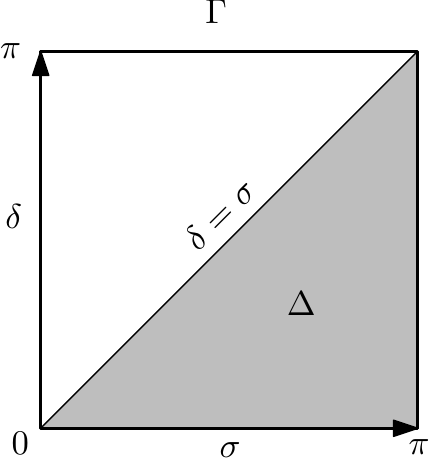}%
      \label{fig:Delta-sigma-delta}}}
  \caption{$\Delta$ in two different representations.}
  \label{fig:Delta}
\end{figure}

\paragraph{Distances between centers (using $\sigma$ and
  $\delta$).}

The following lemma will allow us to express the center distances in terms
of $\sigma$ and~$\delta$:
\begin{lemma}
  \label{lem:equal-angle}
  Let $d \geq 0$, and define the points $p(\Th) = (\cos \Th, \sin
  \Th)$ and $q(\Th) = (d - \cos \Th, \sin\Th)$ on the unit radius
  disks around $(0,0)$ and~$(d, 0)$.  Then
  \begin{equation*}
    |p(\Th-\phi)q(\Th+\phi)|^{2} = 
    d^{2} - 4d \cos\Th\cos\phi + 4
    \cos^{2}\Th.
  \end{equation*}
\end{lemma}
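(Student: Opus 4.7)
The plan is a direct trigonometric computation. I would first write out the coordinates of the two points explicitly, namely
\[
p(\Th-\phi) = (\cos(\Th-\phi),\,\sin(\Th-\phi)),\qquad q(\Th+\phi) = (d-\cos(\Th+\phi),\,\sin(\Th+\phi)),
\]
and then form the difference vector $q(\Th+\phi)-p(\Th-\phi)$. Its $x$-coordinate is $d-\cos(\Th+\phi)-\cos(\Th-\phi)$ and its $y$-coordinate is $\sin(\Th+\phi)-\sin(\Th-\phi)$.

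The key step is to apply the sum-to-product identities
\[
\cos(\Th+\phi)+\cos(\Th-\phi)=2\cos\Th\cos\phi,\qquad \sin(\Th+\phi)-\sin(\Th-\phi)=2\cos\Th\sin\phi,
\]
which turn the difference vector into $(d-2\cos\Th\cos\phi,\,2\cos\Th\sin\phi)$. Squaring each component and adding then gives
\[
(d-2\cos\Th\cos\phi)^{2}+(2\cos\Th\sin\phi)^{2} = d^{2}-4d\cos\Th\cos\phi+4\cos^{2}\Th(\cos^{2}\phi+\sin^{2}\phi),
\]
and the Pythagorean identity $\cos^{2}\phi+\sin^{2}\phi=1$ collapses this to the claimed $d^{2}-4d\cos\Th\cos\phi+4\cos^{2}\Th$.

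There is no real obstacle: the lemma is a one-line trigonometric identity whose only substantive content is the recognition of the two sum-to-product expansions, both of which eliminate $\phi$ from the sums of the angles $\Th\pm\phi$ and leave a clean factorization. I would present it as a two- or three-line derivation with no separate cases.
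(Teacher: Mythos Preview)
Your proof is correct and essentially identical to the paper's: both compute the difference vector, reduce its components to $d-2\cos\Th\cos\phi$ and $2\cos\Th\sin\phi$ via the addition formulas (you phrase them as sum-to-product, the paper expands and recombines), and then square and apply $\cos^{2}\phi+\sin^{2}\phi=1$.
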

\begin{proof}
  We have
  \begin{align*}
    |p(\Th-\phi) q(\Th + \phi)|^{2}  & =
    \big(\cos(\Th - \phi) - d + \cos(\Th + \phi)\big)^{2} +
    \big(\sin(\Th - \phi) - \sin(\Th + \phi)\big)^{2} \\
    & = (\cos\Th\cos\phi + \sin\Th\sin\phi - d 
    + \cos\Th\cos\phi - \sin\Th\sin\phi)^{2} \\
    & \relphantom{=} {} + (\sin\Th\cos\phi - \cos\Th\sin\phi 
    - \sin\Th\cos\phi - \cos\Th\sin\phi)^{2} \\
    & = (2\cos\Th\cos\phi - d)^{2} + 4\cos^{2}\Th\sin^{2}\phi \\
    & = d^{2} - 4d\cos\Th\cos\phi + 4\cos^{2}\Th\cos^{2}\phi 
    + 4\cos^{2}\Th\sin^{2}\phi \\
    & = d^{2} - 4d \cos\Th\cos\phi + 4 \cos^{2}\Th.\qedhere
  \end{align*}
\end{proof}

Lemma~\ref{lem:equal-angle} leads to the following expressions for the
squared distances between our disk centers:
\begin{align}
  \dl^{2} & = |\ell_S\ell_F|^2 = d^{2} - 4d\sin\delta\cos\sigma +
4\sin^{2}\delta \label{eq:dl-sigma-delta}\\
  \dr^{2} & = |r_Sr_F|^2 = d^{2} + 4d\sin\delta\cos\sigma + 4\sin^{2}\delta
	\label{eq:dr-sigma-delta}\\
  \dlr^{2} & = |\ell_Sr_F|^2  = d^{2} + 4d\cos\delta\sin\sigma +
4\cos^{2}\delta \label{eq:dlr-sigma-delta}\\
  \drl^{2} & = |r_S\ell_F|^2  = d^{2} - 4d\cos\delta\sin\sigma +
4\cos^{2}\delta \label{eq:drl-sigma-delta}
\end{align}  
To see this, observe that $\ell_{S} = p(\alpha+\pi/2)$, $r_S =
p(\alpha-\pi/2)$, $\ell_{F} = q(\pi/2-\beta)$, and $r_F =
q(3\pi/2-\beta)$.  For~$\dl$, set $\Th = \pi/2-\delta$,
$\phi=-\sigma$; for~$\dr$, set $\Th = \pi/2-\delta$,
$\phi=\pi-\sigma$; for~$\dlr$, set $\Th = \pi-\delta$,
$\phi=\pi/2-\sigma$; for~$\drl$, set $\Th = -\delta$,
$\phi=\pi/2-\sigma$.

\paragraph{The case $d = 0$.}

We first argue that $\dub(0) = 7\pi/3$.  The case $d = 0$ is much easier
since there is only one degree of freedom: Without loss of generality we
can assume $\alpha = 0$.  It is easy to verify that for any~$\beta$ there
is a \pccc-path of length at most $7\pi/3$. For $\beta = \pi$, no Dubins
path has length shorter than $7\pi/3$, and so~$\dub(0)= 7\pi/3$ (see
Figure~\ref{fig:worst-length}).  In the rest of this paper we can therefore
mostly assume~$d > 0$, and avoid some degeneracies.
\begin{figure}
  \centerline{\includegraphics{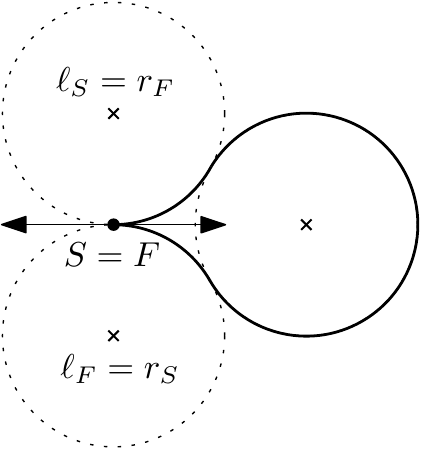}}
  \caption{For $d=0$, $\alpha=0$ and $\beta=\pi$, we have
    $\llrl(d,\alpha,\beta)=\lrlr(d, \alpha, \beta)={7\pi}/{3}$.}
  \label{fig:worst-length}
\end{figure}

\section{The overall proof}
\label{sec:overview}

We subdivide the pairs of orientations $(\alpha, \beta) \in \SQUARE$
into three cases.  The case distinction is based on the existence of
the $\plsr$-path and the~$\prsl$-path:
\begin{denseitems}
\item When neither the $\plsr$-path nor the $\prsl$-path exists, then
  we are in case~A;
\item when the $\prsl$-path exists, but the $\plsr$-path does not
  exist, then we are in case~B; and
\item when the $\plsr$-path exists, then we are in case~C.
\end{denseitems}
The reader may wonder why the case distinction is not symmetric. We
already broke the symmetry when we restricted our investigation to the
triangle~$\Delta$. 

The $\plsr$-path exists if and only if the disks~$L_S$ and~$R_F$ are
disjoint (since the disks are open, they may touch, but cannot
overlap), or, equivalently, if $|\l_S r_F| = \dlr(\alpha, \beta) \geq
2$.  The $\prsl$-path exists if and only if~$R_S$ and~$L_F$ are
disjoint, that is if $|r_S \ell_F| = \drl(\alpha, \beta) \geq 2$. 
We can thus define three regions of the square~$\SQUARE$ for the three
cases: 
\begin{align*}
  A &= \{ (\alpha, \beta) \in \SQUARE \mid \dlr(\alpha, \beta) < 2
  \;\text{and}\; \drl(\alpha, \beta) < 2\},\\
  B &= \{ (\alpha, \beta) \in \SQUARE \mid \dlr(\alpha, \beta) < 2
  \;\text{and}\; \drl(\alpha, \beta) \geq 2\},\\
  C &= \{ (\alpha, \beta) \in \SQUARE \mid \dlr(\alpha, \beta) \geq 2\}.
\end{align*}
Figure~\ref{fig:regions} shows the subdivision of~$\SQUARE$ into the
three regions for three different distances~$d$.  
\begin{figure}
  \centering
  \subfigure[$d=0.4$]{
    \includegraphics{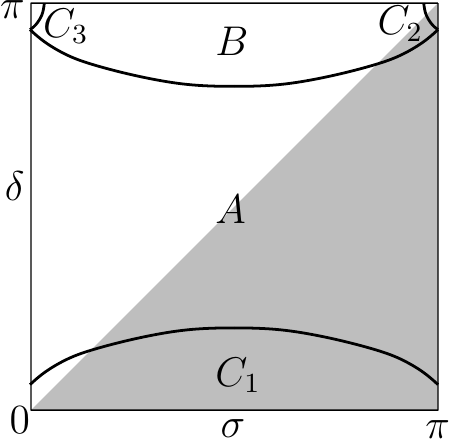}
  }
  \subfigure[$d=1.4$]{
    \includegraphics{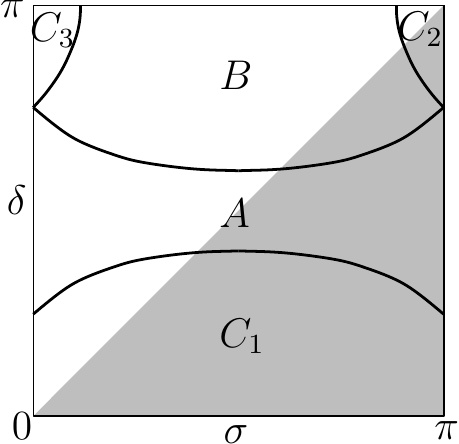}
  }
  \subfigure[$d=2.4$]{
    \includegraphics{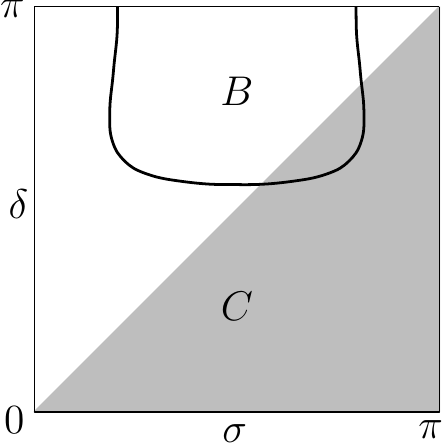}
  }
  \caption{Regions of cases~A, B and C in $\SQUARE$}
  \label{fig:regions}
\end{figure}
It will be
convenient to also define the parts of $A$, $B$ and $C$ that lie
inside the triangle~$\Delta$:
\begin{align*}
  \AD = A \cap \Delta, \quad 
  \BD = B \cap \Delta, \quad \text{and} \quad
  \CD = C \cap \Delta.
\end{align*}
We can now restrict the Dubins cost function~$\dub(d)$ to the three
regions
\begin{align*}
\dub_A(d) & = \sup\nolimits_{(\alpha, \beta) \in \AD} \l(d,\alpha,
\beta) - d\\
\dub_B(d) & = \sup\nolimits_{(\alpha, \beta) \in \BD} \l(d,\alpha,
\beta) -d\\
\dub_C(d) & = \sup\nolimits_{(\alpha, \beta) \in \CD} \l(d,\alpha,
\beta) - d\\
\intertext{and we have}
\dub(d) & = \max \{\, \dub_A(d),\; \dub_B(d),\; \dub_C(d) \,\}.
\end{align*}

\paragraph{Case~C.}

This is the easiest case, and we discuss it first in
Section~\ref{sec:case-c}.  Since both \prsr- and \plsr-paths exist, we
can show that for $(\alpha, \beta) \in \CD$ the \prsr-path or the
\plsr-path has length at most~$d+2\pi$
(Lemma~\ref{lem:dub_c_upper_bound}).  This implies that $\dub_C(d)
\leq 2\pi$.  The bound is tight, as the shortest path from
configuration~$(0,0,\pi)$ to~$(d, 0, \pi)$ has length~$d + 2\pi$
(Lemma~\ref{lem:lower-bound}), and so $\dub_C(d) = 2\pi$.  This
implies a lower bound on the Dubins cost function: $\dub(d) \geq
\dub_C(d) = 2\pi$.

\paragraph{Case~A.}

We first observe that case~A occurs only for $d < 2$, as we have $8 >
\dlr^{2}+ \drl^{2} = 2d^{2} + 8\cos^{2}\delta \geq 2d^{2}$.  In this
case, neither~$\plsr$- nor~$\prsl$-paths exist.  Since 
$|\l_S\l_F| \leq |\l_SS| + |SF| + |F\l_F| = d + 2 < 4$ and
$|r_Sr_F| \leq |r_SS| + |SF| + |Fr_F| = d + 2 < 4$, both 
$\plrl$-path and $\prlr$-path exist everywhere in~$\SQUARE$ for~$d <
2$. 

It follows that the path types in case~A are $\prsr$, $\plsl$,
$\plrl$, and $\prlr$. It turns out that the shortest path is always of
type~$\pccc$ (Lemma~\ref{lem:region2-lrl-rlr-shorter}), and so we can
concentrate entirely on comparing $\llrl(\alpha,\beta)$
and~$\lrlr(\alpha, \beta)$.  To this end, we derive explicit
expressions for the length of~$\pccc$-paths in
Section~\ref{sec:ccc-paths}.  We examine the derivatives of these
functions and show that they are monotone in $\sigma$- and
$\delta$-directions (Lemma~\ref{lem:l-r-fixed}).  Using monotonicity
and Lagrange-multipliers, we show that $\dub_A(d)
=\sup_{(\sigma,\delta)\in A}\ell(d, \sigma,\delta)-d$ is realized at a
point $(\sigma_A, \delta_A)$ on the boundary of the region~$A$
(Lemma~\ref{lem:max-rectii}).

For $0 < d < \sqrt{2}$, this point is the unique point~$(\pi,
\delta_A) \in A$ where $\llrl(\pi,\delta_A) = \lrlr(\pi,\delta_A)$
(Lemma~\ref{lem:a-less-sqrt2}).  For $\sqrt{2} \leq d < 2$, however,
the supremum $\sup_{(\sigma,\delta)\in A}\ell(d, \sigma,\delta)$
occurs on the common boundary of regions~$A$ and~$B$. Since this
boundary belongs to region~$B$, we can conclude that for $\sqrt{2} < d
< 2$ we have $\dub_A(d) \leq \max\{2\pi, \dub_B(d)\}$
(Lemma~\ref{lem:a-larger-sqrt2}).

\paragraph{Case~B.}

In case~B, the \prsl-path exists, while the \plsr-path does not exist.
We have $|\l_S\l_F| \leq |\l_S r_F| + |r_F\l_F| = \dlr+2 < 4$, and so
the $\plrl$-path exists.  We show that the shortest Dubins path is
either the \plrl-path or the \prsl-path
(Lemma~\ref{lem:region3-lrl-rsl-shorter}).

The following lemma shows that case~B occurs only in the upper half of
the square~$\SQUARE$: 
\begin{lemma}
  \label{lem:case-b-basics}
  A point $(\sigma, \delta) \in B$ has $\delta > \pi/2$, and
  for $(\alpha, \beta) \in \BD$ we have
  \begin{align}
    %\label{eq:case-B}
    0 \leq \alpha \leq \pi/2 \quad \text{and} 
    \quad \pi+\alpha < \beta < 2\pi-\alpha.
  \end{align}
\end{lemma}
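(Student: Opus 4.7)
The plan is to compare $\dlr^{2}$ and $\drl^{2}$ directly in the $(\sigma,\delta)$-parameterisation and read off the sign of the difference. Subtracting \eqref{eq:drl-sigma-delta} from \eqref{eq:dlr-sigma-delta} we obtain
\[
\dlr^{2}-\drl^{2} = 8d\cos\delta\sin\sigma.
\]
In case~B we have $\dlr < 2 \leq \drl$, so this difference is strictly negative; since we may assume $d > 0$, this yields the single key inequality $\cos\delta\sin\sigma < 0$.

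From this we extract the first conclusion. Working inside the $(\sigma,\delta)$-square $\SQUARE = [0,\pi]^{2}$ we have $\sin\sigma \geq 0$ and $\delta \in [0,\pi]$. Hence $\cos\delta\sin\sigma < 0$ forces both $\sin\sigma > 0$ (equivalently $0 < \sigma < \pi$) and $\cos\delta < 0$ (equivalently $\delta > \pi/2$), which is the first assertion of the lemma. The strict bound $\sigma < \pi$ produced along the way will be used below.

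For the second assertion we translate back to $(\alpha,\beta)$ via $\alpha = \sigma - \delta$ and $\beta = \sigma + \delta$. The identities $\beta - \alpha = 2\delta$ and $\beta + \alpha = 2\sigma$ convert $\delta > \pi/2$ into $\beta > \pi + \alpha$, and $\sigma < \pi$ into $\beta < 2\pi - \alpha$. Combining $\delta > \pi/2$ with $\sigma \leq \pi$ gives $\alpha = \sigma - \delta < \pi/2$, while $\alpha \geq 0$ is inherited from the defining inequalities of~$\Delta$.

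No serious obstacle is anticipated: the whole argument is a one-line sign computation followed by a linear change of variables. The only mild subtlety is that the \emph{strict} inequality $\dlr < 2$ (rather than $\dlr \leq 2$) is what allows us to rule out the degenerate edges $\sigma \in \{0,\pi\}$ of the square and hence to obtain the \emph{strict} bounds on $\beta$; the assumption $d > 0$ is also used at this point, which is legitimate because the case $d = 0$ has already been disposed of in Section~\ref{sec:preliminaries}.
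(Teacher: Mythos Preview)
Your argument is correct and is essentially the same as the paper's: both compute $\dlr^{2}-\drl^{2}=8d\cos\delta\sin\sigma$, use $\dlr<2\leq\drl$ to force $\cos\delta<0$ (hence $\delta>\pi/2$), and then intersect with~$\Delta$ to obtain the $(\alpha,\beta)$-bounds. You are in fact slightly more careful than the paper in explicitly extracting $\sin\sigma>0$ (hence $\sigma<\pi$) to justify the \emph{strict} inequality $\beta<2\pi-\alpha$.
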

\begin{proof}
  $(\sigma, \delta) \in B$ means $\dlr^{2} < 4$ and $\drl^{2} \geq 4$,
  so $0 > \dlr^{2} - \drl^{2} = 8d\cos\delta\sin\sigma$.  Since
  $\sin\sigma \geq 0$, we must have $\cos\delta < 0$, and thus $\delta
  > \pi/2$.  So $(\sigma,\delta)$ lies in the top half of~$\SQUARE$.
  This top half intersects the triangle~$\Delta$ in the triangle with
  corners (in $(\alpha,\beta)$-coordinates) $(0, \pi)$, $(\pi/2,
  3\pi/2)$, and $(0, 2\pi)$, implying the claim.
\end{proof}

The function~$\lrsl(\alpha, \beta)$ was studied by Goaoc et
al.~\cite{gkl-bcsp-2010}, who gave an explicit expression for its
derivative.  We exploit this to show monotonicity of~$\lrsl(\alpha,
\beta)$ (Lemma~\ref{lem:rsl-changes-alpha-beta}).  

For $0 < d < \sqrt{2}$, monotonicity of $\lrsl(\alpha, \beta)$ easily
implies the following proposition:
\begin{proposition}
  \label{prop:dub-cost-less-sqrt2}
  For $0 < d < \sqrt{2}$, $\dub(d) = \dub_A(d)$. The function
  decreases monotonically from $\dub(0) = 7\pi/3$ to $\dub(\sqrt{2}) =
  5\pi/2 - \sqrt{2}$.
\end{proposition}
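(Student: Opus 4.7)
The plan breaks into two parts: first showing $\dub(d) = \dub_A(d)$ for $0 < d < \sqrt{2}$, and then analyzing $\dub_A(d)$ itself as a function of $d$.

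For the first part, since $\dub(d) = \max\{\dub_A(d), \dub_B(d), \dub_C(d)\}$ and $\dub_C(d) = 2\pi$ has already been established in the Case~C discussion, it suffices to show $\dub_A(d) \geq 2\pi$ and $\dub_A(d) \geq \dub_B(d)$. By Lemma~\ref{lem:a-less-sqrt2}, for $0 < d < \sqrt{2}$ the supremum $\dub_A(d) + d$ is attained at the unique point $(\pi, \delta_A(d)) \in \AD$ where $\llrl = \lrlr$. Specialising to $\sigma = \pi$, equations \eqref{eq:dl-sigma-delta} and \eqref{eq:dr-sigma-delta} collapse to $\dl = d + 2\sin\delta$ and $\dr = |d - 2\sin\delta|$, and substituting into the explicit \pccc-length formulas of Section~\ref{sec:ccc-paths} will let me verify $\dub_A(d) > 2\pi$ on the entire interval.

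For the domination $\dub_A(d) \geq \dub_B(d)$, I use Lemma~\ref{lem:region3-lrl-rsl-shorter}: in region~B the shortest path is either LRL or RSL. The LRL length is continuous across the shared boundary $\drl = 2$ of $\AD$ and $\BD$, and the monotonicity established in Lemma~\ref{lem:l-r-fixed} drives the supremum of $\llrl$ over $\BD$ onto that boundary, where its value is bounded by $\dub_A(d) + d$. For RSL, the monotonicity of $\lrsl$ given in Lemma~\ref{lem:rsl-changes-alpha-beta}, combined with the observation that on the boundary $\drl = 2$ the $S$-segment of an RSL path collapses (the inner tangent disks are tangent), pushes $\lrsl$ over $\BD$ onto that same boundary, where it degenerates to a \pccc-length that is in turn dominated by $\dub_A(d) + d$.

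For the second part, once we know $\dub(d) + d = \llrl(d, \pi, \delta_A(d))$ with $\delta_A$ determined implicitly by $\llrl = \lrlr$ at $\sigma = \pi$, I derive a closed-form expression by substituting the simplified $\dl, \dr$ above into the \pccc-length formula. The symmetry $\llrl(d,\pi,\delta) \leftrightarrow \lrlr(d,\pi,\delta)$ under $\sin\delta \mapsto -\sin\delta$ makes the implicit equation for $\delta_A$ tractable, and substituting back yields an explicit formula $h(d)$. Differentiating $h$ with respect to $d$ will show $h'(d) < 0$, giving monotonic decrease. Finally I evaluate the endpoints: $h(0) = 7\pi/3$ matches the $d=0$ analysis from Section~\ref{sec:preliminaries}, and $h(\sqrt{2}) = 5\pi/2 - \sqrt{2}$ corresponds to the critical situation where $(\pi, \delta_A)$ reaches the boundary $\drl = 2$, explaining precisely why $\sqrt{2}$ is a breakpoint.

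The principal obstacle is the domination $\dub_A(d) \geq \dub_B(d)$: the LRL path exists throughout~$\BD$, and a priori could grow as we move inward from the boundary, so I must carefully combine its monotonicity in $\sigma, \delta$ with the degeneration of RSL at $\drl = 2$ to exclude a larger interior supremum. The explicit calculation of $\delta_A(d)$ and the monotonicity verification in part~(b) are routine given the closed form, but handling the implicit definition of $\delta_A$ and checking $h'(d) < 0$ cleanly requires some care.
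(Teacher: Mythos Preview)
Your argument for $\dub_A(d) \geq \dub_B(d)$ has a directional error that breaks it. You claim that monotonicity ``drives the supremum of $\llrl$ over $\BD$ onto'' the boundary $\drl = 2$, and similarly for $\lrsl$. But the $A$--$B$ boundary $\drl = 2$ is the \emph{lower} boundary of $\BD$ in~$\beta$ (it is $\beta = \brl(\alpha)$; see Lemma~\ref{lem:case-b-alpha-monotone}), and both $\beta \mapsto \llrl(\alpha,\beta)$ and $\beta \mapsto \lrsl(\alpha,\beta)$ are \emph{increasing} on~$\BD$ (Lemmas~\ref{lem:lrl-changes-alpha-beta} and~\ref{lem:rsl-changes-alpha-beta}). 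So the supremum of each is pushed to the \emph{upper} boundary $\dlr = 2$, which is the boundary with~$C$, not with~$A$. On that boundary the \plrl-length can indeed exceed $\dub_A(d)+d$; what rescues you is only that a shorter \plsr-path appears there---but that is a Case~C argument, not a reduction to Case~A. Also note that Lemma~\ref{lem:l-r-fixed} is stated only on the rectangle~$\RECTII$, whereas $\BD$ largely lies outside it.

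The paper's route avoids all of this. Since the \prsl-path exists everywhere in~$B$ by definition, one has $\ell(d,\alpha,\beta) \leq \lrsl(\alpha,\beta)$ for every $(\alpha,\beta)\in\BD$, and the last statement of Lemma~\ref{lem:rsl-changes-alpha-beta} gives the uniform bound $\lrslp \leq 2\pi + 2\as$. Hence $\dub_B(d) \leq 2\pi + 2\as - d$, while Lemma~\ref{lem:a-less-sqrt2} gives $\dub_A(d) > 2\pi + 2\as - d$; the comparison is one line. For the monotonicity of $\dub_A(d)$ itself, the defining equation $\delta - \pi/2 = \arcsin\!\big((d+2\sin\delta)/4\big) - \arcsin\!\big((2\sin\delta - d)/4\big)$ does not solve to a closed form for~$\delta_A(d)$, so your plan to ``substitute back'' to an explicit $h(d)$ will not go through as stated; the paper instead handles this implicitly via a Lagrange-multiplier argument (Lemma~\ref{lem:a-monotonicity}).
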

\begin{proof}
  Let $0 < d < \sqrt{2}$.  By Lemma~\ref{lem:a-less-sqrt2},
  $\dub_{A}(d) > 2\pi + 2\as - d$, and by the last statement of
  Lemma~\ref{lem:rsl-changes-alpha-beta} we have $\dub_B(d) \leq 2\pi
  + 2\as - d$ (here, $\as = \arcsin(d/2)$).
  Lemmas~\ref{lem:a-less-sqrt2} and~\ref{lem:a-monotonicity} imply
  that $\dub_A(d)$ decreases monotonically from $\dub_A(0) = 7\pi/3$
  to $\dub_A(\sqrt{2}) = 5\pi/2 - \sqrt{2}$, and so $\dub_A(d) > 2\pi
  = \dub_C(d)$.
\end{proof}
For $\sqrt{2}\leq d < 2$, we are able to show that under the
assumption that $\dub_B(d) > 2\pi$, the value~$\dub_B(d)$ is assumed
at the (unique) point $(\alpha_B, \beta_B)$ on the common boundary
of~$\BD$ and~$\CD$ where $\lrsl(\alpha_B, \beta_B)=\llrl(\alpha_B,
\beta_B)$ (Lemma~\ref{lem:dub_b}).  This common boundary, however, is
part of the region~$\CD$.  A shorter \plsr-path exists in~$\CD$, and
in fact there is no $(\alpha, \beta) \in \BD$ with $\l(d,\alpha,\beta)
= \dub_B(d) + d$.  We show that $\dub_B(d)$ is a monotonically
decreasing function (Lemma~\ref{lem:dub-b-sqrt2-2}), starting with
$\dub_B(\sqrt{2}) = \dub_A(\sqrt{2}) = 5\pi/2 - \sqrt{2}$.  Continuity
and monotonicity imply that there is a distance~$\ds$ where
$\dub_B(\ds) = 2\pi$.  We numerically computed $\ds \approx 1.5874$,
and have the following proposition.
\begin{proposition}
  \label{prop:dub-cost-sqrt2-2}
  For $\sqrt{2} \leq d < \ds$, $\dub(d) = \dub_B(d)$. The
  function~$\dub(d)$ decreases monotonically from $\dub_B(\sqrt{2}) =
  5\pi/2 - \sqrt{2}$ to $\dub(\ds) = 2\pi$.  For $\ds \leq d < 2$,
  $\dub(d) = 2\pi$.
\end{proposition}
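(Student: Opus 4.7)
The plan is to combine the already-established pieces: $\dub(d) = \max\{\dub_A(d), \dub_B(d), \dub_C(d)\}$, and from Section~\ref{sec:case-c} we know $\dub_C(d) = 2\pi$ for every~$d$, while Lemma~\ref{lem:a-larger-sqrt2} gives $\dub_A(d) \leq \max\{2\pi, \dub_B(d)\}$ on the whole range $\sqrt{2} \leq d < 2$. Together these imply
\[
\dub(d) = \max\{2\pi,\; \dub_B(d)\} \qquad \text{for } \sqrt{2} \leq d < 2,
\]
so the whole proposition reduces to understanding the behaviour of $\dub_B(d)$.

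The first step is to evaluate $\dub_B$ at the left endpoint. Lemma~\ref{lem:dub_b} locates $\dub_B(d)$ at the unique point $(\alpha_B, \beta_B)$ on the common boundary of~$\BD$ and~$\CD$ where $\lrsl = \llrl$; plugging $d = \sqrt{2}$ into this characterization and comparing with the expression obtained in Case~A for $\dub_A(\sqrt{2}) = 5\pi/2 - \sqrt{2}$ (Lemma~\ref{lem:a-less-sqrt2}, in the limit) yields
\[
\dub_B(\sqrt{2}) = \dub_A(\sqrt{2}) = \tfrac{5\pi}{2} - \sqrt{2} > 2\pi.
\]
The second step is monotone decrease: Lemma~\ref{lem:dub-b-sqrt2-2} tells us that $d \mapsto \dub_B(d)$ is continuous and strictly decreasing on $[\sqrt{2}, 2)$. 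Since $\dub_B(\sqrt{2}) > 2\pi$ and $\dub_B$ is the length (minus $d$) of a Dubins path whose components (arc lengths on $R_S$, $L_F$, and the connecting pieces) vary continuously with $d$ inside $\BD$, the intermediate value theorem provides a unique $\ds \in (\sqrt{2}, 2)$ with $\dub_B(\ds) = 2\pi$. The numerical value $\ds \approx 1.5874$ is then obtained by solving the equation $\lrsl(\alpha_B, \beta_B) - d = 2\pi$ at the boundary point, a routine root-finding task that I would simply report.

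Finally, I would assemble the three resulting intervals. For $\sqrt{2} \leq d < \ds$ we have $\dub_B(d) > 2\pi$, so $\dub(d) = \dub_B(d)$, and by the preceding step this value decreases monotonically from $5\pi/2 - \sqrt{2}$ down to $2\pi$. For $\ds \leq d < 2$ monotonicity gives $\dub_B(d) \leq 2\pi$, so $\dub(d) = 2\pi = \dub_C(d)$, which is exhibited by the configuration pair from Lemma~\ref{lem:lower-bound}. The main obstacle, as in the rest of the paper, is the verification of the monotonicity and boundary-location claims behind Lemmas~\ref{lem:dub_b} and~\ref{lem:dub-b-sqrt2-2}; once those are in hand, the proposition is essentially a bookkeeping argument gluing the three cases together along the critical distance $\ds$. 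A minor subtlety worth flagging is that $\ds$ is defined implicitly, and one must check that the same $\ds$ also gives $\dub_A(\ds) \leq 2\pi$ so that the $\max$ in $\dub(d) = \max\{2\pi, \dub_B(d)\}$ genuinely collapses to $2\pi$ on $[\ds, 2)$; this follows from Lemma~\ref{lem:a-larger-sqrt2} applied at $d = \ds$.
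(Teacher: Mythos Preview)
Your proposal is correct and follows essentially the same route as the paper: reduce to $\dub(d)=\max\{2\pi,\dub_B(d)\}$ via $\dub_C(d)=2\pi$ and Lemma~\ref{lem:a-larger-sqrt2}, then invoke Lemma~\ref{lem:dub-b-sqrt2-2} for the monotone decrease of $\dub_B$ from $5\pi/2-\sqrt{2}$ through~$2\pi$. Two minor remarks: the value $\dub_B(\sqrt{2})=5\pi/2-\sqrt{2}$ and the existence of~$\ds$ are already packaged in Lemma~\ref{lem:dub-b-sqrt2-2}, so you need not re-derive them via Lemma~\ref{lem:dub_b}; and the ``subtlety'' you flag about $\dub_A(\ds)$ is not a separate check, since once $\dub(d)=\max\{2\pi,\dub_B(d)\}$ is established the behaviour of $\dub_A$ is already absorbed.
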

\begin{proof}
  By Lemma~\ref{lem:a-larger-sqrt2}, we have $\dub_{A}(d) \leq
  \max\{2\pi, \dub_B(d)\} = \max\{\dub_C(d), \dub_{B}(d)\}$, and so
  $\dub(d) = \max\{2\pi, \dub_B(d)\}$.  By
  Lemma~\ref{lem:dub-b-sqrt2-2} we have $\dub_{B}(d) > 2\pi$ for
  $\sqrt{2} \leq d < \ds$.
\end{proof}

\paragraph{The case $d \geq 2$.}

Propositions~\ref{prop:dub-cost-less-sqrt2}
and~\ref{prop:dub-cost-sqrt2-2} describe the function~$\dub(d)$ for $0
\leq d < 2$.  It remains to prove that $\dub(d) = 2\pi$ for $d \geq
2$, which amounts to analyzing $\dub_B(d)$. As shown in
Figure~\ref{fig:regions}, the region~$B$ has a rather different shape
for $d > 2$, and our previous arguments do not carry over without
additional tedious calculations.

Fortunately, monotonicity comes to the rescue.  By
Proposition~\ref{prop:dub-cost-sqrt2-2} we have $\dub(d) = 2\pi$ for
$\ds \leq d < 2$.  We had seen in Lemma~\ref{lem:monotonicity-csc}
that monotonicity holds when the original path is a \pcsc-path.  In
Lemma~\ref{lem:monotonicity-all-but-rlr} we extend this lemma by
including the \plrl-path, at least when the configuration is in
case~$\BD$.  This allows us to prove the claim for $d \geq 2$.
\begin{proposition}
  \label{lem:dub-cost-larger-2}
  For $d \geq 2$, we have $\dub(d) = 2\pi$.
\end{proposition}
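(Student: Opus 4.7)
My plan is to prove $\dub(d) \leq 2\pi$ for $d \geq 2$ by reducing via monotonicity to the interval $[\ds, 2)$, where Proposition~\ref{prop:dub-cost-sqrt2-2} already gives $\dub(d) = 2\pi$. Combined with the lower bound $\dub(d) \geq \dub_C(d) = 2\pi$ established in case~C, this yields the desired equality. Observe first that for $d \geq 2$ the region~$A$ is empty: $\dlr^2 + \drl^2 = 2d^2 + 8\cos^2\delta \geq 2d^2 \geq 8$, so at least one of $\dlr, \drl$ is at least~$2$. Since $\dub_C(d) = 2\pi$, it therefore suffices to prove $\dub_B(d) \leq 2\pi$, and I fix an arbitrary $(\alpha, \beta) \in \BD$ at distance $d \geq 2$, aiming to bound $\ell(d, \alpha, \beta)$.

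The crux---and, I expect, the main obstacle---is to exhibit a smaller distance $d_1 \in [\ds, 2)$ at which the same orientation pair $(\alpha, \beta)$ still lies in~$\BD$, so that Lemma~\ref{lem:monotonicity-all-but-rlr} applies. The key is to verify that $(\alpha, \beta)$ sits strictly inside $\BD$ at the critical value $d' = 2$; by Lemma~\ref{lem:case-b-basics} we have $\cos\delta < 0$ in case~B, so from~\eqref{eq:drl-sigma-delta},
\[
\drl^2(2) = 4 - 8\cos\delta\sin\sigma + 4\cos^2\delta > 4,
\]
since $-8\cos\delta\sin\sigma \geq 0$ on~$\Delta$ (where $\sin\sigma \geq 0$) and $4\cos^2\delta > 0$; and from~\eqref{eq:dlr-sigma-delta}, $\dlr^2(d')$ is a parabola in~$d'$ with vertex at $d' = 2|\cos\delta|\sin\sigma \leq 2$, so $\dlr$ is non-decreasing on $[2, \infty)$ and $\dlr(2) \leq \dlr(d) < 2$. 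Both defining inequalities of~$\BD$ being strict at $d' = 2$, continuity of $\dlr(\cdot)$ and $\drl(\cdot)$ yields some $\eta > 0$, which I may shrink so that also $2 - \eta > \ds$, with the property that $(\alpha, \beta) \in \BD$ for every $d' \in [2 - \eta, 2]$. Set $d_1 := 2 - \eta$.

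To conclude, since $(\alpha, \beta) \in \BD$ at $d_1$, Lemma~\ref{lem:region3-lrl-rsl-shorter} tells us that the shortest Dubins path at~$d_1$ is either a $\prsl$- or a $\plrl$-path, and both cases are covered by Lemma~\ref{lem:monotonicity-all-but-rlr}. By Proposition~\ref{prop:dub-cost-sqrt2-2} this path has length at most $\dub(d_1) + d_1 = 2\pi + d_1$, and the monotonicity lemma produces a path at distance~$d$ of length at most $(2\pi + d_1) + (d - d_1) = 2\pi + d$. Hence $\ell(d, \alpha, \beta) \leq 2\pi + d$ for every $(\alpha, \beta) \in \BD$, giving $\dub_B(d) \leq 2\pi$ and completing the proof.
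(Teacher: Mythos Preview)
Your proof is correct and follows essentially the same monotonicity-reduction strategy as the paper: pick $d_1\in(\ds,2)$ where the same $(\alpha,\beta)$ is not in case~A, invoke $\dub(d_1)=2\pi$, and stretch the path using Lemmas~\ref{lem:monotonicity-csc} and~\ref{lem:monotonicity-all-but-rlr}. Two small remarks: (i) your sentence ``both cases are covered by Lemma~\ref{lem:monotonicity-all-but-rlr}'' is a slip---that lemma handles only the \plrl{} case, while the \prsl{} case needs Lemma~\ref{lem:monotonicity-csc}; (ii) the paper's choice of $d_1$ is slightly simpler (it only requires $d_1>2\sin(\pi-\delta)$ to exclude case~A, allowing $(\alpha,\beta)$ to land in either $\BD$ or~$\CD$ at~$d_1$), whereas you do a bit more work with the parabola argument to force $(\alpha,\beta)\in\BD$ specifically---but your version is equally valid.
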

\begin{proof}
  Let $d \geq 2$ and $(\alpha, \beta) \in \Delta$.  We need to show
  that $\ell(d, \alpha, \beta) \leq 2\pi + d$.  If $(\alpha, \beta)
  \in \CD$, then this follows from
  Lemma~\ref{lem:dub_c_upper_bound}. Otherwise we must have $(\alpha,
  \beta) \in \BD$, and we have $\delta > \pi/2$ by
  Lemma~\ref{lem:case-b-basics}.  We choose $d_1 < 2$ such that $d_1 >
  2\sin(\pi - \delta)$ and $d_1 > \ds$ and consider the configuration
  $(\alpha, \beta)$ for distance~$d_1$. Since case~A occurs only
  within the $\delta$-range $\arcsin(d_1/2) \leq \delta \leq \pi -
  \arcsin(d_1/2)$ (Lemma~\ref{lem:region-boundaries}), we must be in
  either case~B or case~C, so there is 
  a path of type \prsr, \plsr, \prsl, or~\plrl{} of length at most
  $d_1 + \dub(d_1) = d_1 + 2\pi$ from $(0, 0, \alpha)$ to~$(d_1, 0,
  \beta)$.  By Lemmas~\ref{lem:monotonicity-csc}
  and~\ref{lem:monotonicity-all-but-rlr} there is then a path from
  $(0, 0, \alpha)$ to~$(d, 0, \beta)$ of length at most $d_1 + 2\pi +
  (d - d_1) = d+ 2\pi$.
\end{proof}

\paragraph{Main result.}

We summarize some of our results in the following theorem:
\begin{theorem}
  \label{thm:main}
  The function $\dub(d)$ has two breakpoints at $\sqrt{2}$ and $\ds
  \approx 1.5874$.  For $d < \sqrt{2}$, $\dub(d) = \dub_A(d) \leq
  \dub_A(0) ={7\pi}/{3}$. For $\sqrt{2} \leq d < \ds$, $\dub(d) =
  \dub_B(d) \leq \dub_B(\sqrt{2})={5\pi}/{2}-\sqrt{2}$. For $d
  \geq \ds$, we have $\dub(d) = 2\pi$.
\end{theorem}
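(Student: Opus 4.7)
The plan is to assemble the theorem directly from the three propositions established in Section~\ref{sec:overview}, together with the special case $d=0$ treated in the preliminaries. There is essentially nothing genuinely new to prove at this stage: the hard work has been packaged into Propositions~\ref{prop:dub-cost-less-sqrt2}, \ref{prop:dub-cost-sqrt2-2}, and~\ref{lem:dub-cost-larger-2}, and the theorem statement is a consolidated restatement of their conclusions.

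First I would dispense with the boundary point $d=0$ by invoking the explicit calculation in the preliminaries showing $\dub(0) = 7\pi/3$, achieved by the configuration $(0,0,0)$ to $(0,0,\pi)$ via the \plrl{} or \prlr{} path. For the interval $0 < d < \sqrt{2}$, I would apply Proposition~\ref{prop:dub-cost-less-sqrt2}, which asserts $\dub(d) = \dub_A(d)$ and that $\dub_A$ decreases monotonically from $7\pi/3$ at $d=0$ to $5\pi/2 - \sqrt{2}$ at $d = \sqrt{2}$; in particular $\dub(d) \leq \dub_A(0) = 7\pi/3$ throughout. For $\sqrt{2} \leq d < \ds$, I would appeal to Proposition~\ref{prop:dub-cost-sqrt2-2}, which gives $\dub(d) = \dub_B(d)$ decreasing monotonically from $5\pi/2 - \sqrt{2}$ down to~$2\pi$; in particular $\dub(d) \leq \dub_B(\sqrt{2}) = 5\pi/2 - \sqrt{2}$. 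Finally, for $\ds \leq d < 2$ the same proposition gives $\dub(d) = 2\pi$, and Proposition~\ref{lem:dub-cost-larger-2} extends this to all $d \geq 2$, completing the range.

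The one point that requires a small verification is that the three regimes glue together consistently at the breakpoints $\sqrt{2}$ and $\ds$. At $d = \sqrt{2}$, Proposition~\ref{prop:dub-cost-less-sqrt2} gives a left-limit value of $5\pi/2 - \sqrt{2}$ for $\dub_A$, and Proposition~\ref{prop:dub-cost-sqrt2-2} is stated with $\dub_B(\sqrt{2}) = \dub_A(\sqrt{2})$, so the two descriptions agree at this point. At $d = \ds$, Proposition~\ref{prop:dub-cost-sqrt2-2} explicitly defines $\ds$ as the value where $\dub_B(\ds) = 2\pi$, and the case $d \geq \ds$ is absorbed into the constant regime, so again the descriptions agree.

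The main obstacle here is really just bookkeeping: making sure that the case splits in the three propositions cover $[0,\infty)$ exhaustively and that their stated values of $\dub_A, \dub_B$, and $\dub_C$ combine into the maximum~$\dub$ in the correct way. Once that is checked, the theorem follows as a one-paragraph corollary of the three propositions, and no further technical work (no new Lagrange multipliers, no new monotonicity arguments) is needed.
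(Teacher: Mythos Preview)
Your proposal is correct and matches the paper's approach exactly: the paper presents Theorem~\ref{thm:main} as a summary of Propositions~\ref{prop:dub-cost-less-sqrt2}, \ref{prop:dub-cost-sqrt2-2}, and~\ref{lem:dub-cost-larger-2} with no separate proof, and your write-up simply makes explicit the bookkeeping that these three propositions together cover all $d \geq 0$ and glue consistently at the breakpoints.
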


\section{Case~C}
\label{sec:case-c}

Both the \prsr-path and the \plsr-path exist in case~C, and we show
that at least one of them has length at most~$d + 2\pi$.  The
arguments presented here are already in Kim's master
thesis~\cite{janghwan}.
\begin{lemma}
  \label{lem:dub_c_upper_bound}
  For $(\alpha, \beta) \in \CD$, we have $\l(d, \alpha, \beta) \leq
  2\pi + d$.
\end{lemma}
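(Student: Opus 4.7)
My plan is to show that for each $(\alpha,\beta) \in \CD$, at least one of the $\prsr$-path or $\plsr$-path has length at most $d + 2\pi$; both exist in case~C. I would start from the standard Dubins length formulas
\begin{align*}
  \lrsr &= \dr + (\alpha - \psi_R)\bmod 2\pi + (\psi_R - \beta)\bmod 2\pi,\\
  \llsr &= \sqrt{\dlr^2 - 4} + (\psi_L - \alpha)\bmod 2\pi + (\psi_L - \beta)\bmod 2\pi,
\end{align*}
where $\psi_R = \arg(r_F - r_S)$ is the RSR outer tangent direction and $\psi_L$ is the LSR inner tangent direction. In the $(\sigma,\delta)$-coordinates, $r_F - r_S = (d + 2\cos\sigma\sin\delta,\,2\sin\sigma\sin\delta)$ has non-negative $y$-coordinate on $\Delta$, so $\psi_R \in [0,\pi]$.

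A key preliminary inequality I would establish is $\dr \leq d + 2\delta = d + (\beta - \alpha)$. This follows by comparing $\dr^2 = d^2 + 4d\sin\delta\cos\sigma + 4\sin^2\delta$ with $(d + 2\delta)^2 = d^2 + 4d\delta + 4\delta^2$ and applying the inequalities $\sin\delta \leq \delta$ and $\cos\sigma \leq 1$.

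The argument then splits on whether $\psi_R$ lies strictly in $(\alpha,\beta)$. If $\psi_R \leq \alpha$ or $\psi_R \geq \beta$, at most one of the two RSR arcs wraps past zero, so their sum is at most $2\pi - (\beta - \alpha)$, and the preliminary inequality gives
\[
\lrsr \leq \dr + 2\pi - (\beta - \alpha) \leq (d + \beta - \alpha) + 2\pi - (\beta - \alpha) = d + 2\pi.
\]
If instead $\alpha < \psi_R < \beta$, both arcs wrap and $\lrsr = \dr + 4\pi - (\beta - \alpha)$ can exceed $d + 2\pi$, so I would bound $\llsr$ instead. Using $r_F - \ell_S = (d + 2\sin\sigma\cos\delta,\,-2\cos\sigma\cos\delta)$ and the tilt $\chi = \arctan(2/\sqrt{\dlr^2 - 4})$ one computes $\psi_L$, and the geometric picture is that the very condition forcing $\psi_R$ strictly inside $(\alpha,\beta)$ places the LSR tangent in a favorable position, so that the LSR arcs together with the short straight segment $\sqrt{\dlr^2 - 4}$ stay bounded by $d + 2\pi$.

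The main obstacle will be this second sub-case. A clean execution requires a further split on the signs of $\cos\sigma$ and $\cos\delta$, which determine the quadrant of $r_F - \ell_S$ and hence the value of $\psi_L$; in each sub-case a trigonometric bound of the form $\sqrt{\dlr^2 - 4} + (\text{LSR arc sum}) \leq d + 2\pi$ must be carried through. The case~C hypothesis $\dlr \geq 2$ is essential both to ensure $\plsr$ exists and to keep $\chi \in (0,\pi/2)$, so that the tilt does not degrade the arc estimates.
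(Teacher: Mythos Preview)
Your first case, where $\psi_R \notin (\alpha, \beta)$ and the bound $\lrsr \leq d + 2\pi$ follows from $\dr \leq d + (\beta-\alpha)$, is correct and is a clean algebraic argument. But the second case is not a proof: you explicitly call it ``the main obstacle,'' gesture at a further split on the signs of $\cos\sigma$ and $\cos\delta$, and never actually carry any bound through. As written, the proposal does not establish the lemma.

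The paper handles the \plsr{} side by a one-line geometric observation that sidesteps all the trigonometric bookkeeping you anticipate. With $T_S$, $T_F$ the tangent points on $L_S$, $R_F$, the path $\arc{ST_S} \cup \overline{T_ST_F}$ from $S$ to $T_F$ is a convex curve contained inside the convex curve $\overline{SF} \cup \overline{FT_F}$, so $|\arc{ST_S}| + |T_ST_F| \leq |SF| + |FT_F|$; then
\[
\llsr = |\arc{ST_S}| + |T_ST_F| + |\arc{T_FF}| \;\leq\; |SF| + |FT_F| + |\arc{T_FF}| \;\leq\; d + |\arc{FT_F}| + |\arc{T_FF}| = d + 2\pi.
\]
The paper's \prsr{} case is also geometric (a translation and triangle-inequality argument, with the split based on whether $\alpha$ exceeds the angle~$\alpha_{\tSR}$ of the upper tangent from $S$ to $R_F$, rather than on~$\psi_R$). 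Your algebraic \prsr{} argument is a genuine alternative there; but for the \plsr{} side you should either adopt the convexity argument or actually write out at least one of your promised sub-cases in full, since as it stands nothing guarantees they close.
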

\begin{proof}
  Since $L_S \cap R_F = \emptyset$, the starting point $S = (0, 0)$
  does not lie in~$R_F$, and so there is a tangent to~$R_F$
  through~$S$ that touches~$R_F$ from above.  Let $\alpha_\tSR$ be the
  angle made by this tangent and the positive $x$-axis (see
  Figure~\ref{fig:sr}).
  \begin{figure}
    \centerline{\subfigure[Definition of
        $\alpha_\tSR$]{\includegraphics{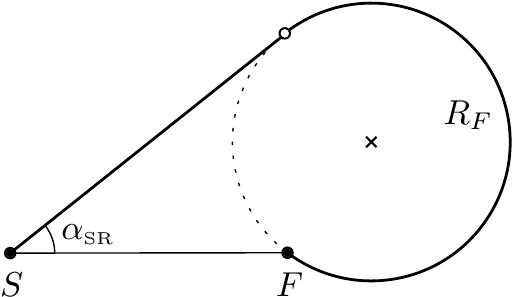}% 
        \label{fig:sr}}
      \hspace{2cm}
      \subfigure[\prsr-path]{\includegraphics{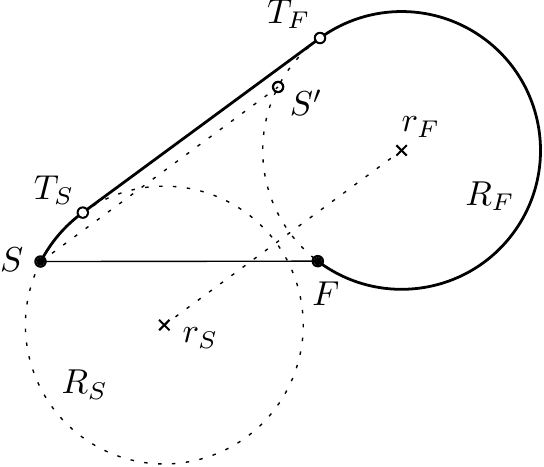}}}
    \caption{\prsr-paths in case~C.}
    \label{fig:sr_rsr}
  \end{figure}

  Let us first assume that $\alpha \geq \alpha_\tSR$, and consider the
  \prsr-path from~$S$ to~$F$.  It consists of an initial right-turning
  arc $\arc{ST_S}$, a straight line segment $\overline{T_S T_F}$, and
  a final right-turning arc $\arc{T_FF}$, where the segment is tangent
  to $R_S$ and $R_F$ at the points $T_S$ and~$T_F$. (When
  $\alpha=\alpha_{\tSR}$ we have $S=T_S$.) See
  Figure~\ref{fig:sr_rsr}.

  Let $t$ be the vector $t = \overrightarrow{T_S T_F}$, and let $S' =
  S + t$. Since $t = \overrightarrow{r_Sr_F}$, we have $R_F = R_S +
  t$, and so $S'$ lies on~$R_F$.  We claim that $S'$ lies on the
  clockwise arc $\arc{F T_F}$.  Indeed, any point~$(\alpha, \beta) \in
  \Delta$ satisfies $\beta \leq \alpha\leq 2\pi-\beta$, which implies
  that $t$ has a positive $y$-component.

  It follows that the length of the \prsr-path is $|\arc{ST_S}| + |T_ST_F| +
  |\arc{T_FF}| = |SS'| + |\arc{S'F}|$. By the triangle-inequality,
  $|SS'| \leq |SF| + |\arc{FS'}|$, and so the length of the \prsr-path
  is at most $|SF| + 2\pi = d + 2\pi$.

  Consider now the case where $\alpha < \alpha_\tSR$. We show that the
  \plsr-path from $S$ to~$F$ has length at most $d + 2\pi$.  The
  \plsr-path consists of an initial left-turning arc $\arc{ST_S}$, a
  straight line segment $\overline{T_S T_F}$ and a final right-turning
  arc $\arc{T_F F}$, where the segment is tangent to $L_S$ and $R_F$
  at points $T_S$ and~$T_F$. See Figure~\ref{fig:lsr_path}.
  \begin{figure}
    \centerline{\subfigure[]{\includegraphics{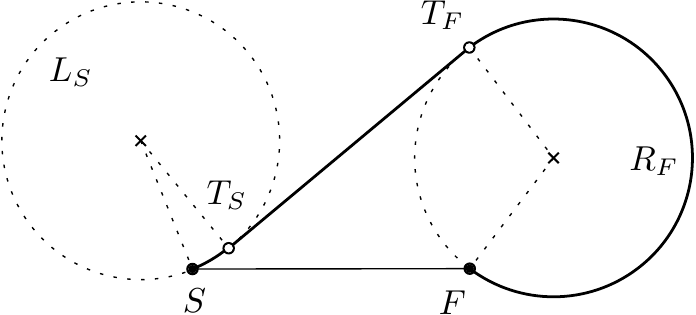}%
	\label{fig:lsr_path}}
      \hspace{2cm}
      \subfigure[]{\includegraphics{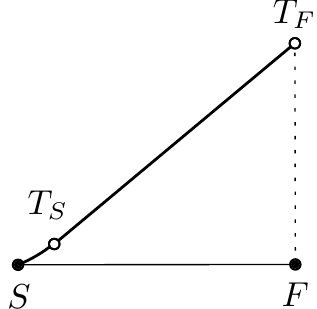}%
	\label{fig:lsr_path_length}}}
    \caption{\plsr-paths in case~C.}
  \end{figure}
  Here, it suffices to observe that $|\arc{ST_S}| + |T_ST_F| \leq |SF|
  + |FT_F|$ (a convex curve contained within another convex curve),
  and so $|\arc{ST_S}| + |T_ST_F| + |\arc{T_FF}| \leq |SF| + |FT_F| +
  |\arc{T_FF}| \leq |SF| + |\arc{FT_F}| + |\arc{T_FF}| = d + 2\pi$.
\end{proof}
It turns out that the bound in Lemma~\ref{lem:dub_c_upper_bound} is
tight, and we obtain:
\begin{lemma}
  \label{lem:lower-bound}
  For any $d > 0$ we have $\dub_C(d) = 2\pi$.
\end{lemma}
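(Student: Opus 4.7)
By Lemma~\ref{lem:dub_c_upper_bound} we already have $\dub_{C}(d)\leq 2\pi$, so it suffices to exhibit one configuration in $\CD$ whose shortest Dubins path has length exactly $d+2\pi$. The natural candidate is the antiparallel configuration $(\alpha,\beta)=(\pi,\pi)$, which lies in $\CD$ for every $d\geq 0$ because $\dlr=\drl=\sqrt{d^{2}+4}\geq 2$.

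At this configuration $L_S$ and $L_F$ are the unit disks centered at $(0,-1)$ and $(d,-1)$, tangent to the $x$-axis at $S$ and $F$ from below. The $\plsl$-path is therefore the concatenation of a half-circle on $L_S$ from $S$ down to $(0,-2)$, a horizontal segment of length $d$ along $y=-2$, and a half-circle on $L_F$ up to $F$; its total length is $\pi+d+\pi=d+2\pi$. Mirroring across the $x$-axis fixes $(\pi,\pi)$ and exchanges $\plsl$ with $\prsr$, so $\prsr$ has length $d+2\pi$ as well.

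It remains to rule out a strictly shorter path among the other four Dubins types. For $\plsr$ (and, by the $x$-axis symmetry, $\prsl$), the inner tangent has length $\sqrt{\dlr^{2}-4}=d$; matching the tangent directions at the two tangent points forces each of the two arcs to have length $2\pi-2\arctan(d/2)$, so the total is $d+4\pi-4\arctan(d/2)$, which exceeds $d+2\pi$ for every finite $d>0$ because $\arctan(d/2)<\pi/2$. The paths $\plrl$ and $\prlr$ exist only when $d\leq 4$; the left--right symmetry in $x=d/2$ places the two candidate centers of the middle disk on that line, and Dubins' condition that the middle arc exceed $\pi$ selects a particular one of them. A direct computation then shows that each of the two flanking arcs has length at least $3\pi/2$ while the middle arc has length at least $\pi$, so the total is at least $4\pi$, and $4\pi>d+2\pi$ throughout the range $0<d\leq 4$.

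Combining these bounds, the shortest Dubins path at $(\pi,\pi)$ has length exactly $d+2\pi$, so $\dub_{C}(d)\geq 2\pi$, matching the upper bound of Lemma~\ref{lem:dub_c_upper_bound}. The main obstacle lies in the $\pccc$-case: one needs the explicit location of the middle disk (the unit disk tangent to both $L_S$ and $L_F$) together with Dubins' constraint to single out the correct middle disk and then to bound the three arc lengths from below. The other cases reduce cleanly to the construction of the $\plsl$-path, the inner-tangent formula for $\plsr$, and the two mirror symmetries.
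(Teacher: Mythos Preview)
Your proposal is correct, but it takes a different route from the paper's proof.

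The paper argues more conceptually: any bounded-curvature path from $(0,0,\pi)$ to $(d,0,\pi)$ must cross the vertical lines $x=0$ and $x=d$; the subpath from $S$ to the crossing point of $x=0$ has length at least $\pi$ (either because it enters $L_S\cup R_S$, where a lemma of Ahn et al.\ applies, or because it avoids $L_S\cup R_S$ and must therefore go around one of these disks), and similarly at the other end, giving total length at least $d+2\pi$. This argument never touches the Dubins classification and requires no case analysis.

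Your approach instead exploits Dubins' theorem to reduce to the six canonical path types at $(\pi,\pi)$ and then bounds each explicitly. The $\plsl$/$\prsr$ computation is immediate; your $\plsr$ formula $d+4\pi-4\arctan(d/2)$ is correct (indeed $\tan(\gamma/2)=-d/2$ forces each arc to have length $\gamma=2\pi-2\arctan(d/2)$); and for $\plrl$ the tangent point on $L_S$ sits at angle $\arccos(d/4)\in(0,\pi/2)$ from $\ell_S$, so the counter-clockwise arc from $S$ (at angle $\pi/2$) has length $3\pi/2+\arccos(d/4)>3\pi/2$, while the middle arc $2\mul=2\pi-2\arcsin(d/4)\geq\pi$, yielding total length at least $4\pi>d+2\pi$ for $d\leq 4$. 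So the computation you defer to ``a direct computation'' does go through cleanly.

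The trade-off: the paper's proof is shorter and conceptually uniform but imports an external lemma; yours is self-contained and elementary but requires six cases, of which the $\pccc$ case needs the geometric calculation you flagged.
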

\begin{proof}
  Lemma~\ref{lem:dub_c_upper_bound} implies that $\dub_C(d) \leq
  2\pi$, so it remains to provide a matching lower bound.  We will
  show that $\l(d, \pi, \pi) = 2\pi + d$, and since $(\pi,\pi) \in
  \CD$, this proves the claim.  Consider a shortest bounded-curvature
  path~$G$ from $(0,0,\pi)$ to~$(d, 0, \pi)$.  This path must
  intersect the line~$x = 0$ in a point~$p$ and the line~$x = d$ in a
  point~$q$.  The distance $|pq|$ is at least~$d$.  If the path from
  $S$ to~$p$ intersects $L_S \cup R_S$, then Ahn et
  al.~\cite[Fact~1]{lcksc-accspsp-00} showed that it has length at
  least~$\pi$.  Otherwise the path avoids $L_S\cup R_S$ and hence must
  have length at least~$\pi$. The same argument applies to the path
  from~$q$ to~$F$, and so the total length of~$G$ is at least~$d +
  2\pi$.
\end{proof}

Since $\dub(d) \geq \dub_C(d)$, this establishes a lower bound for the
Dubins cost function.

\section{Regions of the square $\SQUARE$ for $0 < d < 2$}
\label{sec:regions}

Before we can discuss cases~A and~B in detail, we need to have a
precise description of the regions~$A$ and~$B$ of the
square~$\SQUARE$.  As we saw in Section~\ref{sec:overview}, it
suffices to do this for~$0 < d < 2$.

We define the angle 
\begin{align*}
  \label{eq:def-as}
  \as = \arcsin({d}/{2}),
\end{align*}
and observe that for $(\alpha, \beta) = (\as, 2\pi-\as)$ as well as
for $(\alpha, \beta) = (\pi-\as, \pi+\as)$ we have~$R_S = R_F$
(Figure~\ref{fig:alpha-star}). 
\begin{figure}
  \centerline{\subfigure[$\alpha=\as$, $\beta=2\pi-\as$]{%
      \includegraphics{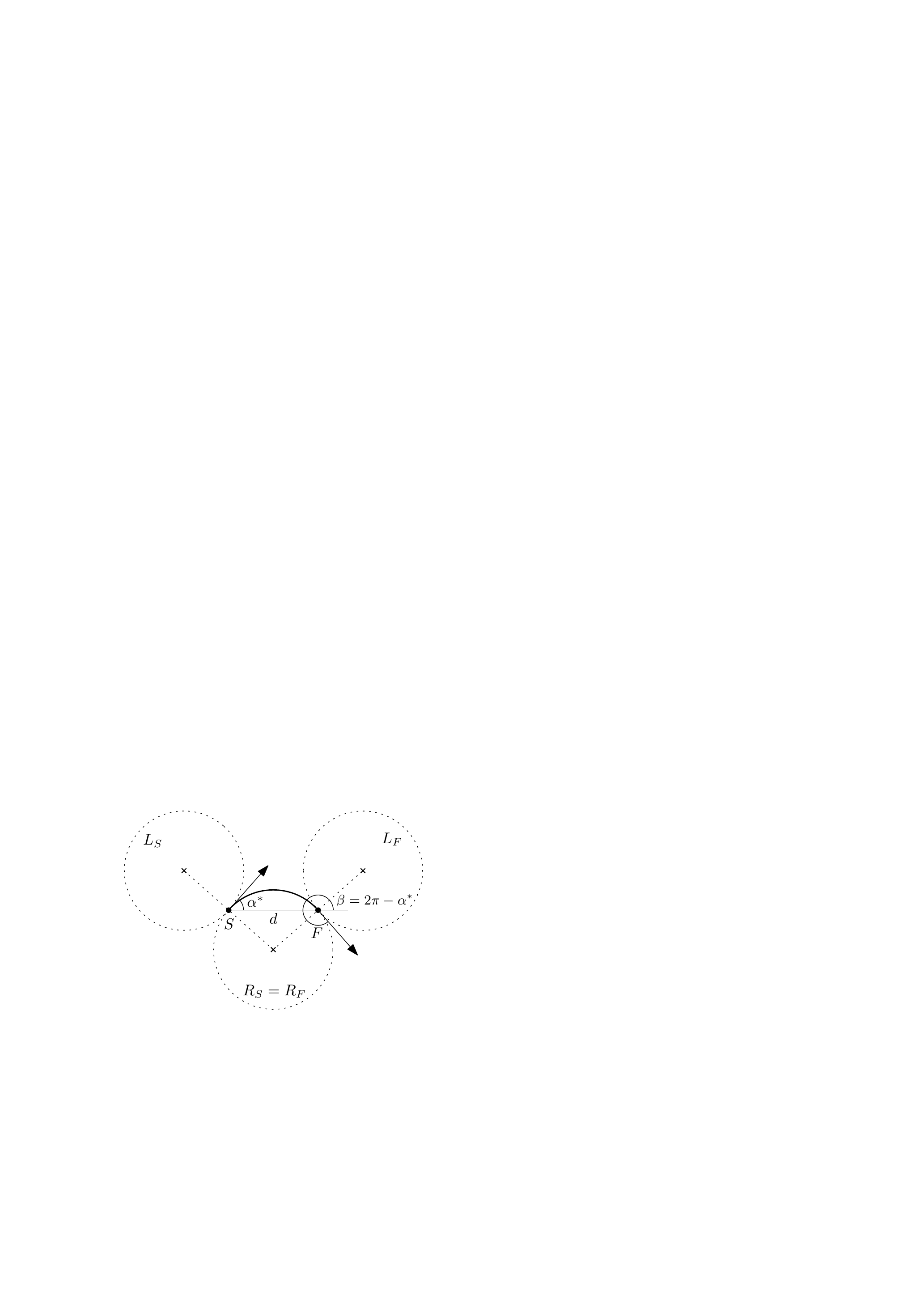}}
    \hspace{1cm}	
    \subfigure[$\alpha=\pi-\as$, $\beta=\pi+\as$]{%
      \includegraphics{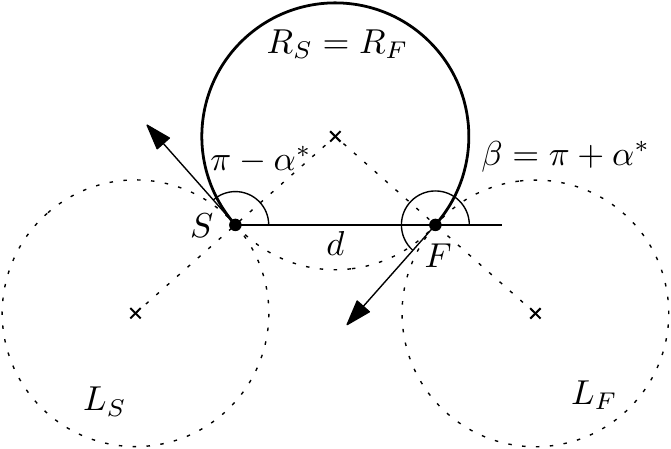}}}
  \caption{Configurations when $R_S$ and $R_F$ coincide.}
  \label{fig:alpha-star}
\end{figure}
Let us also define $\sigmas=\arcsin({d}/4)$.

The following lemma is proven by elementary calculations, given in the
appendix.
\begin{lemma}
  \label{lem:region-boundaries}
  For $0 < d < 2$,
  \begin{denseitems}
  \item there is a curve $(\sigma, \deltalr(\sigma))_{0 \leq \sigma \leq
    \pi}$ in $\SQUARE$ that connects the two points $(0,\as)$ and
    $(\pi,\as)$, lies strictly between $\delta = \as$ and $\delta = \pi/2$
    except for its endpoints, and such that $\dlr = 2$ on the curve, $\dlr
    < 2$ between the curve and the line $\delta = \pi/2$, and $\dlr > 2$
    below the curve;
  \item there is a curve $(\sigma, \deltarl(\sigma))_{0 \leq \sigma \leq
    \sigmas}$ in $\SQUARE$ that connects the two points $(0,\as)$ and
    $(\sigmas, 0)$, lies strictly below $\delta = \as$ except for its left
    endpoint, and such that $\drl = 2$ on the curve, $\drl < 2$ between the
    curve and the line $\delta=\as$, and $\drl > 2$ below the curve;
  \item for $\as \leq \delta \leq \pi/2$, we have $\drl \leq 2$ with equality
    only for the two points $(0,\as)$ and~$(\pi, \as)$;
  \item for $\sigmas < \sigma < \pi-\sigmas$, $0 \leq \delta \leq \as$, we
    have~$\drl < 2$. 
\end{denseitems}
\end{lemma}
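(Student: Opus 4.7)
The plan is to work with the two functions
\[
f(\sigma, \delta) = \dlr^{2} - 4 = d^{2} - 4\sin^{2}\delta + 4d\cos\delta\sin\sigma,
\]
\[
g(\sigma, \delta) = \drl^{2} - 4 = d^{2} - 4\sin^{2}\delta - 4d\cos\delta\sin\sigma,
\]
whose zero sets are precisely the two curves to be constructed. Using $\sin\as = d/2$ and $\sin\sigmas = d/4$, one checks at once that $f$ vanishes at $(0, \as)$ and $(\pi, \as)$, and that $g$ vanishes at $(0, \as)$ and $(\sigmas, 0)$; these will be the endpoints of the two curves.

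For the first bullet I compute $\partial_{\delta} f = -4\sin\delta\,(2\cos\delta + d\sin\sigma)$. Since $d < 2$ and $\sin\sigma \geq 0$, the factor $2\cos\delta + d\sin\sigma$ is strictly positive on $\delta \in [0, \pi/2)$, so $f$ is strictly decreasing in $\delta$ on $[0, \pi/2]$ for every fixed $\sigma$. The boundary values $f(\sigma, \as) = 4d\cos\as\sin\sigma \geq 0$ and $f(\sigma, \pi/2) = d^{2} - 4 < 0$ then give, by the intermediate value theorem, a unique $\deltalr(\sigma) \in [\as, \pi/2)$ with $f = 0$, and this value lies strictly between $\as$ and $\pi/2$ for $\sigma \in (0, \pi)$. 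Strict monotonicity in $\delta$ immediately yields $\dlr > 2$ below the curve and $\dlr < 2$ in the strip between the curve and $\delta = \pi/2$; smoothness of $\deltalr$ follows from the implicit function theorem.

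For the second bullet I parameterise the curve by $\delta$: on $[0, \as]$ one has $\sin^{2}\delta \leq d^{2}/4$, so $g = 0$ is equivalent to $\sin\sigma = (d^{2}/4 - \sin^{2}\delta)/(d\cos\delta)$. Implicit differentiation yields
\[
\cos\sigma\,\frac{d\sigma}{d\delta} = \frac{\sin\delta}{d\cos^{2}\delta}\bigl(d^{2}/4 - 1 - \cos^{2}\delta\bigr),
\]
which is strictly negative for $\delta \in (0, \as]$ since $d^{2}/4 < 1$. Hence $\sigma(\delta)$ decreases strictly from $\sigmas$ (at $\delta = 0$) to $0$ (at $\delta = \as$), and inverting gives the desired $\deltarl(\sigma)$ on $[0, \sigmas]$. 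Since $\partial_{\sigma} g = -4d\cos\delta\cos\sigma < 0$ on $(0, \pi/2)^{2}$, $g$ is strictly decreasing in $\sigma$ on each horizontal segment with $\delta \in [0, \as]$, giving $\drl > 2$ below the curve and $\drl < 2$ in the strip between the curve and $\delta = \as$.

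The third bullet is immediate from $g(\sigma, \delta) \leq d^{2} - 4\sin^{2}\delta \leq 0$ for $\delta \in [\as, \pi/2]$; equality forces both $\delta = \as$ and $\sin\sigma = 0$ (since $\as < \pi/2$ for $d < 2$), leaving only the points $(0, \as)$ and $(\pi, \as)$. For the fourth bullet the key identity is the factorisation
\[
g(\sigmas, \delta) = (1 - \cos\delta)\,(d^{2} - 4 - 4\cos\delta),
\]
which is $\leq 0$ on $\delta \in [0, \as]$ and strictly negative for $\delta > 0$; combined with the symmetry $g(\pi - \sigma, \delta) = g(\sigma, \delta)$ and the strict decrease of $g$ in $\sigma$ on $(0, \pi/2)$, this gives $g < 0$ strictly throughout $(\sigmas, \pi - \sigmas) \times [0, \as]$. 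All steps are elementary; the only point requiring a little care is choosing, on each sub-strip, the parameter in which $f$ or $g$ is strictly monotone, so that the level set is a graph and the intermediate value theorem pins down a unique curve.
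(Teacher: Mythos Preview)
Your proof is correct. For the first and third bullets it coincides with the paper's argument essentially verbatim (monotonicity of $\dlr^2$ in $\delta$ on $[0,\pi/2]$, and the bound $d^2-4\sin^2\delta\leq 0$ for $\delta\in[\as,\pi/2]$). The differences are in the other two bullets. For the second, the paper fixes $\sigma$ and observes that $\drl^2=4$ is quadratic in $\cos\delta$, hence has at most two roots on $[0,\pi]$, and uses the boundary values at $\delta=0,\as,\pi$ to locate a unique root in $[0,\as]$; you instead parameterise by $\delta$, solve explicitly for $\sin\sigma$, and differentiate to obtain a strictly monotone graph---either route works. For the fourth bullet your factorisation $g(\sigmas,\delta)=(1-\cos\delta)(d^2-4-4\cos\delta)$, combined with the symmetry $g(\pi-\sigma,\delta)=g(\sigma,\delta)$ and the strict $\sigma$-monotonicity of $g$ on $(0,\pi/2)$, is cleaner than the paper's approach, which bounds $\drl^2$ above by the auxiliary function $D(\sigma,\delta)=d^2-d^2\cos\delta+4\cos^2\delta$ and then shows $D$ is nonincreasing in $\delta$ with $D(\cdot,0)=4$.
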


By Equations~\eqref{eq:dlr-sigma-delta} and~\eqref{eq:drl-sigma-delta},
we have $\drl(\sigma,\delta) = \dlr(\sigma,\pi-\delta)$.  Our regions
are therefore as follows (see Figure~\ref{fig:regions-1}):
\begin{denseitems}
\item $C_1$ is the region $\delta \leq \deltalr(\sigma)$. Inside this
  region we have $\dlr\geq 2$.
\item $C_2$ is the region $\pi-\sigmas
  \leq \sigma \leq \pi$, $\pi - \deltarl(\sigma) \leq \delta \leq
  \pi$.  Here we have $\dlr \geq 2$ and $\drl \geq 2$.
\item $C_3$ is the region $0 \leq \sigma \leq \sigmas$, $\pi -
  \deltarl(\sigma) \leq \delta \leq \pi$. Here we have $\dlr \geq 2$
  and $\drl \geq 2$.
\item $A$ is the region $\deltalr(\sigma) < \delta <
  \pi-\deltalr(\sigma)$. In this region we have $\dlr < 2$ and $\drl <
  2$.
\item Finally, $B$~is the remaining region, where
  $\pi-\deltalr(\sigma) \leq \delta$, but excluding~$C_2 \cup C_3$.
  In this region we have $\dlr < 2$ and $\drl \geq 2$ .
\end{denseitems}

\begin{figure}
  \centerline{\subfigure[Subdivision of $\SQUARE$]{%
      \includegraphics{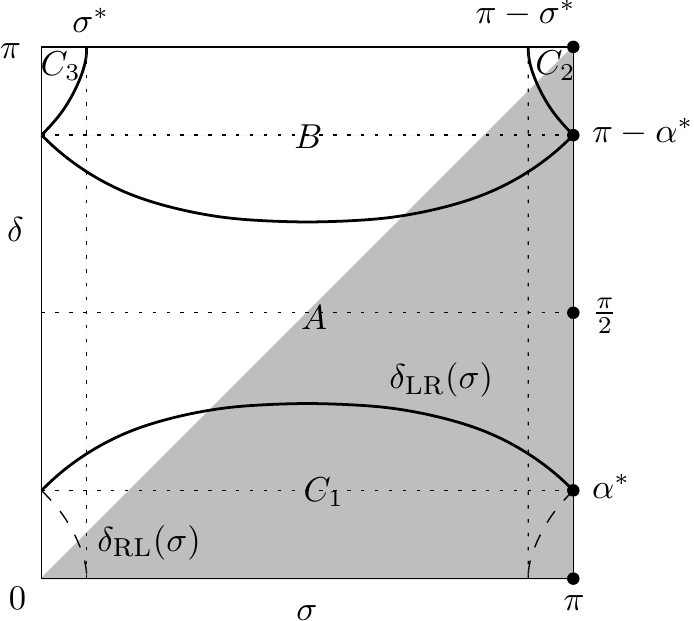}%
      \label{fig:regions-1-sigma-delta}}
    \hspace{2cm}
    \subfigure[Subdivision of $\Delta$]{%
      \includegraphics{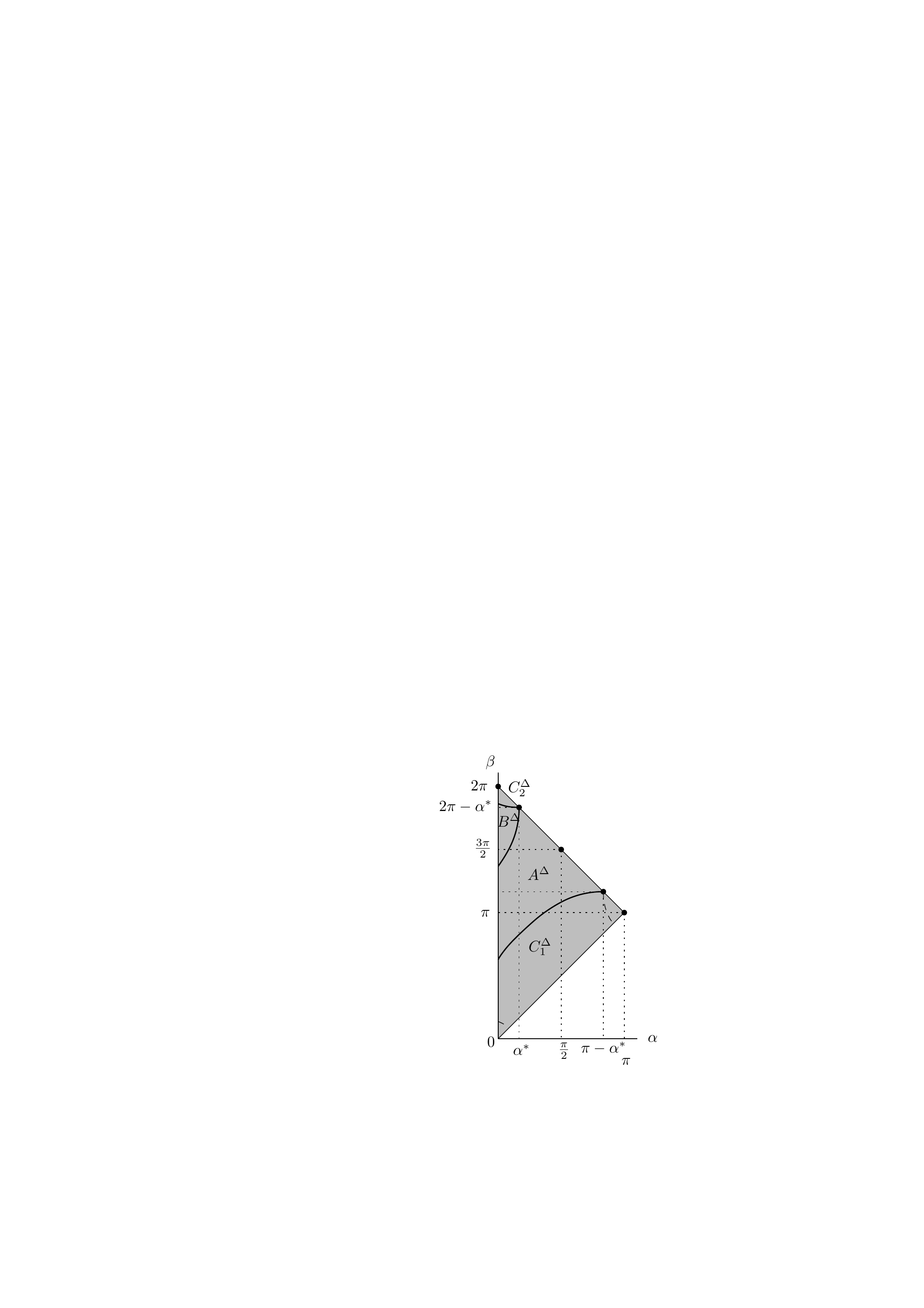}%
      \label{fig:regions-1-alpha-beta}}}
  \caption{The subdivision of $\SQUARE$ and $\Delta$ for $d=1$}
  \label{fig:regions-1}
\end{figure}

It is clear from this description that the five regions~$A$, $B$,
$C_1$, $C_2$, and $C_3$ are $\sigma$-monotone, meaning that a line
parallel to the $\delta$-axis intersects each region in a single
interval.  We will also need that the region~$\BD$ is monotone with
respect to the $\alpha$-direction.  The proof by elementary
calculations is again given in the appendix.
\begin{lemma}
  \label{lem:case-b-alpha-monotone}
  For $0 < d < 2$, there are two continuous functions $\alpha \mapsto
  \brl(\alpha)$ and $\alpha \mapsto \blr(\alpha)$ defined on the
  interval $[0, \as]$ such that $\blr(\as) = \brl(\as) = 2\pi -\as$,
  and such that for $0 \leq \alpha < \as$ we have
  \begin{denseitems}
  \item $\alpha + \pi < \brl(\alpha) < 2\pi - \as < \blr(\alpha) <
    2\pi-\alpha$;
  \item $\drl < 2$ for $\beta < \brl(\alpha)$,
    $\drl = 2$ for $\beta = \brl(\alpha)$, and
    $\drl > 2$ for $\beta > \brl(\alpha)$;
  \item $\dlr < 2$ for $\beta < \blr(\alpha)$,
    $\dlr = 2$ for $\beta = \blr(\alpha)$, and
    $\dlr > 2$ for $\beta > \blr(\alpha)$;
  \end{denseitems}
  The function $\blr$ is a monotonically decreasing function
  of~$\alpha$, and we have 
  \[
  \BD = \big\{(\alpha, \beta) \mathrel{\big|} \alpha\in [0,\as),\; 
    \brl(\alpha) \leq \beta < \blr(\alpha) \big\}.
  \]
\end{lemma}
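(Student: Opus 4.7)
My plan is to fix $\alpha \in [0, \as]$ and study, as functions of $\beta$ on the interval $I_\alpha = [\pi+\alpha, 2\pi-\alpha]$, the two squared distances $\beta \mapsto \drl^2(\alpha,\beta)$ and $\beta \mapsto \dlr^2(\alpha,\beta)$. By Lemma~\ref{lem:case-b-basics} every $(\alpha, \beta) \in \BD$ satisfies $\alpha \in [0, \pi/2]$ and $\beta \in I_\alpha$, so the task reduces to identifying, for each such $\alpha$, the $\beta$-range on which $\drl \geq 2$ and $\dlr < 2$ hold simultaneously.

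The key geometric observation is that $r_S$ and $\ell_S$ are fixed, while as $\beta$ runs through $I_\alpha$ the centre $\ell_F = (d - \sin\beta, \cos\beta)$ traces the right half of the unit circle around~$F$ and $r_F$ traces the left half. A critical point of $|r_S\ell_F|^2$ in $\beta$ corresponds to the collinearity of $r_S, F, \ell_F$, of which there are two on the full circle (the near and the far intersections with the line $r_SF$); a direct angle computation, using that for $\alpha \leq \as$ the direction from $r_S$ to $F$ makes an angle at most $\pi/2 - \alpha$ with the $x$-axis, shows that the far point sits on the right half-arc and the near point on the left, so $\drl$ is strictly unimodal on $I_\alpha$ with a unique interior maximum. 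The symmetric argument for $(\ell_S, r_F)$ gives that $\dlr$ is strictly unimodal with a unique interior \emph{minimum}. The endpoint values $\drl(\alpha, \pi+\alpha) = \dlr(\alpha, \pi+\alpha) = d$ and $\drl(\alpha, 2\pi-\alpha) = \dlr(\alpha, 2\pi-\alpha) = \sqrt{d^2 + 4\cos^2\alpha}$ are straightforward to compute, and the latter exceeds $2$ exactly when $\alpha < \as$. For $\alpha < \as$, intermediate value then produces unique roots $\brl(\alpha)$ and $\blr(\alpha)$ in $I_\alpha$, each lying on the rising branch of its function, and unimodality yields the full sign characterizations claimed in the lemma.

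To locate the two roots relative to $2\pi-\as$, I evaluate at $\beta = 2\pi-\as$: let $h(\alpha) = \drl^2(\alpha, 2\pi-\as)$ and $k(\alpha) = \dlr^2(\alpha, 2\pi-\as)$. Elementary differentiation gives $h'(\alpha) = -2d\cos\alpha - 2\sin(\alpha+\as) < 0$ and $k'(\alpha) = 2d\cos\alpha - 2\sin(\alpha+\as)$, and the sign of $k'$ collapses to the comparison $\tan\alpha$ versus $\tan\as$, so $k$ is strictly increasing on $[0,\as)$; since $h(\as) = k(\as) = 4$, this forces $h(\alpha) > 4 > k(\alpha)$ for $\alpha < \as$, and unimodality then gives $\brl(\alpha) < 2\pi-\as < \blr(\alpha)$. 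Continuity of $\brl$ and $\blr$ on $[0, \as)$ is immediate from the implicit function theorem, since the $\beta$-derivative of each squared distance is strictly positive at its root; continuity at $\as$ follows by a compactness argument, because at $\alpha = \as$ the point $r_S$ lies on the unit circle around~$F$, so the arc maximum of $\drl$ equals $2$ and is attained only at $\beta = 2\pi - \as$ (and the analogous statement holds for $\dlr$), pinning both limits to $2\pi - \as$ as $\alpha \to \as^-$.

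For the monotone decrease of $\blr$, implicit differentiation yields $\blr'(\alpha) = -(\partial_\alpha \dlr^2)/(\partial_\beta \dlr^2)$ with positive denominator, so it suffices to show $\partial_\alpha \dlr^2 > 0$ on the curve. On that curve $\beta \in (2\pi-\as, 2\pi-\alpha)$, whence $\sin\beta > -d/2$ and $\cos\beta < \cos\alpha$, and substituting these into
\begin{equation*}
\partial_\alpha \dlr^2 = 2(d + \sin\beta + \sin\alpha)\cos\alpha - 2(\cos\beta + \cos\alpha)\sin\alpha
\end{equation*}
gives the strict lower bound $(d - 2\sin\alpha)\cos\alpha > 0$, valid precisely because $\sin\alpha < d/2 = \sin\as$. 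The final expression for $\BD$ is then a direct translation of the equivalences above, together with the observation that $\BD$ is empty for $\alpha \geq \as$: at $\alpha = \as$ both curves collapse onto the single point $(\as, 2\pi-\as)$, and for $\alpha > \as$ the centre $r_S$ lies strictly inside the open unit disk around~$F$, so $\drl \leq |r_SF| + 1 < 2$ on the whole arc and the constraint $\drl \geq 2$ never holds. The step I expect to require the most care is the unimodality statement: one must argue precisely which of the two collinearity critical points on the full unit circle actually lies on the half-arc swept out by $\ell_F$ (resp.\ $r_F$), and pin down that the switching between cases occurs exactly at the threshold $\alpha = \as$.
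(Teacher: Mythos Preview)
Your proposal is correct and follows essentially the same outline as the paper: fix~$\alpha$, study $\drl$ and $\dlr$ as functions of~$\beta$ on~$[\pi+\alpha,2\pi-\alpha]$, observe that each has at most one extremum there, evaluate at the endpoints and at $\beta=2\pi-\as$, and deduce the unique crossings and their ordering. The only differences are stylistic. Where the paper says geometrically that $S$ lies on~$\partial R_F$ when $\beta=2\pi-\as$ (giving $\dlr<2$) and asserts $\drl>2$ without details, you compute $h'(\alpha)$ and $k'(\alpha)$ explicitly; where the paper argues monotonicity of~$\blr$ by noting that increasing~$\alpha$ rotates $\ell_S$ so as to increase~$\dlr$, you use implicit differentiation and the bound $(d-2\sin\alpha)\cos\alpha>0$. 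Your additional care in identifying the \emph{type} of extremum (maximum for~$\drl$, minimum for~$\dlr$) is not needed for the argument---the paper gets by with ``at most one extremum''---but it does no harm, and your acknowledged caveat that at $\alpha=\as$ the maximum of~$\drl$ sits at the endpoint rather than the interior is handled correctly by your separate treatment of that case.
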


\section[Explicit expressions for the length of LRL- and RLR-paths]%
        {Explicit expressions for the length of \plrl- and \prlr-paths}
\label{sec:ccc-paths}

In this section we develop explicit formulas for the length of \plrl-
and \prlr-paths.

We start by a change of perspective, and consider all configurations
where~$\dl$ is fixed.  We choose a coordinate system where the line
$\ell_S\ell_F$ is horizontal, and $\ell_S$ lies to the left
of~$\ell_F$, see Figure~\ref{fig:lrl-regions}.
\begin{figure}
  \centerline{\subfigure[\plrl-paths]{\includegraphics{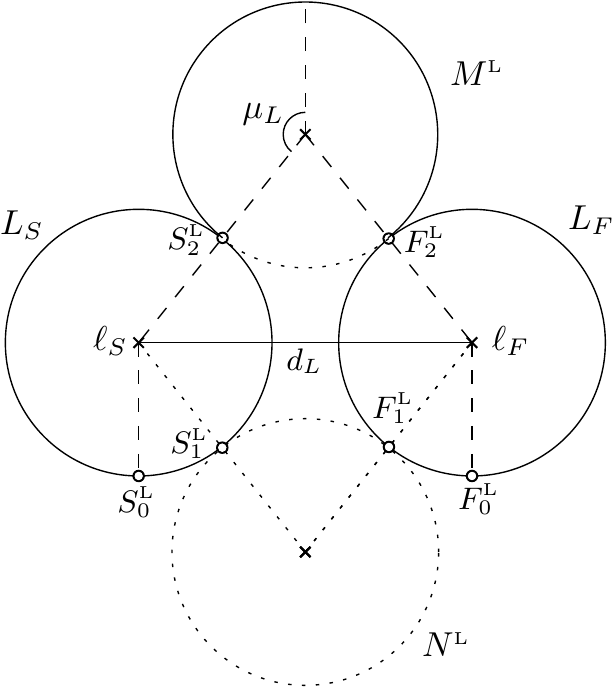}%
      \label{fig:lrl-regions}}
    \hspace{2cm}
    \subfigure[\prlr-paths]{\includegraphics{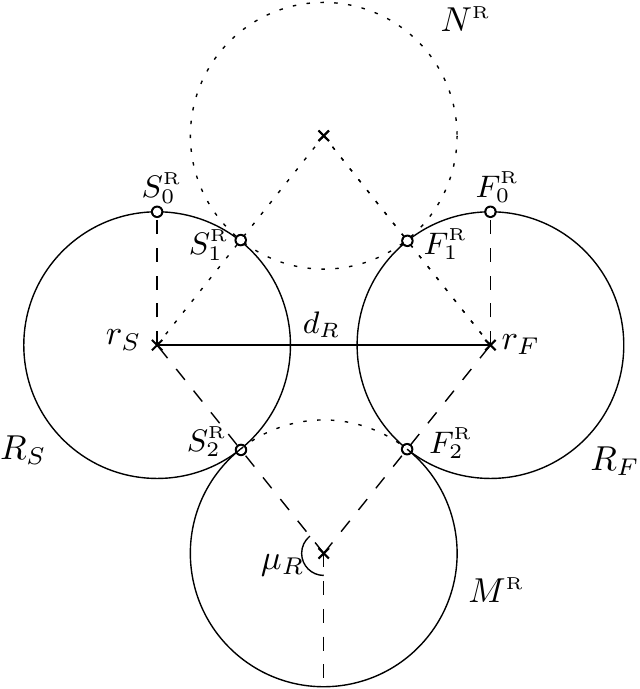}%
      \label{fig:rlr-regions}}}
  \caption{Locations of $S$ and $F$ on their disks}
  \label{fig:lrl-rlr-regions}
\end{figure}
We have drawn the two unit-radius disks~$M^{\tL}$ and~$N^{\tL}$
tangent to $L_S$ and~$L_F$.  The points of tangency are $\SL_2$ and
$\FL_2$ for~$M^{\tL}$, and $\SL_1$ and $\FL_1$ for~$N^{\tL}$.
Dubins~\cite{d-cmlcvcpitpt-57} showed that the length of the middle
circular arc of a \pccc-path is larger than $\pi$, and so it lies
on~$M^{\tL}$.

So any \plrl-path first follows a leftwards arc on~$L_S$, then
switches to~$M^{\tL}$ at~$\SL_2$, follows the rightwards arc
on~$M^{\tL}$ until it reaches~$\FL_2$, and finally follows a leftwards
arc on~$L_F$.  We note that the middle arc on~$M^{\tL}$ does not
depend on the specific endpoints~$S$ and~$F$, it is determined
entirely by $\SL_2$ and~$\FL_2$, and therefore by~$\dl$.  Let $\mul$
denote half the length of the middle circular arc~$\arc{\SL_2\FL_2}$.  We
have $\pi/2 < \mul \leq \pi$, and $4\sin(\pi-\mul) = \dl$, so that we
have
\begin{align*}
  \mul &= \pi - 4\arcsin ({\dl}/{4}).
\end{align*}
The same considerations apply to \prlr-paths, see
Figure~\ref{fig:rlr-regions}. We define $\mur$ as half the length of
the middle circular arc of the \prlr-path, and obtain
\begin{align*}
  \mur &= \pi - 4\arcsin ({\dr}/{4}).
\end{align*}
It is easy to express the length of \pccc-paths up to a multiple
of~$2\pi$:
\begin{lemma}
  \label{lem:lrl-length-general}
  For any $\sigma, \delta$, we have
  \begin{align*}
    \llrl(\sigma, \delta) & \equiv 4\mul + 2\delta \pmod{2\pi},\\
    \lrlr(\sigma, \delta) & \equiv 4\mur - 2\delta \pmod{2\pi}.%\\
  \end{align*}
\end{lemma}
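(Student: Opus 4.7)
\medskip

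\noindent\textbf{Proof plan.}
The plan is to use the fact that, along a differentiable path consisting of unit circular arcs, the \emph{signed} total curvature integrated along the path is exactly the change in orientation of the tangent, modulo $2\pi$. A leftwards (counter-clockwise) arc contributes its length, while a rightwards (clockwise) arc contributes the negative of its length. Since the initial orientation is $\alpha$ and the final orientation is $\beta$, this signed sum equals $\beta-\alpha = 2\delta$ modulo $2\pi$.

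Concretely, for an \plrl-path, let $a_1$ be the length of the initial left-turning arc on $L_S$, let $a_2 = 2\mu_{\tL}$ be the length of the middle right-turning arc on $M^{\tL}$, and let $a_3$ be the length of the final left-turning arc on $L_F$. The total-turning identity gives
\begin{equation*}
a_1 - a_2 + a_3 \;\equiv\; \beta - \alpha \;=\; 2\delta \pmod{2\pi},
\end{equation*}
so that $a_1 + a_3 \equiv 2\mu_{\tL} + 2\delta \pmod{2\pi}$, and therefore
\begin{equation*}
\llrl(\sigma,\delta) \;=\; a_1 + a_2 + a_3 \;\equiv\; 4\mu_{\tL} + 2\delta \pmod{2\pi}.
\end{equation*}
The same argument applied to an \prlr-path (right, left, right) gives $-a_1 + a_2 - a_3 \equiv 2\delta \pmod{2\pi}$, so $a_1 + a_3 \equiv 2\mu_{\tR} - 2\delta \pmod{2\pi}$, whence $\lrlr(\sigma,\delta) \equiv 4\mu_{\tR} - 2\delta \pmod{2\pi}$.

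The steps I would carry out, in order, are: first recall that $a_2 = 2\mu_{\tL}$ depends only on $\dl$ (and similarly $a_2 = 2\mu_{\tR}$ for \prlr), as already observed in the paragraph preceding the lemma; second, invoke the total-curvature/orientation identity to relate the signed sum of arc lengths to $\beta - \alpha$; third, add $2a_2$ to both sides and read off the total length modulo $2\pi$. No obstacle is anticipated here beyond being careful about the signs attached to leftwards versus rightwards arcs and about the fact that the identity is only an equality modulo $2\pi$ (which is exactly why the lemma is stated modulo $2\pi$, rather than as an equality).
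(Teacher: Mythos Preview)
Your proposal is correct and is essentially identical to the paper's own proof: both track the signed change in orientation along the path (left arcs contributing positively, right arcs negatively), set this equal to $\beta-\alpha=2\delta$ modulo~$2\pi$, and then add $2a_2$ (respectively $4\mu_{\tL}$ or $4\mu_{\tR}$) to recover the total length. The only cosmetic difference is notation ($a_1,a_2,a_3$ versus the paper's $\gamma_1,2\mu_{\tL},\gamma_2$).
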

\begin{proof}
  An \plrl-path consists of an initial left-turning arc of
  length~$\gamma_1$ on $L_S$, a right-turning arc of length $2\mul$ on
  the middle disk, and a final left-turning arc of length~$\gamma_{2}$
  on~$L_F$.  This means that the total change in orientation is
  $\gamma_1 - 2\mul + \gamma_2$.  On the other hand, since the initial
  orientation is~$\alpha$ and the final orientation is~$\beta$, this
  must be equal, up to multiples of~$2\pi$, to $\beta - \alpha =
  2\delta$.  It follows that
  \[
  \llrl = \gamma_1 + \gamma_2 + 2\mul = 
  \gamma_1 - 2\mul + \gamma_2 + 4\mul \equiv 2\delta + 4\mul \pmod{2\pi}.
  \]
  For \prlr-paths, we can similarly observe that $-\gamma_1 + 2\mur -
  \gamma_2 \equiv 2\delta \pmod{2\pi}$ (here, $\gamma_1$
  and~$\gamma_{2}$ are the right-turning arcs) and obtain
  \[
  \lrlr = \gamma_1 + \gamma_2 + 2\mur = 
  4\mur - (-\gamma_1 + 2\mur - \gamma_2) \equiv 4\mur - 2\delta
  \pmod{2\pi}. 
  \qedhere
  \]
\end{proof}
Lemma~\ref{lem:lrl-length-general} expresses the length of \pccc-paths
only up to a multiple of~$2\pi$.  Indeed, when considering the length
as a function of the endpoints~$S$ and~$F$ on the fixed disks~$L_S$
and~$L_F$, then the length jumps by~$2\pi$ when~$S$ crosses~$S_2^\lL$,
or when~$F$ crosses~$F_2^\lL$ (see Figure~\ref{fig:lrl-regions}), even
though~$\mul$ is constant and~$\delta$ changes continuously.  It is
therefore important to understand the possible locations of the
endpoints~$S$ and~$F$ along the disks~$L_S$ and~$L_F$:
\begin{lemma}
  \label{lem:ccc-endpoint-locations}
  We have 
  \begin{denseitems}
  \item $S$ lies on the counter-clockwise arc $\arc{\SL_1\SL_2}$ 
    of~$L_S$ if and only if $\drl < 2$;
  \item $F$ lies on the clockwise arc~$\arc{\FR_2\FR_1}$ of~$R_F$
    if and only if $\drl < 2$;
  \item $S$ lies on the clockwise arc $\arc{\SR_1\SR_2}$ of~$R_S$
    if and only if $\dlr < 2$;
  \item $F$ lies on the counter-clockwise arc~$\arc{\FL_2\FL_1}$
    of~$L_F$
    if and only if $\dlr < 2$.
  \end{denseitems}
  If $\drl \geq 2$ and in addition $0 \leq \alpha \leq \pi/2$, then $S$
  lies on the counter-clockwise arc $\arc{\SL_0\SL_1}$ of~$L_S$.
\end{lemma}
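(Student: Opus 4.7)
The plan is to reduce each equivalence to a law-of-cosines computation in a suitable triangle whose vertices are chosen among the four disk centers $\ell_S, r_S, \ell_F, r_F$. The geometric key is that $S$ is the midpoint of the segment $\ell_S r_S$ (the two unit disks $L_S$ and $R_S$ touch at $S$ from opposite sides of the direction line), so the rays $\ell_S \to S$ and $\ell_S \to r_S$ coincide; the analogous property holds at $F$. This lets me translate the angle at $\ell_S$ in the triangle $\ell_S r_S \ell_F$ into the angle that $S$ subtends on the circle $L_S$ from the direction of $\ell_F$.

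For the first bullet, I would let $\theta$ be the signed counter-clockwise angle from $\ell_S \to \ell_F$ to $\ell_S \to S$. Since the centers of the two candidate middle disks $M^{\tL}, N^{\tL}$ lie on the perpendicular bisector of $\ell_S\ell_F$ at distance~$2$ from $\ell_S$, the tangent points $\SL_2, \SL_1$ on $L_S$ lie at signed angles $\pm\vartheta_1$ from the direction $\ell_S \to \ell_F$, where $\cos\vartheta_1 = \dl/4$. The law of cosines in triangle $\ell_S r_S \ell_F$ with included angle $|\theta|$ gives $\drl^2 = \dl^2 + 4 - 4\dl\cos\theta$, so $\drl < 2 \iff \cos\theta > \dl/4 \iff |\theta| < \vartheta_1$, which is exactly the condition that $S$ lies on the counter-clockwise arc $\arc{\SL_1\SL_2}$. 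Bullets~2, 3, and~4 follow by the same argument applied to the triangles $r_F \ell_F r_S$, $r_S \ell_S r_F$, and $\ell_F r_F \ell_S$ respectively; the orientation convention flips between left- and right-disks, but the algebraic equivalence is unchanged.

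The last statement additionally determines which of the two sub-arcs of the complement of $\arc{\SL_1\SL_2}$ contains~$S$. Here I would compute the signed cross product $(\ell_F-\ell_S)\times(S-\ell_S)$, which simplifies to $\sin 2\delta - d\cos\alpha$ and determines the sign of $\theta$; the corresponding dot product simplifies to $d\sin\alpha + 2\sin^2\delta \geq 0$ for $\alpha \in [0, \pi]$, so $|\theta| \leq \pi/2$ throughout $\Delta$. From $\drl \geq 2$ we already have $|\theta| \geq \vartheta_1$ by bullet~1, and in the Case~B setting in which this statement is used, Lemma~\ref{lem:case-b-basics} gives $\delta > \pi/2$, so $\sin 2\delta \leq 0$; combined with $d\cos\alpha \geq 0$ from $\alpha \in [0, \pi/2]$ this forces $\theta \leq 0$, placing $\theta$ in $[-\pi/2, -\vartheta_1]$, which is the counter-clockwise arc $\arc{\SL_0\SL_1}$.

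The main obstacle is the bookkeeping rather than the algebra: identifying which of the two tangent points is labeled $\SL_1$ and which $\SL_2$ on each disk, keeping the reference direction and the orientation convention (counter-clockwise for left-disks, clockwise for right-disks) consistent across the four bullets, and tracking the sign of~$\theta$ for the last statement. Once the angles are measured from the correct reference direction, each bullet reduces to a short and essentially identical law-of-cosines manipulation.
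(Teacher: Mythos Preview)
Your argument for the four bullets is essentially the paper's argument, just written out algebraically: the paper observes that as $S$ moves once around $L_S$, the point $r_S$ describes the circle of radius~$2$ about~$\ell_S$, so $\drl = |r_S\ell_F| < 2$ exactly when $r_S$ lies on the short arc of that circle facing~$\ell_F$, which is the same as your law-of-cosines inequality $\cos\theta > \dl/4$ in the triangle $\ell_S r_S \ell_F$. The two formulations are interchangeable.

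For the last statement your route is genuinely different. The paper gives a one-line geometric argument (the forward tangent at~$S$ makes angle~$\alpha$ with $\overrightarrow{SF}$, and $F$ lies on~$L_F$, which is claimed to be incompatible with $S \in \arc{\SL_2\SL_0}$ when $0 \leq \alpha \leq \pi/2$). You instead compute the cross product $(\ell_F-\ell_S)\times(S-\ell_S) = \sin 2\delta - d\cos\alpha$ and the dot product $d\sin\alpha + 2\sin^2\delta$, and then invoke $\delta > \pi/2$ from Lemma~\ref{lem:case-b-basics} to force the sign. Your restriction to the Case~B setting is not merely a convenience: the last clause of the lemma is in fact false without some such extra hypothesis. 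For instance, at $d = 2.2$, $\alpha = \pi/2$, $\beta = 5\pi/4$ one checks $\drl \approx 2.03 \geq 2$ and $\dl \approx 3.97 < 4$, yet the angle $\theta$ of~$S$ on~$L_S$ measured from the $\ell_S\ell_F$-direction is about~$10^\circ$, which lies in $[\vartheta_1,\pi/2]$ and hence on $\arc{\SL_2\SL_0}$ rather than $\arc{\SL_0\SL_1}$. Since the paper only ever invokes this clause for configurations in~$\BD$ (Lemmas~\ref{lem:monotonicity-all-but-rlr} and~\ref{lem:lrl-rsl-monotone}, and the discussion in Section~\ref{sec:case-b}), your version with the explicit $\delta > \pi/2$ assumption is exactly what is needed, and your computational argument is both correct and arguably tighter than the paper's sketch.
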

\begin{proof}
 Consider the position of~$R_S$ as $S$ moves once around the fixed
 circle~$L_S$. The center~$r_S$ describes a circle of radius two
 around~$\ell_S$.  When $S = \SL_1$, we have $R_S = N^{\tL}$, when $S
 = \SL_2$, we have $R_S = M^{\tL}$.  If $\drl < 2$, we have $R_S \cap
 L_F \neq \emptyset$, and so $S$ must lie on the counter-clockwise
 arc~$\arc{\SL_1\SL_2}$.  If $\drl \geq 2$, we have $R_S \cap L_F =
 \emptyset$, and $S$ must lie on the complementary
 arc~$\arc{\SL_2\SL_1}$.  The same argument applies to \prlr-paths to
 determine the location of~$F$.  We argue similarly for 
 $\dlr < 2$ and $\dlr \geq 2$.

 We observe next that the starting orientation at~$S$, which is the
 forward tangent to $L_S$ at~$S$, and the vector
 $\overrightarrow{SF}$ must make an angle of~$\alpha$.  Since $F \in
 L_F$, this is impossible for $0 \leq \alpha \leq \pi/2$ and $S$ on
 the long counter-clockwise arc~$\arc{\SL_2\SL_0}$, and so $\drl \geq
 2$ with $0 \leq \alpha \leq \pi/2$ implies that $S \in
 \arc{\SL_0\SL_1}$.
\end{proof}

Given the location of the endpoints on the disks~$L_S$ and~$L_F$, we
can strengthen Lemma~\ref{lem:lrl-length-general} by computing the
exact multiple of~$2\pi$ that appears in the length.  The proof, based
on bounding the possible lengths of~$\gamma_1$ and~$\gamma_2$, the
initial arc and the final arc of a \pccc-path, is given in the
appendix.
\begin{lemma} 
  \label{lem:lrl-length-specific}
  We have
  \begin{denseitems}
  \item
    For $(\sigma,\delta) \in A$, 
    $\llrl(\sigma, \delta) = 4\mul + 2\delta - 2\pi$ and 
    $\lrlr(\sigma, \delta)  = 4\mur - 2\delta$;
  \item
    For $(\sigma, \delta) \in B$,
    $\llrl(\sigma, \delta) = 4\mul + 2\delta - 2\pi$ and
    $\lrlr(\sigma, \delta) = 4\mur - 2\delta + 2\pi$.
  \end{denseitems}
\end{lemma}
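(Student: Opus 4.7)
The plan is to pin down the multiple of~$2\pi$ left open by Lemma~\ref{lem:lrl-length-general} by writing each path length as a sum of three explicit arcs and bounding the two end-arcs using Lemma~\ref{lem:ccc-endpoint-locations}. Concretely, any \plrl-path decomposes as
\begin{align*}
  \llrl = \gamma_1 + 2\mul + \gamma_2,
\end{align*}
where $\gamma_1 \in [0, 2\pi)$ is the counter-clockwise arc on~$L_S$ from~$S$ to~$\SL_2$ and $\gamma_2 \in [0, 2\pi)$ the counter-clockwise arc on~$L_F$ from~$\FL_2$ to~$F$; analogously $\lrlr = \gamma_1' + 2\mur + \gamma_2'$ with clockwise arcs on~$R_S$ and~$R_F$. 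Lemma~\ref{lem:lrl-length-general} then gives $\gamma_1 + \gamma_2 = 2\delta + 2\mul + 2k\pi$ and $\gamma_1' + \gamma_2' = 2\mur - 2\delta + 2k'\pi$ for some integers $k, k'$, and the task reduces to showing $k = -1$ in both regions, $k' = 0$ in region~$A$, and $k' = 1$ in region~$B$.

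First, I compute the relevant arc lengths geometrically. Placing $\ell_S$ at the origin and~$\ell_F$ at~$(\dl, 0)$, the two unit disks tangent to~$L_S$ and~$L_F$ have centers~$(\dl/2, \pm\sqrt{4 - \dl^2/4})$, and the tangent points $\SL_2, \SL_1$ sit on~$L_S$ at angles $\pm(\mul - \pi/2)$ from the positive $x$-axis (using $\sin(\pi - \mul) = \dl/4$). Hence the short counter-clockwise arc~$\arc{\SL_1\SL_2}$ has length exactly~$2\mul - \pi$, and its complement has length~$3\pi - 2\mul$; the same formulas hold on~$L_F$, and analogously with~$\mur$ on~$R_S$ and~$R_F$. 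Lemma~\ref{lem:ccc-endpoint-locations} then specifies which arc contains each endpoint. In region~$A$ ($\dlr < 2$ and $\drl < 2$), $S$ and~$F$ lie on the short arcs of all four disks, giving $\gamma_1, \gamma_2 \in [0, 2\mul - \pi]$ and $\gamma_1', \gamma_2' \in [0, 2\mur - \pi]$. In region~$B$ ($\drl \geq 2$ and $\dlr < 2$), $S$ moves to the long counter-clockwise arc of~$L_S$ (restricted further to~$\arc{\SL_0\SL_1}$ via $0 \leq \alpha \leq \pi/2$ from Lemma~\ref{lem:case-b-basics}) while $F$ stays on the short counter-clockwise arc of~$L_F$; conversely, $S$ remains on the short clockwise arc of~$R_S$ while~$F$ moves to the long clockwise arc of~$R_F$. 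A short calculation then yields $\gamma_1 \in [2\mul - \pi, 2\pi)$ and $\gamma_2' \in [2\mur - \pi, 2\pi)$ in region~$B$.

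Substituting these interval constraints into the length equations pins down the integer shifts. In region~$A$, $\gamma_1 + \gamma_2 \in [0, 4\mul - 2\pi]$ together with $\delta \in (\as, \pi - \as)$ and $\mul \leq \pi$ forces $k = -1$; the parallel analysis for \prlr{} forces $k' = 0$, giving the stated formulas $\llrl = 4\mul + 2\delta - 2\pi$ and $\lrlr = 4\mur - 2\delta$. In region~$B$, the shifted interval $\gamma_1 + \gamma_2 \in [2\mul - \pi, 2\mul + \pi)$ combined with $\delta > \pi/2$ from Lemma~\ref{lem:case-b-basics} still yields $k = -1$; similarly, $\gamma_1' + \gamma_2' \in [2\mur - \pi, 2\mur + \pi)$ with $\delta > \pi/2$ excludes $k' = 0$ and forces $k' = 1$, producing the $+2\pi$ shift in the RLR formula.

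I expect the main obstacle to be keeping the arc-location analysis tight in region~$B$: the naive bound $\gamma_i \in [0, 2\pi)$ admits several integer choices of~$k$ and~$k'$, so one must combine the precise length~$3\pi - 2\mul$ (respectively $3\pi - 2\mur$) of the complementary long arc with the constraint $\delta > \pi/2$ from Lemma~\ref{lem:case-b-basics}, and, for $S$ on~$L_S$, the additional constraint $0 \leq \alpha \leq \pi/2$ in~$\BD$, to rule out every adjacent integer shift. The remaining substitutions are elementary trigonometry and are naturally deferred to the appendix.
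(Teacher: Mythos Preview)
Your proposal is correct and follows essentially the same approach as the paper: both arguments bound $\gamma_1+\gamma_2$ (resp.\ $\gamma_1'+\gamma_2'$) using Lemma~\ref{lem:ccc-endpoint-locations} together with the arc length $|\arc{\SL_1\SL_2}| = 2\mul-\pi$, and combine this with the known range of~$\delta$ to select the unique multiple of~$2\pi$. The only cosmetic differences are that, for region~$A$, the paper obtains the upper bound $\gamma_1+\gamma_2 \le 2\mul$ via a loop-completion argument rather than directly from $\gamma_1,\gamma_2 \in [0,2\mul-\pi]$, and then derives the \prlr-formula by the symmetry $\lrlr(\sigma,\delta)=\llrl(\pi-\sigma,\pi-\delta)$; also note that your invocation of $0\le\alpha\le\pi/2$ in region~$B$ is unnecessary (and only holds on~$\BD$), since $\delta>\pi/2$ alone already pins down the integer.
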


Understanding the position of the endpoints also allows us to extend
Lemma~\ref{lem:monotonicity-csc} to include the \plrl-path, at least
when the original configuration is in
case~$\BD$.  We need this to prove that $\dub(d) = 2\pi$ for all $d \geq
2$ by simply appealing to monotonicity.
\begin{lemma}
  \label{lem:monotonicity-all-but-rlr}
  Let $d_1 < d_2$, and $(\alpha, \beta) \in \BD$ (where $\BD$ is
  defined for~$d_1$).  If there is an \plrl-path of length~$\ell$
  from~$(0, 0, \alpha)$ to~$(d_1, 0, \beta)$, then there is a path of
  length~$\ell + (d_2 - d_1)$ from~$(0, 0, \alpha)$ to~$(d_2, 0,
  \beta)$.
\end{lemma}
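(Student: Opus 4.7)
The plan is to mimic the proof of Lemma~\ref{lem:monotonicity-csc}: locate a point on the \plrl-path at which the tangent is horizontal and oriented in the direction of the positive $x$-axis, and insert a horizontal straight segment of length $d_2-d_1$ at that point. The inserted segment has zero curvature, so the modified curve is still a bounded-curvature path; translating the portion of the path that follows the insertion by $(d_2-d_1,0)$ shifts the endpoint from $(d_1,0,\beta)$ to $(d_2,0,\beta)$ and increases the total length by exactly $d_2-d_1$, as required.

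The candidate insertion point will lie on the middle arc. Let $\gamma_1$ and $\gamma_2$ denote the lengths of the initial and final L-arcs of the \plrl-path, respectively. Along the middle arc, which is traversed clockwise (right-turning), the orientation of the tangent decreases continuously from $\alpha+\gamma_1$ at $\SL_2$ to $\alpha+\gamma_1-2\mul$ at $\FL_2$. Hence, by the intermediate value theorem, the middle arc contains a point at which the tangent direction equals $(1,0)$ whenever $0\in[\alpha+\gamma_1-2\mul,\;\alpha+\gamma_1]$; since $\alpha,\gamma_1\geq 0$ makes the upper endpoint non-negative, this reduces to showing that $\alpha+\gamma_1\leq 2\mul$.

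The main obstacle is to verify this inequality under the hypothesis $(\alpha,\beta)\in\BD$. By Lemma~\ref{lem:lrl-length-specific} applied in case~B we have $\llrl=\gamma_1+2\mul+\gamma_2=4\mul+2\delta-2\pi$, so $\gamma_1+\gamma_2=2\mul+2\delta-2\pi$, and hence $\gamma_1\leq 2\mul+2\delta-2\pi$ because $\gamma_2\geq 0$. By Lemma~\ref{lem:case-b-basics} we have $\beta<2\pi-\alpha$, so $\delta=(\beta-\alpha)/2<\pi-\alpha$, and therefore
\[
  \alpha+\gamma_1 \;\leq\; \alpha+2\mul+2\delta-2\pi \;<\; 2\mul-\alpha \;\leq\; 2\mul,
\]
as required. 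Inserting a horizontal $(+x)$-oriented segment of length $d_2-d_1$ at the corresponding point on the middle arc produces a bounded-curvature path of length $\ell+(d_2-d_1)$ from $(0,0,\alpha)$ to $(d_2,0,\beta)$, completing the argument.
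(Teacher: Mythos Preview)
Your proof is correct. Both your argument and the paper's start from the same plan---locate a point where the tangent direction is $(1,0)$ and splice in a horizontal segment---but you verify the existence of that point by a different (and somewhat more direct) route. The paper argues by contradiction: if the orientation never hits $0$ or $2\pi$, then $\beta=\alpha+\gamma_1-2\mul+\gamma_2$ as real numbers, and it then uses the geometric bounds $\gamma_1\leq\mul$ and $\gamma_2\leq 2\mul-\pi$ coming from Lemma~\ref{lem:ccc-endpoint-locations} to deduce $\gamma_1+\gamma_2\leq 2\mul$ and hence $\beta\leq\alpha$, a contradiction. You instead invoke Lemma~\ref{lem:lrl-length-specific} to get the exact identity $\gamma_1+\gamma_2=2\mul+2\delta-2\pi$, combine it with $\delta<\pi-\alpha$ from Lemma~\ref{lem:case-b-basics}, and conclude directly that $\alpha+\gamma_1<2\mul$, which pins down the horizontal tangent on the middle arc via the intermediate value theorem. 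Your route avoids contradiction and identifies the tangent location; the paper's route relies only on the more elementary endpoint-location lemma rather than the exact length formula.
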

\begin{proof}
  We only have to prove that the \plrl-path has a horizontal tangent
  oriented in the positive $x$-direction.  Assume this is not the
  case, so there is no point on the path were the orientation is~$0$
  or~$2\pi$.  The path starts at orientation~$\alpha$, the orientation
  increases to~$\alpha + \gamma_1$, decreases to~$\alpha + \gamma_1 -
  2\mul$, and increases again to~$\beta = \alpha + \gamma_1 - 2\mul +
  \gamma_2$, without ever leaving the open range~$(0, 2\pi)$.  But by
  Lemma~\ref{lem:case-b-basics}, $(\alpha, \beta) \in \BD$ implies
  $\drl \geq 2$ and $0 \leq \alpha \leq \pi/2$ and thus, by
  Lemma~\ref{lem:ccc-endpoint-locations} the point~$S$ lies on the
  counter-clockwise arc~$\arc{\SL_0\SL_1}$ of~$L_S$.  Since $\angle
  \SL_0\l_S\SL_2 = \mul$ and $\angle \SL_1\l_S\SL_2 = 2\pi -\mul$, we
  have $2\mul - \pi \leq \gamma_1 \leq \mul$ and $0 \leq \gamma_2 \leq
  2\mul-\pi$.  This implies that $\gamma_1 + \gamma_2 \leq 2\mul$, and
  so $\beta \leq \alpha$, a contradiction to~$(\alpha, \beta)\in \BD$.
\end{proof}

The explicit expressions for the \plrl-path length in
Lemma~\ref{lem:lrl-length-specific} allow us to study its
derivative, and to show that the length of \plrl-paths is monotone
in~$\alpha$ and~$\beta$, at least for the cases of interest to us.
The calculations involving these derivatives are given in the appendix.
\begin{lemma}
  \label{lem:lrl-changes-alpha-beta}
  We have
  \begin{denseitems}
  \item  For $(\alpha, \beta) \in \AD$, the function $\beta \mapsto
    \llrl(\alpha, \beta)$ is increasing;
  \item 
    For $(\alpha, \beta)\in \BD$, the function $\alpha \mapsto
    \llrl(\alpha, \beta)$ is decreasing, while $\beta \mapsto
    \llrl(\alpha, \beta)$ is increasing; Moreover, if $(\alpha, \beta)
    \in \BD$ and $\beta \geq {3\pi}/{2}$, then
    $\frac{\partial\llrl}{\partial\beta}(\alpha, \beta) \geq 1$.
  \end{denseitems}
\end{lemma}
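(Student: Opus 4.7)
Plan: By Lemma~\ref{lem:lrl-length-specific}, the identical smooth formula $\llrl = 4\mul + 2\delta - 2\pi$ holds on both $\AD$ and $\BD$. Since $\mul$ depends only on $\dl$ and is strictly decreasing in $\dl$, and $\dl$ is a smooth function of $(\alpha,\beta)$ given by~\eqref{eq:dl-alpha-beta} that stays strictly below~$4$ in the interior of each region (otherwise the \plrl-path would not exist), $\llrl$ is $C^1$ there and proving monotonicity reduces to a sign analysis of its partial derivatives. A routine differentiation of~\eqref{eq:dl-alpha-beta}, simplified with the addition formulas, collapses to the clean expressions $\partial\dl^2/\partial\beta = -2d\cos\beta + 2\sin(\beta-\alpha)$ and $\partial\dl^2/\partial\alpha = 2d\cos\alpha - 2\sin(\beta-\alpha)$. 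Combining these with $\partial\delta/\partial\beta = -\partial\delta/\partial\alpha = 1/2$ and the chain rule then yields formulas of the form
\[
\frac{\partial\llrl}{\partial\beta} = K\,\bigl(d\cos\beta - \sin(\beta-\alpha)\bigr) + 1,
\qquad
\frac{\partial\llrl}{\partial\alpha} = -K\,\bigl(d\cos\alpha - \sin(\beta-\alpha)\bigr) - 1,
\]
for an explicit positive coefficient $K = K(\alpha,\beta,d)$.

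With these formulas in hand the two $\BD$ claims fall right out of Lemma~\ref{lem:case-b-basics}, which tells us that every $(\alpha,\beta) \in \BD$ satisfies $\alpha \in [0,\pi/2]$ and $\delta = (\beta-\alpha)/2 > \pi/2$; thus $\cos\alpha \geq 0$ and $\sin(\beta-\alpha) = \sin(2\delta) \leq 0$. The bracket in $\partial\llrl/\partial\alpha$ is therefore nonnegative, so $\partial\llrl/\partial\alpha \leq -1$, giving the $\alpha$-monotonicity on $\BD$. If in addition $\beta \geq 3\pi/2$, then $\cos\beta \geq 0$ as well, so the bracket in $\partial\llrl/\partial\beta$ is also nonnegative and $\partial\llrl/\partial\beta \geq 1$, which is exactly the ``moreover'' clause.

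The harder claims are $\beta$-monotonicity on all of $\AD$ and on the sub-region of $\BD$ with $\pi < \beta < 3\pi/2$: there neither $\cos\beta$ nor $\sin(2\delta)$ has a fixed sign and the bracket in $\partial\llrl/\partial\beta$ may become negative. My plan is to rewrite the desired inequality as a lower bound of the form $\dl\sqrt{16-\dl^2} > c\,(\sin(\beta-\alpha) - d\cos\beta)$, dispatch the sub-case where the right-hand side is nonpositive as immediate, and otherwise square and simplify to a polynomial inequality in the sines and cosines of $\alpha,\beta$ and in~$d$. The final task is to derive this polynomial inequality from the region-defining conditions: $\dlr < 2$ in $\BD$, or $\dlr < 2$ together with $\drl < 2$ in $\AD$, using the explicit expressions~\eqref{eq:dlr-alpha-beta} and~\eqref{eq:drl-alpha-beta}. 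I expect this algebraic reduction to be the main obstacle, which is presumably why the paper relegates the full computation to the appendix.
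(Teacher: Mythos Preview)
Your proposal is correct and follows essentially the same route as the paper: the same explicit partial derivatives (the paper writes them via $U=d-\sin\beta+\sin\alpha$, $V=\cos\beta-\cos\alpha$, but the simplification to $d\cos\alpha-\sin(\beta-\alpha)$ and $d\cos\beta-\sin(\beta-\alpha)$ is identical), the same sign analysis from Lemma~\ref{lem:case-b-basics} for the two $\BD$ claims, and the same ``square and compare'' argument for the remaining $\beta$-monotonicity. One small simplification you may not have anticipated: after squaring, the inequality reduces to $16(U\sin\beta-V\cos\beta)^2>(U^2+V^2)^2$, and the condition $\dlr<2$ alone (which holds on \emph{both} $\AD$ and $\BD$) rewrites as $(U^2+V^2)+4(U\sin\beta-V\cos\beta)<0$ and immediately delivers this---so you will not need $\drl<2$ at all, and the $\AD$ and $\BD$ cases are handled by a single argument.
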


\section{Case A}
\label{sec:case-a}

When $d < 2$, both \plrl- and \prlr-paths exist for any $(\sigma,
\delta) \in \SQUARE$.  We define three functions $\lL$,~$\lR$,
and~$\lC$ on~$\SQUARE$:
\begin{align}
  \label{eq:def-l}
  \lL(\sigma, \delta) &= 4\mul(\sigma,\delta) + 2\delta -
  2\pi\\ 
  \label{eq:def-r}
  \lR(\sigma, \delta) &= 4\mur(\sigma,\delta) - 2\delta\\
  \label{eq:def-c}
  \lC(\sigma, \delta) &= \min \{ \lL(\sigma, \delta), \lR(\sigma, \delta)\}.
\end{align} 
While these functions are defined and continuous everywhere
on~$\SQUARE$, we have shown in Lemma~\ref{lem:lrl-length-specific}
only that $\llrl(\sigma, \delta) = \lL(\sigma, \delta)$ for
$(\sigma,\delta) \in A \cup B$, and $\lrlr(\sigma, \delta) =
\lR(\sigma, \delta)$ for $(\sigma, \delta) \in A$.  

Recall that $\as = \arcsin({d}/{2})$. We define the following
rectangle $\RECTII \subset \SQUARE$:
\[
\RECTII: \quad 0\leq \sigma \leq \pi \quad \text{and} \quad \as \leq \delta
\leq \pi - \as.
\]
Note that $A \subset \RECTII$ (see
Figure~\ref{fig:regions-1-sigma-delta}).  It will be easier to work
with the rectangular domain~$\RECTII$ rather than the curved
region~$A$, as long as we keep in mind that $\lC(\sigma, \delta)$ is
the length of the shortest \pccc-path for $(\sigma, \delta) \in A$
only.

By studying the derivatives of the functions~$\lL(\sigma,\delta)$
and~$\lR(\sigma, \delta)$, we can show that they are monotone in
$\sigma$- and $\delta$-direction.  The calculations are given in the
appendix. 
\begin{lemma}
  \label{lem:l-r-fixed}
  For $(\sigma, \delta) \in \RECTII$, the function
  \begin{denseitems}
  \item $\sigma \mapsto \lL(\sigma,\delta)$ is decreasing, 
  	while $\sigma \mapsto \lR(\sigma,\delta)$ is increasing,
  \item $\delta \mapsto \lL(\sigma,\delta)$ is increasing, 
    while $\delta \mapsto \lR(\sigma,\delta)$ is decreasing.
  \end{denseitems}
\end{lemma}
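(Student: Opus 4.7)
The proof reduces to a sign analysis of the partial derivatives of $\lL$ and $\lR$ via the chain rule. Since each function depends on $(\sigma,\delta)$ only through $\dl$ (respectively $\dr$) together with the linear term $\pm 2\delta$, I combine the derivative expressions coming from \eqref{eq:dl-sigma-delta} and \eqref{eq:dr-sigma-delta} with the fact that $\mul$ and $\mur$ are strictly decreasing functions of $\dl$ and $\dr$, respectively.

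For the $\sigma$-direction the argument is immediate. From \eqref{eq:dl-sigma-delta}, $\partial(\dl^2)/\partial\sigma = 4d\sin\delta\sin\sigma$, and on $\RECTII$ we have $\sin\sigma \geq 0$ (since $0 \leq \sigma \leq \pi$) together with $\sin\delta \geq \sin\as = d/2 > 0$ (since $\as \leq \delta \leq \pi-\as$). Hence $\dl$ is non-decreasing in $\sigma$, so $\mul$, and therefore $\lL = 4\mul + 2\delta - 2\pi$, is non-increasing in $\sigma$. The symmetric computation on $\dr^2$ gives $\partial(\dr^2)/\partial\sigma = -4d\sin\delta\sin\sigma \leq 0$, so $\dr$ is non-increasing, $\mur$ non-decreasing, and $\lR = 4\mur - 2\delta$ non-decreasing in $\sigma$.

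The $\delta$-direction is more subtle because the additive $\pm 2\delta$ fights the chain-rule contribution. Differentiating \eqref{eq:dl-sigma-delta} gives
\[
\partial(\dl^2)/\partial\delta = 4\cos\delta\,(2\sin\delta - d\cos\sigma),
\]
and the key observation on $\RECTII$ is that $2\sin\delta \geq d \geq d\cos\sigma$, so the second factor is non-negative. Thus $\partial\dl/\partial\delta$ has the same sign as $\cos\delta$, and when $\delta \geq \pi/2$ the bound $\partial\lL/\partial\delta \geq 2 > 0$ is immediate. For $\delta < \pi/2$, the monotonicity $\partial\lL/\partial\delta \geq 0$ is equivalent to a pointwise bound of the form $c\,\cos\delta\,(2\sin\delta - d\cos\sigma) \leq \dl\sqrt{16 - \dl^2}$ for an explicit constant $c$; the analogous statement for $\lR$ is handled identically via $\partial(\dr^2)/\partial\delta = 4\cos\delta\,(2\sin\delta + d\cos\sigma)$.

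The main obstacle is this remaining algebraic inequality. I would exploit the factorization $\dl^2 = (2\sin\delta - d\cos\sigma)^2 + d^2\sin^2\sigma$, which follows at once from \eqref{eq:dl-sigma-delta}, to square both sides and rewrite the inequality as a polynomial relation in $\sin\delta$ and $\cos\sigma$. Elementary manipulation, using $\sin\delta \geq d/2$ from the defining constraint of $\RECTII$, then reduces it to an expression that is manifestly nonnegative; the bound is tight only at the corner $(\sigma,\delta) = (\pi,\as)$, which is reassuring. In keeping with the style of the paper I would defer this explicit computation to the appendix, as it contains no conceptual surprises beyond the factorization above.
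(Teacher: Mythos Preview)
Your approach is correct and essentially identical to the paper's: both compute the partial derivatives via the chain rule, dispose of the $\sigma$-direction by the obvious sign argument, and for the $\delta$-direction square the residual inequality and combine the factorization $\dl^{2} = (2\sin\delta - d\cos\sigma)^{2} + d^{2}\sin^{2}\sigma$ with the bound $\sin\delta \geq d/2$ (the paper phrases the latter as $\dl^{2} \leq 4\sin\delta\,(2\sin\delta - d\cos\sigma)$). Your case split $\delta \gtrless \pi/2$ is exactly the paper's split on the sign of the numerator~$Z$, since the factor $2\sin\delta - d\cos\sigma$ is nonnegative on~$\RECTII$.
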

Monotonicity of $\lL$ and~$\lR$ implies that neither has a local
extremum in the interior of~$\RECTII$, and a local extremum can only
occur on the set~$\LER$ of points $(\sigma,\delta)$ with
$\lL(\sigma,\delta) = \lR(\sigma,\delta)$ (see
Figure~\ref{fig:regions-1-rect}).  
\begin{figure}
  \centerline{\includegraphics{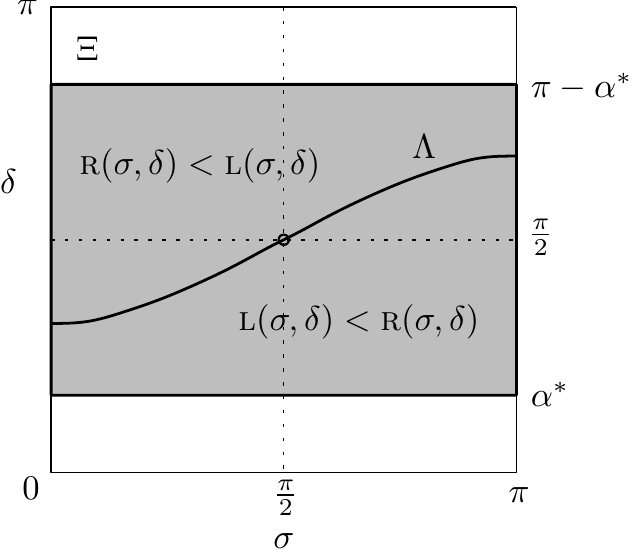}}
  \caption{The curve $\LER$ where $\lL(\sigma, \delta) = \lR(\sigma,
    \delta)$ in rectangle~$\RECTII$ (shaded region), shown for $d =
    1$.}
  \label{fig:regions-1-rect}
\end{figure}
In the appendix we use Lagrange multipliers to prove that in fact the
only local extremum is at~$(\pi/2, \pi/2)$.
\begin{lemma}
  \label{lem:lrl-eq-rlr}
  The function $\lC(\sigma,\delta)$ has no local extremum in the
  interior of~$\RECTII$ except at~$(\pi/2, \pi/2)$.
\end{lemma}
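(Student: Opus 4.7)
The plan is to reduce the problem to the curve $\LER$ and apply Lagrange multipliers. By Lemma~\ref{lem:l-r-fixed}, both $\lL$ and $\lR$ are strictly monotone in each of $\sigma$ and $\delta$ on $\RECTII$. On the open set $\{\lL < \lR\}$ we have $\lC = \lL$ locally, which is strictly monotone and hence has no local extremum; symmetrically on $\{\lL > \lR\}$. Every interior local extremum of $\lC$ therefore lies on $\LER = \{\lL = \lR\}$.

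Since $\nabla(\lL - \lR) \neq 0$ on $\RECTII$ by strict monotonicity, $\LER$ is a smooth curve by the implicit function theorem. If $(\sigma_0, \delta_0) \in \LER$ is an interior local extremum of $\lC$ on $\RECTII$, it is in particular a local extremum of the restriction $\lC|_\LER = \lL|_\LER$, so by the Lagrange multiplier theorem there exists $\lambda$ with $\nabla \lL = \lambda\, \nabla(\lL - \lR)$. Neither gradient vanishes (again by Lemma~\ref{lem:l-r-fixed}), so $\lambda \neq 0,1$ and $\nabla \lL$ is parallel to $\nabla \lR$, i.e.\
\[
  J(\sigma, \delta) \;:=\; \frac{\partial \lL}{\partial\sigma}\, \frac{\partial \lR}{\partial\delta} \;-\; \frac{\partial \lL}{\partial\delta}\, \frac{\partial \lR}{\partial\sigma} \;=\; 0.
\]

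Using $\mul = \pi - 4\arcsin(\dl/4)$, $\mur = \pi - 4\arcsin(\dr/4)$ and Equations~\eqref{eq:dl-sigma-delta} and~\eqref{eq:dr-sigma-delta}, the chain rule yields closed forms for the four partial derivatives. At $(\sigma,\delta) = (\pi/2,\pi/2)$ the identity $\cos\sigma = 0$ forces $\dl = \dr$, $\partial \lL/\partial\sigma = -\partial\lR/\partial\sigma$, $\partial \lL/\partial\delta = 2$, and $\partial \lR/\partial\delta = -2$, so both $\lL = \lR$ and $J = 0$ hold immediately; hence $(\pi/2,\pi/2)$ is a critical point of $\lC$.

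The hard part is uniqueness. I would exploit the symmetry $(\sigma,\delta) \mapsto (\pi-\sigma,\pi-\delta)$: by \eqref{eq:dl-sigma-delta} and \eqref{eq:dr-sigma-delta} it swaps $\dl$ with $\dr$, hence $\lL$ with $\lR$, and it preserves $\LER$; so it suffices to handle $\sigma \in (0, \pi/2]$. After clearing the common denominator $\dl \dr \sqrt{16-\dl^2}\sqrt{16-\dr^2}$ in $J$ and combining with the constraint $\lL - \lR = 0$, I expect the polynomial system to factor so that one factor is $\cos\sigma$, whose vanishing yields $\sigma = \pi/2$ and then, by strict $\delta$-monotonicity of $\lL - \lR$, the unique value $\delta = \pi/2$. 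The complementary factor should be shown to be strictly signed on the rest of the interior of $\RECTII$ using the bounds $\as \leq \delta \leq \pi - \as$ and elementary monotonicity of $\arcsin$. This is precisely the kind of tedious trigonometric manipulation the paper defers to its appendix.
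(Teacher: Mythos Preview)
Your reduction to $\LER$ and the Lagrange setup match the paper exactly; the gap is in your plan for the ``hard part.'' Two of your expectations would lead you astray. First, you will not need the constraint $\lL = \lR$: it involves $\arcsin$, so ``combining with the constraint'' to get a polynomial system is hopeless. The paper instead shows that $\nabla\lL$ and $\nabla\lR$ are \emph{nowhere} parallel in the relevant open quadrant of the interior of $\RECTII$, independently of whether $\lL=\lR$ there. Second, the common factor you will actually pull out of $J$ is $\sin\sigma$, not $\cos\sigma$, and it does not vanish in the interior.

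The simplification you are missing is that the $\sigma$-derivatives have the form $\partial\lL/\partial\sigma = -2d\sin\delta\sin\sigma/D_\tL$ and $\partial\lR/\partial\sigma = +2d\sin\delta\sin\sigma/D_\tR$, where $D_\tL = \dl\sqrt{1-(\dl/4)^2}$ and similarly $D_\tR$. The shared numerator fixes the multiplier as $-D_\tR/D_\tL$, and substituting into the $\delta$-component collapses your determinant $J$ (up to the nonvanishing factor $2d\sin\delta\sin\sigma/(D_\tL D_\tR)$) to the single scalar equation
\[
2D_\tL - 2D_\tR - 8\cos\delta\sin\delta = 0.
\]
In the quadrant $\pi/2 < \sigma < \pi$, $\pi/2 < \delta < \pi - \as$ (to which one reduces via the point symmetry, after noting from Lemma~\ref{lem:l-r-fixed} that $\LER$ avoids the two ``off-diagonal'' quadrants), one has $-8\cos\delta\sin\delta > 0$, and a direct computation gives $16(D_\tL^2 - D_\tR^2) = -8d\cos\sigma\sin\delta\,(16 - 2d^2 - 8\sin^2\delta) > 0$, hence $D_\tL > D_\tR$. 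Both terms on the left are strictly positive, a contradiction. No factoring, no $\arcsin$, no use of the constraint.
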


Since $\lC(\sigma,\delta)$ is continuous, it assumes its maximum
on~$\RECTII$. By Lemma~\ref{lem:lrl-eq-rlr}, this must happen either
at $(\sigma,\delta) = (\pi/2, \pi/2)$, or on the boundary
of~$\RECTII$, at a point $(\sigmaa, \deltaa)$ where
$\lL(\sigmaa,\deltaa) = \lR(\sigmaa, \deltaa)$. This must happen
either on the vertical side $\sigma = \pi, \pi/2 < \delta \leq \pi -
\as$, or on the horizontal side $\delta = \pi - \as, \pi/2 < \sigma
\leq \pi$.  The point $(\sigmaa, \deltaa)$ is unique if we require
$\sigmaa + \deltaa > \pi$, and there is a symmetric point $(\pi -
\sigmaa, \pi - \deltaa)$.  Let us define the function $\lA(d)$ for $0
\leq d < 2$ as
\begin{align}
  \label{eq:def-a}
  \lA(d) & = \llrl(\sigmaa, \deltaa) = \lrlr(\sigmaa, \deltaa).
\end{align}
There is an important breakpoint at $d = \sqrt{2}$:
\begin{lemma}
  \label{lem:max-rectii}
  The maximum $\lA(d)$ occurs with $\sigmaa = \pi$ when $0 \leq d \leq
  \sqrt{2}$, and with $\deltaa = \pi-\as$ when $\sqrt{2} \leq d < 2$.
\end{lemma}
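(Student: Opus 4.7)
The plan is to reduce the question to a boundary analysis on~$\RECTII$ and then pinpoint the boundary maximizer by combining the side-wise monotonicity of Lemma~\ref{lem:l-r-fixed} with explicit endpoint evaluations of~$\lL-\lR$. By Lemma~\ref{lem:lrl-eq-rlr} the only interior local extremum of~$\lC$ in~$\RECTII$ is at~$(\pi/2,\pi/2)$, so by continuity and compactness the maximum on~$\RECTII$ is attained either there or on the boundary~$\partial\RECTII$ at a point of $\LER=\{\lL=\lR\}$. The point symmetry $(\sigma,\delta)\mapsto(\pi-\sigma,\pi-\delta)$ preserves~$\lC$ and swaps the left/right and bottom/top sides of~$\RECTII$, so after restricting to $\sigmaa+\deltaa\geq\pi$ the only remaining candidate sides are the right side $\sigma=\pi$ and the top side $\delta=\pi-\as$, which meet at the corner~$(\pi,\pi-\as)$.

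Next I would evaluate $\lL-\lR$ at the endpoints of each of these two segments, using the distance formulas~\eqref{eq:dl-sigma-delta}--\eqref{eq:drl-sigma-delta} together with the expressions for~$\mul$ and~$\mur$ as functions of~$\dl$ and~$\dr$. On the right side one has $\dl=d+2\sin\delta$ and $\dr=|d-2\sin\delta|$; both endpoints $\delta=\as$ and $\delta=\pi-\as$ satisfy $\dr=0$ and $\dl=2d$, and a short calculation gives $\lL-\lR=-2\pi$ at $\delta=\as$ and $\lL-\lR=2\pi-8\as$ at $\delta=\pi-\as$. On the top side one has $\dl=2d\sin(\sigma/2)$ and $\dr=2d\cos(\sigma/2)$, and the corresponding endpoint values are $2\pi$ at $\sigma=0$ and again $2\pi-8\as$ at $\sigma=\pi$ (matching the corner, as it must). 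By Lemma~\ref{lem:l-r-fixed}, $\lL-\lR$ is strictly monotone along each side (increasing in~$\delta$ on the right, decreasing in~$\sigma$ on the top), so the intermediate value theorem yields a unique zero on a given side precisely when its endpoint values straddle~$0$. The right side does so iff $2\pi-8\as\geq0$, equivalently $d\leq\sqrt2$, and the top side iff $d\geq\sqrt2$. Thus for $d<\sqrt2$ the unique boundary maximizer sits on the right side with $\sigmaa=\pi$; for $d>\sqrt2$ it sits on the top side with $\deltaa=\pi-\as$; and at the breakpoint $d=\sqrt2$ both descriptions collapse to the same corner~$(\pi,3\pi/4)$.

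The remaining step is to check that this boundary critical point actually realizes the global maximum on~$\RECTII$, ruling out the interior extremum $(\pi/2,\pi/2)$ where $\lC=3\pi-4\arcsin(\sqrt{d^2+4}/4)$. This amounts to a closed-form inequality between $\lA(d)$ and $\lC(\pi/2,\pi/2)$ valid throughout $0<d<2$; the gap is sizeable (roughly~$\pi$ at $d=\sqrt2$, where $\lA=5\pi/2$) and persists by continuity. The main obstacle is precisely this last verification: the structural argument (reduce to the boundary, use the point symmetry, apply the intermediate value theorem) is short and clean, but the uniform comparison $\lA(d)>\lC(\pi/2,\pi/2)$ requires explicit closed-form bounds across the whole range and is the principal technical calculation underlying the lemma.
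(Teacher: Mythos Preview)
Your approach is essentially the paper's: evaluate $\lL-\lR$ at the corner $(\pi,\pi-\as)$ and use the side-wise monotonicity of Lemma~\ref{lem:l-r-fixed} to decide which of the two candidate boundary sides carries the crossing with~$\LER$. The paper's proof is terser---it only computes $\lL(\pi,\pi-\as)=4\pi-6\as$ and $\lR(\pi,\pi-\as)=2\pi+2\as$ at the single corner and then invokes Lemma~\ref{lem:l-r-fixed}---whereas you evaluate all four endpoints and apply the intermediate value theorem explicitly, but the underlying logic is identical and your extra endpoint values are correct.

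One clarification: the comparison with the interior point $(\pi/2,\pi/2)$ that you flag as the ``main obstacle'' is \emph{not} part of this lemma. In the paper, $(\sigmaa,\deltaa)$ is by definition the boundary point where $\lL=\lR$ (with $\sigmaa+\deltaa>\pi$), and $\lA(d)=\lC(\sigmaa,\deltaa)$; Lemma~\ref{lem:max-rectii} only locates this boundary point. The inequality $\lA(d)>\lC(\pi/2,\pi/2)$ is handled separately in Lemma~\ref{lem:region1-dist-less-sqrt2} (and there only for $0<d\leq\sqrt2$, which is all that is needed downstream). So the obstacle you identify is real but belongs to a different lemma.
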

\begin{proof}
  We evaluate $\lL(\pi, \pi-\as)$ and $\lR(\pi, \pi-\as)$. Using
  (\ref{eq:dl-sigma-delta}) and~(\ref{eq:dr-sigma-delta}), we have
  \begin{align*}
    \dl(\pi, \pi-\as) &= 2d \\ 
    \dr(\pi, \pi-\as) &= 0, \\
    \intertext{which implies by (\ref{eq:def-l}) and (\ref{eq:def-r}):}
    \lL(\pi, \pi-\as) &= 4\pi-6\as,\\
    \lR(\pi, \pi-\as) &= 2\pi+2\as.
  \end{align*}
  Since $\as = \arcsin(d/2)$, we have $\lR(\pi, \pi-\as) < \lL(\pi,
  \pi-\as)$ for $d < \sqrt{2}$, equality for $d = \sqrt{2}$, and
  $\lR(\pi, \pi-\as) > \lL(\pi, \pi-\as)$ for $d > \sqrt{2}$.  In the
  first case, Lemma~\ref{lem:l-r-fixed} implies that the maximum must
  occur on the vertical side at $\sigma = \pi$, in the last case it
  must occur on the horizontal side at $\delta = \pi - \as$. For $d =
  \sqrt{2}$ the maximum occurs at the corner $(\sigma_A,\delta_A)=(\pi,
  \pi-\as)$.
\end{proof}

Again using Lagrange multipliers, we prove in the appendix that the
functions $d \mapsto \lA(d)$ and $d\mapsto \lA(d) -d $ are monotone.
\begin{lemma}
  \label{lem:a-monotonicity}
  On the interval $0 \leq d \leq \sqrt{2}$, the function
  \begin{denseitems}
  \item $d \mapsto \lA(d)$ is monotonically increasing,
  \item $d \mapsto \lA(d) - d$ is monotonically decreasing.
  \end{denseitems}
\end{lemma}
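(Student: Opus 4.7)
By Lemma~\ref{lem:max-rectii}, for $0 \leq d \leq \sqrt{2}$ the extremal point has $\sigmaa = \pi$, and $\deltaa \in [\pi/2, \pi-\as]$ is determined implicitly by $\lL(\pi,\deltaa) = \lR(\pi,\deltaa)$. The plan is to derive a closed-form expression for $\lA'(d)$ via implicit differentiation, and then verify the two sign inequalities $\lA'(d) > 0$ and $\lA'(d) < 1$ on $(0,\sqrt{2})$, the second of which is equivalent to $(\lA(d)-d)' < 0$.

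Setting $\sigma = \pi$ in \eqref{eq:dl-sigma-delta}--\eqref{eq:dr-sigma-delta} yields $\dl = d + 2\sin\delta$ and $\dr = 2\sin\delta - d$ on the relevant branch $\sin\deltaa \geq d/2$. Substituting into \eqref{eq:def-l}--\eqref{eq:def-r}, the constraint $\lL(\pi,\deltaa) = \lR(\pi,\deltaa)$ reduces to
\[
\arcsin u - \arcsin v = \deltaa - \tfrac{\pi}{2}, \qquad u = \tfrac{d+2\sin\deltaa}{4},\; v = \tfrac{2\sin\deltaa - d}{4}.
\]
Lemma~\ref{lem:l-r-fixed} guarantees $\partial_\delta(\lL-\lR)(\pi,\deltaa) \neq 0$, so the implicit function theorem delivers a smooth $d \mapsto \deltaa(d)$ with $\deltaa'(d) = -\partial_d(\lL-\lR)/\partial_\delta(\lL-\lR)$. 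I would then compute $\lA'(d)$ by differentiating $\lA(d) = \tfrac{1}{2}(\lL+\lR)|_{(\pi,\deltaa(d))}$, substituting $\deltaa'$ and simplifying. Writing $p = (1-u^2)^{-1/2}$ and $q = (1-v^2)^{-1/2}$, the identity $(p+q)^2 - (p-q)^2 = 4pq$ collapses the cross terms into a compact rational expression in $p$, $q$, and $\cos\deltaa$.

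The final step is to establish the two sign inequalities. On the relevant branch one has $\cos\deltaa \leq 0$, and $u > v \geq 0$ implies $p > q > 0$, which controls the sign of the denominator. For the numerator I would use the constraint rewritten in sine form, $u\sqrt{1-v^2} - v\sqrt{1-u^2} = \sin(\deltaa - \pi/2) = -\cos\deltaa$, to eliminate $\cos\deltaa$ and reduce each sign claim to an elementary algebraic comparison in $(u,v)$ on $0 \leq v < u \leq 1$. The boundary values $\lA(0) = 7\pi/3$ (at $\deltaa = \pi/2$, where $u=v=\tfrac12$) and $\lA(\sqrt{2}) = 5\pi/2$ (at $\deltaa = 3\pi/4$, where $v=0$, $u=\tfrac{\sqrt{2}}{2}$) can be read off directly; monotonicity on the open interval then extends to the closed interval by continuity and agrees with Proposition~\ref{prop:dub-cost-less-sqrt2}.

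The main obstacle is this final step: converting the closed-form $\lA'(d)$, which mixes $\cos\deltaa$ with $\sqrt{1-u^2}$ and $\sqrt{1-v^2}$, into inequalities whose signs are manifest. The constraint must be used nontrivially to eliminate $\cos\deltaa$, and the residual algebra---while elementary---is exactly the kind of tedious manipulation the paper defers to the appendix.
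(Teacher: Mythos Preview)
Your setup is correct: at $\sigma=\pi$ the constraint becomes $\arcsin u-\arcsin v=\deltaa-\pi/2$ with $u+v=\sin\deltaa$, $u-v=d/2$, and the implicit function theorem applies because $\partial_\delta(\lL-\lR)(\pi,\deltaa)>0$ by Lemma~\ref{lem:l-r-fixed}. Your derivative formula, after the substitution $-\cos\deltaa=u\sqrt{1-v^2}-v\sqrt{1-u^2}$, does collapse via $(p+q)^2-(p-q)^2=4pq$; carrying it through gives
\[
\lA'(d)=\frac{2(q-p)-4pq\cos\deltaa}{4-2(p-q)\cos\deltaa},
\]
and the two sign claims become $p(2u-1)>q(2v-1)$ and a similar (messier) inequality for $\lA'(d)<1$.

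The paper takes a different route. Rather than computing $\lA'(d)$, it treats $(d,\delta)$ as two free variables and asks whether $\lL(d,\pi,\delta)$ (respectively $\lL(d,\pi,\delta)-d$) can have an extremum subject to the constraint $\lL=\lR$. The Lagrange condition $\nabla\lL=\lambda\nabla\lR$ reduces, with $F_\tL=\sqrt{1-u^2}$ and $F_\tR=\sqrt{1-v^2}$, to the single equation $F_\tR-F_\tL=-2\cos\deltaa$; the paper then contradicts this using $F_\tR+F_\tL>1$, the bound $4\deltaa\geq 2\pi+2d$ extracted from $\lL=\lR$, and a short one-variable calculus argument. For $\lA-d$, the Lagrange condition similarly isolates a single identity whose two sides have opposite signs once one checks $\cos^2\deltaa\leq (F_\tL F_\tR)^2$. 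Having excluded extrema, the boundary values $\lA(0)=7\pi/3<\lA(\sqrt2)=5\pi/2$ finish the argument.

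The two approaches are equivalent in spirit---Lagrange multipliers versus implicit differentiation---but the paper's version buys cleaner endgame algebra: each Lagrange equation is a single scalar identity in $F_\tL,F_\tR,\cos\deltaa$ that contradicts an easy inequality, whereas your direct sign analysis of $\lA'(d)$ still carries two square roots and really does need the constraint (not merely $0\leq v<u\leq 1$) to close. Your plan is sound, and the ``tedious manipulation'' can certainly be pushed through; just be aware that the paper's Lagrange framing is what makes the appendix calculation as short as it is.
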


It remains to decide whether $\lC(d, \pi/2,\pi/2)$ or $\lA(d)$ is larger.
\begin{lemma}
  \label{lem:region1-dist-less-sqrt2}
  For $0 < d \leq \sqrt{2}$, we have
  $\max_{(\sigma,
    \delta)\in\RECTII} \lC(d, \sigma, \delta) = \lA(d)$.
\end{lemma}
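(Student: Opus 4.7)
The plan is to enumerate the candidate maximum locations of~$\lC$ on the compact rectangle~$\RECTII$ and check that each is bounded by~$\lA(d)$. By continuity, $\lC$ attains its maximum on~$\RECTII$. Combining Lemma~\ref{lem:lrl-eq-rlr} (the only interior local extremum of~$\lC$ is~$(\pi/2,\pi/2)$) with Lemma~\ref{lem:l-r-fixed} (on~$\RECTII$, $\lL$ and~$\lR$ are monotone in each coordinate, with opposite senses) reduces the candidate maximizers to three classes: the interior point~$(\pi/2,\pi/2)$, points on the sides of~$\RECTII$ where $\lL=\lR$, and the four corners. Indeed, the restriction of $\min\{\lL,\lR\}$ to any side is a minimum of two monotone functions with opposite slopes, hence it is maximized either at an interior crossing $\lL=\lR$ or at an endpoint.

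Next I would classify the sides. At each corner of~$\RECTII$ either $\dl$ or $\dr$ vanishes, so $\lL$ and $\lR$ admit explicit closed-form values, and a short sign check on $\lL-\lR$ shows that for $0<d<\sqrt{2}$ the two vertical sides $\sigma=0,\pi$ admit a unique crossing (the sign of $\lL-\lR$ flips between their top and bottom corners), while on the two horizontal sides $\delta=\as$ and $\delta=\pi-\as$ the sign of $\lL-\lR$ is constant, so no crossing occurs there. By Lemma~\ref{lem:max-rectii} together with the point symmetry $\lC(\sigma,\delta) = \lC(\pi-\sigma,\pi-\delta)$ (which exchanges $\lL$ and~$\lR$), the two vertical crossings are exactly $(\pi,\deltaa)$ and $(0,\pi-\deltaa)$, both yielding value~$\lA(d)$. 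On each horizontal side the side-max is attained at a corner, giving $\lC(0,\as)=\lC(\pi,\pi-\as)=2\pi+2\as$ by direct computation.

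To finish, I would compare the three candidate values $\lA(d)$, $\lC(\pi/2,\pi/2)$, and $2\pi+2\as$. At the centre, $\dl=\dr=\sqrt{d^{2}+4}$, so $\lC(\pi/2,\pi/2)=4\mul(\pi/2,\pi/2)-\pi$ is a strictly decreasing function of~$d$ (since $\dl$ is increasing and $\mul$ is decreasing in~$\dl$) and equals $7\pi/3$ at $d=0$; combined with Lemma~\ref{lem:a-monotonicity}, which gives $\lA(d)\geq\lA(0)=7\pi/3$, this yields $\lA(d)\geq\lC(\pi/2,\pi/2)$. For the corner comparison, direct evaluation gives $\lA(\sqrt{2})=5\pi/2=(2\pi+2\as)|_{d=\sqrt{2}}$; combining the monotonicity of $d\mapsto\lA(d)-d$ (decreasing, by Lemma~\ref{lem:a-monotonicity}) with the elementary fact that $(2\pi+2\as)-d$ has positive derivative on $[0,\sqrt{2})$, the equality at $d=\sqrt{2}$ spreads to the strict inequality $\lA(d)>2\pi+2\as$ throughout $[0,\sqrt{2})$.

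The main obstacle is the boundary case analysis: one must verify carefully that for $d<\sqrt{2}$ the vertical sides of~$\RECTII$ carry crossings and the horizontal ones do not, and that the competing corner value $2\pi+2\as$ does not exceed~$\lA(d)$. This bookkeeping is delicate because $d=\sqrt{2}$ is precisely the distance at which the corner $(\pi,\pi-\as)$ coincides with the crossing $(\sigmaa,\deltaa)$ on the side $\sigma=\pi$ (Lemma~\ref{lem:max-rectii}), and simultaneously the corner value $2\pi+2\as$ meets~$\lA(d)$, so the two candidate values coalesce at exactly this critical distance.
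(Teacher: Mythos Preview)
Your proposal is correct and follows the same overall strategy as the paper: reduce via Lemmas~\ref{lem:l-r-fixed} and~\ref{lem:lrl-eq-rlr} to a short list of candidate maximizers and compare them, using Lemma~\ref{lem:a-monotonicity} for the key inequality $\lA(d)\geq\lA(0)=7\pi/3>\lC(d,\pi/2,\pi/2)$. The paper is terser: in the paragraph preceding the lemma it argues that the global maximum of $\lC$ on~$\RECTII$, if on the boundary, must lie on the curve $\lL=\lR$ (corners are ruled out because at a corner where, say, $\lL<\lR$, monotonicity still allows one to increase $\lL$ along an adjacent side), so the proof of the lemma itself reduces to the single comparison with~$\lC(\pi/2,\pi/2)$. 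Your explicit corner bookkeeping is therefore not needed, and your detour through the derivative of $(2\pi+2\as)-d$ to get $\lA(d)>2\pi+2\as$ can be replaced by the one-line observation (which the paper records in the proof of Lemma~\ref{lem:a-less-sqrt2}) that $\lA(d)=\lR(\pi,\deltaa)>\lR(\pi,\pi-\as)=2\pi+2\as$, since $\deltaa<\pi-\as$ and $\lR(\pi,\cdot)$ is decreasing.
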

\begin{proof}
  For $0 \leq d \leq \sqrt{2}$ and $(\sigma, \delta) = (\pi/2,
  \pi/2)$, we have
  \begin{align*}
    \dl^{2} &= d^{2} + 4 \quad \text{by~(\ref{eq:dl-sigma-delta})}\\
    \mul &= \pi - \arcsin(\sqrt{d^{2} + 4} / 4)\\
    \lL(d, \pi/2, \pi/2) &= 4\mul + 2\delta - 2\pi = 3\pi - 4\arcsin
    (\sqrt{d^{2} + 4} / 4).
  \end{align*}
  Since $\arcsin$ is an increasing function, $d \mapsto \lL(d, \pi/2,
  \pi/2)$ is a decreasing function. We therefore have
  \[
  \lL(d, \pi/2, \pi/2) < \lL(0, \pi/2, \pi/2) = 7\pi/3.
  \]
  On the other hand, by Lemma~\ref{lem:a-monotonicity}, the function
  $d \mapsto \lA(d)$ is increasing, and so $\lA(d) \geq \lA(0) =
  7\pi/3 > \lL(d, \pi/2, \pi/2)$.
\end{proof}

We now justify that it suffices to study \pccc-paths in case~A, as no
other path type can be shorter.  Since \plsr- and \prsl-paths do not
exist, it is enough to show the following lemma:
\begin{lemma}
  \label{lem:region2-lrl-rlr-shorter}
  For~$(\sigma, \delta) \in A$, we have 
  \begin{align*}
    \llrl(\sigma, \delta) & \leq \llsl(\sigma, \delta) \\
    \lrlr(\sigma, \delta) & \leq \lrsr(\sigma, \delta).
  \end{align*}
\end{lemma}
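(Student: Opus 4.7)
The plan is to exploit the explicit formula $\llrl = 4\mul + 2\delta - 2\pi$ from Lemma~\ref{lem:lrl-length-specific} by deriving a parallel closed form for $\llsl$ in region~$A$. An \plsl-path consists of a CCW arc of length $\phi_S$ on $L_S$, the outer tangent segment (whose length equals $\dl$ since $L_S$ and $L_F$ share the same radius), and a CCW arc of length $\phi_F$ on $L_F$. Letting $\theta_L$ denote the angle of $\overrightarrow{\ell_S\ell_F}$, the tangent points sit at angular position $\theta_L - \pi/2$ on each disk, so $\phi_S = (\theta_L - \alpha)\bmod 2\pi$ and $\phi_F = (\beta - \theta_L)\bmod 2\pi$. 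Hence $\llsl = \dl + 2\delta + 2\pi m$ for some $m \in \{0,1\}$, with $m = 0$ when $\theta_L \in [\alpha, \beta]$ and $m = 1$ otherwise. By the symmetries recorded in the preliminaries it suffices to treat $(\sigma, \delta) \in \AD$, and the twin inequality $\lrlr \leq \lrsr$ will follow from $\llrl \leq \llsl$ by applying the $x$-axis mirror symmetry $(\alpha, \beta) \mapsto (-\alpha, -\beta)$, which swaps $L$- and $R$-disks and preserves region~$A$ (it exchanges $\dlr$ and $\drl$, both $< 2$ in case~$A$).

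The crux is to show $m = 1$ throughout $\AD$. Writing $\overrightarrow{\ell_S\ell_F} = (d - 2\cos\sigma\sin\delta,\; -2\sin\sigma\sin\delta)$ and using that $(\sigma, \delta) \in A$ forces $\sin\delta > d/2 > 0$ (cf.\ Lemma~\ref{lem:region-boundaries}), the $y$-component is non-positive, and strictly negative whenever $\sigma \in (0, \pi)$. In that case $\theta_L \in (\pi, 2\pi)$, and since $\alpha \in [0, \pi]$ on $\Delta$ we have $\theta_L > \alpha$ automatically; it remains to exclude $\theta_L \leq \beta$. For $\beta \leq \pi$ this is immediate, and for $\beta \in (\pi, 2\pi)$ one rotates $\overrightarrow{\ell_S\ell_F}$ by $-\beta$ and simplifies using $\beta - \sigma = \delta$ to obtain the new $y$-coordinate $2\sin^2\delta - d\sin\beta$; this is strictly positive since $\sin\beta \leq 0$ and $\sin\delta > 0$, so $\theta_L - \beta \in (0, \pi)$. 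The boundary case $\sigma = \pi$ is handled by hand: there $\theta_L = 0$ while $\alpha = \pi - \delta > 0$, so $\theta_L \notin [\alpha, \beta]$ and $m = 1$ again.

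With $m = 1$ established, $\llsl = \dl + 2\delta + 2\pi$ throughout case~$A$, and the desired inequality $\llrl \leq \llsl$ reduces to $4\mul \leq \dl + 4\pi$, which is immediate from $\mul \leq \pi$. The companion inequality $\lrlr \leq \lrsr$ follows from the symmetry noted above.

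The main obstacle is the phase computation $m = 1$: both values $m \in \{0, 1\}$ are a priori geometrically possible for an \plsl-path, and one must use the defining inequalities of case~$A$ to force the tangent direction $\theta_L$ out of the interval $[\alpha, \beta]$. Intuitively, $\sin\delta > d/2$ makes $\overrightarrow{\ell_S\ell_F}$ point steeply downward, so the outer tangent line sits on the ``wrong'' side of the endpoints and the \plsl-path is obliged to add a full extra turn of $2\pi$, making it substantially longer than the corresponding \plrl-path.
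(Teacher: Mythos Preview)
Your core computation is sound: the formula $\llsl = \dl + 2\delta + 2\pi m$ with $m \in \{0,1\}$ is correct, and your argument that $m=1$ on $\AD$ (via the sign of the rotated $y$-coordinate $2\sin^2\delta - d\sin\beta$) goes through. Once $m=1$ is established, the comparison $\llsl - \llrl = \dl + 4(\pi-\mul) \geq 0$ is exactly the inequality the paper obtains.

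There is, however, a genuine gap in your symmetry reduction. The $x$-axis mirror $(\alpha,\beta)\mapsto(-\alpha,-\beta)$, which in $(\sigma,\delta)$-coordinates is the point symmetry $(\sigma,\delta)\mapsto(\pi-\sigma,\pi-\delta)$, does \emph{not} map $\AD$ to itself: it exchanges $\AD$ with $A\setminus\AD$ (the upper-left half of $\SQUARE$). Hence from $\llrl\leq\llsl$ on $\AD$ you only obtain $\lrlr\leq\lrsr$ on $A\setminus\AD$, not on $\AD$. No symmetry of $\SQUARE$ preserves the $L/R$ labelling while swapping the two triangles, so your claim ``it suffices to treat $(\sigma,\delta)\in\AD$'' for the first inequality cannot be combined with the mirror to yield both inequalities on $\AD$ (which is what the application in Lemma~\ref{lem:a-less-sqrt2} actually requires). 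The fix is easy---either run the parallel computation for $\lrsr$ versus $\lrlr$ on $\AD$, or extend your $m=1$ argument to all of $A$ (it does hold there, but your case analysis using $\alpha\in[0,\pi]$ and $\theta_L>\beta$ breaks down when $\delta>\sigma$ and needs to be redone).

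By way of comparison, the paper takes a shorter geometric route: Lemma~\ref{lem:ccc-endpoint-locations} already pins $S$ and $F$ to the short arcs $\arc{\SL_1\SL_2}$ and $\arc{\FL_2\FL_1}$, so the \plsl-path is obtained from the \plrl-path by continuing each end arc through $\SL_2$ (resp.\ $\FL_2$) to the outer-tangent point $\SL_0$ (resp.\ $\FL_0$) and replacing the middle arc by the segment of length~$\dl$; the difference is read off directly as $\dl + 4(\pi-\mul)$ without any phase analysis. Your approach trades that geometric bookkeeping for an explicit angle computation, which is perfectly valid but longer, and the paper's use of Lemma~\ref{lem:ccc-endpoint-locations} handles both the $\llrl/\llsl$ and the $\lrlr/\lrsr$ comparisons symmetrically with one line each.
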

\begin{proof}
  Let $\gamma_1$ and $\gamma_2$ be the length of the left-turning arcs
  of an \plrl-path.  By Lemma~\ref{lem:ccc-endpoint-locations}, the
  endpoints~$S$ and~$F$ lie on the counterclockwise arcs
  $\arc{\SL_1\SL_2}$ of~$L_S$ and $\arc{\FL_2\FL_1}$ of~$L_F$ (see
  Figure~\ref{fig:lrl-regions}).

  On the other hand, the \plsl-path turns left on~$L_S$ until~$\SL_0$,
  goes along the tangent to~$\FL_0$, then turns left on~$L_F$ until it
  reaches~$F$.  Since $\angle \SL_2\ell_S\SL_0 = \mul$ and
  $|\SL_0\FL_0| = \dl$, we have
  \begin{align*}
    \llsl - \llrl & = \dl + 2(2\pi - \mul) - 2\mul = \dl + 4(\pi -
    \mul) \geq 0
  \end{align*}
  since $\mul \leq \pi$. The analogous argument shows that $\lrlr \leq \lrsr$.
\end{proof}

Putting everything together, the following two lemmas
describe~$\dub_A(d)$.
\begin{lemma}
  \label{lem:a-less-sqrt2}
  For $0 < d < \sqrt{2}$, $\dub_{A}(d) = \lA(d) - d$.  In other words,
  the maximum is realized by the unique point~$(\sigmaa, \deltaa)$ on
  the segment $\sigma = \pi$, $\pi/2 \leq \delta \leq \pi - \as$ where
  $\lL(d, \sigmaa,\deltaa) = \lR(d, \sigmaa, \deltaa)$.  We have
  $\lA(d) > 2\pi + 2\as$.
\end{lemma}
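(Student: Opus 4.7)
The plan is to assemble the earlier lemmas of Section~\ref{sec:case-a} and verify that the maximizer on the enlarged domain $\RECTII$ actually lies inside $A$. First, on $\AD$ the shortest Dubins path has length exactly $\lC$: neither a \plsr- nor a \prsl-path exists in case~A, Lemma~\ref{lem:region2-lrl-rlr-shorter} shows that the \plrl- and \prlr-paths are at least as short as the \plsl- and \prsr-paths, and Lemma~\ref{lem:lrl-length-specific} identifies their lengths on $A$ with $\lL$ and $\lR$. Hence $\ell(d,\sigma,\delta) = \lC(\sigma,\delta)$ for $(\sigma,\delta) \in A$, and $\dub_A(d) = \sup_{(\sigma,\delta) \in \AD} \lC(\sigma,\delta) - d$.

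Next, since $A \subset \RECTII$, I would bound this supremum by $\max_\RECTII \lC$, which by Lemma~\ref{lem:region1-dist-less-sqrt2} equals $\lA(d)$ for $0<d\leq\sqrt{2}$, achieved at some $(\sigmaa, \deltaa)$. By Lemma~\ref{lem:max-rectii} the maximizer for $d<\sqrt{2}$ sits on the vertical side $\sigmaa = \pi$. Uniqueness of $\deltaa$ on the segment $\{\sigma=\pi,\ \pi/2\leq \delta\leq \pi-\as\}$ is immediate from Lemma~\ref{lem:l-r-fixed}: $\lL(\pi,\cdot)$ is increasing and $\lR(\pi,\cdot)$ is decreasing, so $\lL-\lR$ is strictly monotone and has at most one zero.

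The critical step is to show that $(\pi,\deltaa)$ actually lies in the \emph{open} region $A$, not merely in $\RECTII$, so that the two maxima agree and the supremum is attained. Plugging $\sigma=\pi$ into \eqref{eq:dlr-sigma-delta} and~\eqref{eq:drl-sigma-delta} gives $\dlr^2 = \drl^2 = d^2 + 4\cos^2\delta$, which is strictly less than $4$ iff $\pi/2 < \delta < \pi-\as$. So I would verify $\pi/2 < \deltaa < \pi-\as$ strictly by evaluating $\lL-\lR$ at the two endpoints of the segment. At $\delta=\pi-\as$, the computation in the proof of Lemma~\ref{lem:max-rectii} gives $\lL(\pi,\pi-\as) = 4\pi-6\as$ and $\lR(\pi,\pi-\as) = 2\pi+2\as$, and $d<\sqrt{2}$ forces $\as<\pi/4$, hence $\lL>\lR$ strictly. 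At $\delta=\pi/2$, equations~\eqref{eq:dl-sigma-delta}--\eqref{eq:dr-sigma-delta} give $\dl=d+2$ and $\dr=2-d$, from which $\lL(\pi,\pi/2)-\lR(\pi,\pi/2) = 16\arcsin\!\bigl((2-d)/4\bigr) - 16\arcsin\!\bigl((d+2)/4\bigr) < 0$ for $d>0$. Strict monotonicity then places $\deltaa$ strictly between the two endpoints, so $(\pi,\deltaa) \in A$ and $\dub_A(d) = \lA(d) - d$.

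Finally, the lower bound $\lA(d) > 2\pi+2\as$ follows for free: by Lemma~\ref{lem:l-r-fixed}, $\lR(\pi,\cdot)$ is strictly decreasing in $\delta$, and we just established $\deltaa < \pi-\as$, so $\lA(d) = \lR(\pi,\deltaa) > \lR(\pi,\pi-\as) = 2\pi+2\as$. The only real obstacle is the strictness $\deltaa<\pi-\as$, which is needed both to place the maximizer inside $A$ and to get the strict lower bound on $\lA(d)$; everything else is bookkeeping over the previous lemmas.
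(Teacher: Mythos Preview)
Your proof is correct and follows essentially the paper's approach: invoke Lemmas~\ref{lem:region2-lrl-rlr-shorter} and~\ref{lem:region1-dist-less-sqrt2}, then use the computation from the proof of Lemma~\ref{lem:max-rectii} to get $\lA(d)=\lR(\pi,\deltaa)>\lR(\pi,\pi-\as)=2\pi+2\as$. You are in fact more careful than the paper in explicitly verifying $(\pi,\deltaa)\in A$ via the endpoint sign check (the paper simply asserts $(\sigmaa,\deltaa)\in\AD$); note that your coefficient~$16$ should be~$4$---the displayed formula for $\mul$ in Section~\ref{sec:ccc-paths} carries a stray factor of~$4$, and the correct expression is $\mul=\pi-\arcsin(\dl/4)$---but this does not affect the sign of $\lL-\lR$ at $\delta=\pi/2$.
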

\begin{proof}
  By Lemma~\ref{lem:region1-dist-less-sqrt2}, we have
  $\max_{(\sigma,\delta) \in \RECTII} \lC(d, \sigma, \delta) =
  \lA(d)$, and the maximum is assumed at the point $(\sigmaa,
  \deltaa)\in \AD$.  By Lemma~\ref{lem:region2-lrl-rlr-shorter}, we
  have $\l(d, \sigma, \delta) = \lC(d, \sigma,\delta)$ for $(\sigma,
  \delta) \in \AD$.  Since $(\sigmaa, \deltaa) \in \AD \subset
  \RECTII$, this means that $\dub_{A}(d) = \sup_{(\sigma,\delta) \in
    \AD}\l(d, \sigma, \delta)-d = \lC(d, \sigmaa, \deltaa)-d =
  \lA(d)-d$.  Finally, the last inequality follows from $\lA(d) >
  \lC(\pi, \pi-\as) = \lR(\pi,\pi -\as) = 2\pi+ 2\as$, as we
  observed in the proof of Lemma~\ref{lem:max-rectii}.
\end{proof}

\begin{lemma}
  \label{lem:a-larger-sqrt2}
  For $\sqrt{2} \leq d < 2$, we have $\dub_{A}(d) \leq \max\{2\pi,
  \dub_{B}(d)\}$.
\end{lemma}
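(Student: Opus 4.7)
The plan is to reduce the supremum of~$\ell$ over~$\AD$ to a point on the common boundary between~$\AD$ and~$\BD$, and then exploit that this boundary lies in~$\BD$. By Lemma~\ref{lem:region2-lrl-rlr-shorter}, $\ell(d,\sigma,\delta) = \lC(\sigma,\delta)$ on~$\AD$, so $\dub_A(d) + d = \sup_{\AD} \lC$; by continuity this supremum coincides with the one over the closure~$\overline{\AD}$. The monotonicities of Lemma~\ref{lem:l-r-fixed} force~$\lL$ and~$\lR$ to move in opposite directions in both~$\sigma$ and~$\delta$, so $\lC = \min\{\lL,\lR\}$ attains its maximum on any closed subregion of~$\RECTII$ either on the ridge curve $\LER = \{\lL = \lR\}$ or on the region's boundary.

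For $\sqrt{2} \leq d < 2$, Lemma~\ref{lem:max-rectii} places the maximum of~$\lC$ over~$\RECTII$ at a point~$(\sigmaa, \pi - \as)$; plugging $\delta = \pi - \as$ into~\eqref{eq:dlr-sigma-delta} and~\eqref{eq:drl-sigma-delta} shows that this point lies in the interior of~$\BD$ as long as $\sigmaa \in (0,\pi)$. Since the upper boundary of~$\AD$ is the curve $\delta = \pi - \deltalr(\sigma)$, which lies strictly below $\delta = \pi - \as$ except at $\sigma \in \{0,\pi\}$, tracing~$\LER$ from $(\sigmaa, \pi - \as)$ downward into~$\RECTII$ meets this upper boundary at a point~$(\sigma_1, \delta_1)$ satisfying both $\lL(\sigma_1,\delta_1) = \lR(\sigma_1,\delta_1)$ and $\drl(\sigma_1,\delta_1) = 2$. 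This point realizes $\sup_{\overline{\AD}} \lC$ and belongs to~$\BD$, since region~$B$ is closed along~$\drl = 2$.

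At~$(\sigma_1, \delta_1) \in \BD$, Lemma~\ref{lem:region3-lrl-rsl-shorter} gives $\ell = \min\{\llrl, \lrsl\}$, and $\llrl(\sigma_1,\delta_1) = \lL(\sigma_1,\delta_1) = \sup_{\AD} \lC$. If $\llrl \leq \lrsl$ here, then $\dub_B(d) + d \geq \ell(\sigma_1,\delta_1) = \sup_{\AD} \lC$, which gives $\dub_A(d) \leq \dub_B(d)$ immediately. The main obstacle is the complementary subcase $\lrsl < \llrl$ at~$(\sigma_1,\delta_1)$: there we must separately establish $\llrl(\sigma_1,\delta_1) \leq 2\pi + d$, which yields $\dub_A(d) \leq 2\pi$. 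This will require an explicit analysis of~$\lrsl$ along the A-B curve, using that $\drl = 2$ makes the straight segment of the \prsl-path vanish so that $\lrsl$ reduces to the sum of the two arc lengths, combined with the equation $\lL = \lR$ from the ridge to pin down~$(\sigma_1,\delta_1)$. In either subcase, we conclude $\dub_A(d) \leq \max\{2\pi, \dub_B(d)\}$, as claimed.
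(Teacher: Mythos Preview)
Your overall strategy---reduce $\sup_{\AD}\lC$ to a boundary point on the curve $\drl=2$ that simultaneously lies on the ridge $\LER$, then compare with $\ell$ inside~$\BD$---matches the paper's. But two genuine gaps remain.

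First, and most importantly, the ``main obstacle'' you flag does not actually occur, and the paper closes it in one line. At the point $(\sigma_1,\delta_1)$ you have $\drl=2$, so $R_S$ and $L_F$ are tangent; the \prsl-path there is an $R$-arc on $R_S$ followed immediately by an $L$-arc on $L_F$ with no straight segment. But this is exactly the (limiting) \prlr-path from the $A$-side, since $L_F$ is tangent to both $R_S$ and $R_F$ and serves as the middle disk. Hence $\lrsl(\sigma_1,\delta_1)=\lrlr(\sigma_1,\delta_1)=\lR(\sigma_1,\delta_1)$, and because $(\sigma_1,\delta_1)\in\LER$ this equals $\lL(\sigma_1,\delta_1)=\llrl(\sigma_1,\delta_1)$. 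So $\lrsl=\llrl$ at this point and the subcase $\lrsl<\llrl$ never arises; no ``explicit analysis'' is needed. Your plan to bound $\llrl(\sigma_1,\delta_1)$ by $2\pi+d$ in that subcase is both unnecessary and not carried out.

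Second, you silently assume the maximum of $\lC$ over $\overline{\AD}$ lies on the upper boundary $\drl=2$. The paper does not take this for granted: by Lemma~\ref{lem:lrl-eq-rlr} the maximum is either at $(\pi/2,\pi/2)$ or on $\partial\overline{\AD}\cap\LER$, and the latter could a priori fall in the lower half $\delta<\pi/2$. The paper disposes of both possibilities with the explicit estimate $\lC(d,\sigma,\delta)\leq\lL(d,\pi/2,\pi/2)=3\pi-4\arcsin(\sqrt{d^2+4}/4)$, which is decreasing in $d$ and already below $2\pi+\sqrt{2}$ at $d=\sqrt{2}$. Your argument needs this (or an equivalent) step before you can conclude the maximum sits on the $A$--$B$ boundary.
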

\begin{proof}
  Let $\bar{A}$ be the closure of~$\AD$. Since $\bar{A}$ is compact
  and $\lC$ is continuous, there is a point $(\sigma, \delta) \in
  \bar{A}$ where $\lC(\sigma,\delta)$ assumes its maximum.  By
  Lemma~\ref{lem:lrl-eq-rlr} this is necessarily a point where
  $\lL(\sigma, \delta) = \lR(\sigma, \delta)$, and either
  $(\sigma,\delta) = (\pi/2, \pi/2)$, or $(\sigma,\delta)$ lies on the
  boundary of~$\bar{A}$.  By Lemma~\ref{lem:max-rectii}, it cannot
  occur on the vertical side of~$\RECTII$.
  
  Assume first that $\delta < \pi/2$. By Lemma~\ref{lem:l-r-fixed}, we
  must then have $\sigma < \pi/2$. Using Lemmas~\ref{lem:l-r-fixed}
  and~\ref{lem:lrl-changes-alpha-beta}, we have
  \begin{align*}
    \lC(d, \sigma, \delta) & \leq \lL(d, \sigma, \delta) \leq \lL(d, \delta,
    \delta) \leq \lL(d, \pi/2, \pi/2).
  \end{align*}
  We observed in the proof of Lemma~\ref{lem:region1-dist-less-sqrt2} that
  $\lL(d, \pi/2, \pi/2)$ is a decreasing function of~$d$.  For $d = \sqrt{2}$,
  we already have $\lL(\sqrt{2}, \pi/2, \pi/2) = 3\pi - 4 \arcsin(\sqrt{6}/4) <
  2\pi + \sqrt{2}$, and so $\lC(d, \sigma, \delta) \leq d + 2\pi$.  The same
  argument covers the case where $(\sigma, \delta) = (\pi/2, \pi/2)$.

  It remains to consider the possibility that $\delta > \pi/2$.  In
  this case $(\sigma, \delta)$ lies on the common boundary of $\AD$
  and~$\BD$.  On this boundary we have $\drl = 2$, so $R_S$ and $L_F$
  touch. Note that in this case $(\sigma,\delta)$ lies in $\BD$, not
  in $\AD$.  It remains to observe that then the \prlr-path is
  identical to the \prsl-path, so we have $\lrsl(\sigma, \delta) =
  \llrl(\sigma, \delta)$.  We will prove in
  Lemma~\ref{lem:region3-lrl-rsl-shorter} that in $\BD$ these two path
  types are always shortest, and so $\lC(d, \sigma, \delta) \leq
  \dub_{B}(d) + d$.
\end{proof}

%%%%%%%%%%%%%%%%%%%%%%%%%%%%%%%%%%%%%%%%%%%%%%%%%%%%%%%%%%%%%%%%%%%%%%

\section{Case B}
\label{sec:case-b}

It was proven by Goaoc et al.~\cite{gkl-bcsp-2010} that for any
\pcsc-path type (that is, one of the types \plsr, \prsl, \plsl, or
\prsr), the length of a path of this type from $(0, 0, \alpha)$ to
$(d, 0, \beta)$ is differentiable at any point $(\alpha,\beta) \in
\SQUAREAB$ where such a path exists and both its circular arcs have
non-zero length.  For the case of \prsl-paths, they prove specifically
that
\begin{align}
  \label{eq:rsl-dalpha}
  \frac{\partial}{\partial\alpha} \lrsl(\alpha, \beta) & = 
  1 - \cos \gamma_\tR\\
  \label{eq:rsl-dbeta}
  \frac{\partial}{\partial\beta} \lrsl(\alpha, \beta) & = 
  1 - \cos \gamma_{\tL},
\end{align}
where $\gamma_{\tR}$ and~$\gamma_{\tL}$ are the lengths of the
right-turning and the left-turning circular arc on the path.

We recall that case~B is the situation where $\dlr(\alpha, \beta) < 2$
and $\drl(\alpha, \beta) \geq 2$.  For $0 < d < 2$,
Lemma~\ref{lem:case-b-alpha-monotone} gives an explicit description of
the region~$\BD$, using the two functions~$\brl(\alpha)$
and~$\blr(\alpha)$. Let us define two extended regions:
\begin{align*}
%  \label{eq:def-bc}
  \Bc & = \big\{ (\alpha, \beta) \mathrel{\big|} 0 \leq \alpha \leq \as, \;
  \brl(\alpha) \leq \beta \leq 2\pi - \alpha\big\}, \\
  \Bb & = \big\{ (\alpha, \beta) \mathrel{\big|} 0 \leq \alpha \leq \as, \;
  \brl(\alpha) \leq \beta \leq \blr(\alpha) \big\}.
\end{align*}
So $\Bb$ is the closure of~$\BD$, while $\Bc$ is the union~$\BD \cup
\CD_{2}$ (see Figure~\ref{fig:regions-1-alpha-beta}).

We now investigate where the three segments of an \prsl-path can
vanish in~$\Bc$: First, the $S$-segment vanishes exactly if $\drl =
2$.  This happens exactly on the lower boundary of~$\Bc$.  By
Lemma~\ref{lem:ccc-endpoint-locations}, $S$~lies on the
arc~$\arc{\SL_0\SL_1}$ of~$L_S$ (see
Figure~\ref{fig:lrl-rlr-regions}), and so $0 \leq \gamma_{\tR} \leq
2\as$.  If $\gamma_{\tR} = 0$, then we must have $S = \SL_0$ and
therefore $F = \FL_0$.  This is the case $(\alpha, \beta) = (0,
2\pi)$.  Finally, if $\gamma_{\tL} = 0$, then~$F$ must lie on the
arc~$\arc{\FL_1 \FL_0}$, and we have $2\pi - \beta \leq \alpha$ (see
Figure~\ref{fig:gammal}).  Equality holds only for $S = \SL_1$, $F =
\FL_1$, which is the case $(\alpha, \beta) = (\as, 2\pi-\as)$.  In all
other cases, $2\pi - \beta < \alpha$ is a contradiction to $(\alpha,
\beta) \in \Delta$.
\begin{figure}
  \centerline{\includegraphics{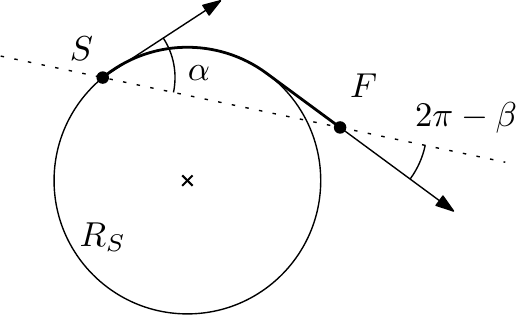}}
  \caption{When $\gamma_{\tL}$ vanishes.}
  \label{fig:gammal}
\end{figure}

The above implies that $\lrsl(\alpha, \beta)$ is differentiable in any
point in the interior of~$\Bc$. The function is continuous everywhere
except at the two points $(\alpha, \beta) = (0, 2\pi)$ and
$(\alpha,\beta) = (\as, 2\pi - \as)$.  At the first point, the
\prsl-path degenerates to the line segment~$\overline{SF}$ of
length~$d$, while the limit of $\lrsl(\alpha, \beta)$ for $(\alpha,
\beta) \rightarrow (0, 2\pi)$ is~$d + 2\pi$.  For the second point
$(\as, 2\pi - \as)$, consider Figure~\ref{fig:alpha-star}(a).  At this
point, both the straight segment and the left-turning arc vanish at
the same time, and the length of the path is only~$|\arc{SF}| = 2\as$.
However, for $(\alpha, \beta) \rightarrow (\as, 2\pi -\as)$, the limit
of $\lrsl(\alpha, \beta)$ is $|\arc{SF}| + 2\pi = 2\as + 2\pi$.  We
observe that this is exactly the value of $\lrlr(\as, 2\pi-\as)$, as
the final right-turning arc of the \prlr-path vanishes.

We therefore define the following function on~$\Bc$:
\begin{align*}
%  \label{eq:def-rslp}
  \lrslp(\alpha, \beta) & = \begin{cases}
    d + 2\pi & \text{for $(\alpha, \beta) = (0, 2\pi)$} \\
    2\as + 2\pi & \text{for $(\alpha, \beta) = (\as, 2\pi-\as)$} \\
    \lrsl(\alpha, \beta) & \text{else}
  \end{cases}
\end{align*}

\paragraph{Overview.} The main goal of this section is to show that
for $\sqrt{2} \leq d < 2$, $\dub_B(d)$ is determined by a unique point
$(\alpha_B, \beta_B)$ on the curve $\blr$ such that $\lrslp(\alpha_B,
\beta_B) = \llrl(\alpha_B, \beta_B)$. To prove this, we first show
that the function~$\beta \mapsto \lrslp(\alpha, \beta)$ is increasing
(Lemma~\ref{lem:rsl-changes-alpha-beta}). Since the function~$\beta
\mapsto \llsl(\alpha, \beta)$ is increasing in~$\BD$ as well
(Lemma~\ref{lem:lrl-changes-alpha-beta}), 
the function $\min\{\lrslp(\alpha, \beta), \llrl(\alpha,
\beta)\}$ must assume its maximum on the curve~$\blr$ in $\Bb$.

Lemma~\ref{lem:region3-lrl-rsl-shorter} shows that $\dub_B(d)$ is
indeed determined by the \prsl-path and the \plrl-path only. Finally,
we will prove that the function $d \mapsto \dub_B(d)$ is monotonically
decreasing.

\medskip
We start by studying the derivatives of $\lrslp(\alpha, \beta)$ to
show monotonicity. The short calculations are in the appendix.
\begin{lemma}
  \label{lem:rsl-changes-alpha-beta}
  For $(\alpha, \beta) \in \Bc$, the function
  \begin{denseitems}
  \item $\alpha \mapsto \lrslp(\alpha,\beta)$ is increasing,
  \item $\beta \mapsto \lrslp(\alpha,\beta)$ is increasing,
  \item $\alpha \mapsto \lrslp(\alpha, 2\pi-\alpha)$ is increasing,
  \item $\lrslp(\alpha, \beta) \leq \lrslp(\as, 2\pi-\as) = 2\as + 2\pi$.
  \end{denseitems}
\end{lemma}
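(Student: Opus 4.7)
The plan is to reduce everything to the derivative formulas of Goaoc et al., Eqs.~(\ref{eq:rsl-dalpha}) and~(\ref{eq:rsl-dbeta}), which are valid at every interior point of~$\Bc$ where both arcs of the \prsl-path have positive length. Since $1-\cos\gamma \geq 0$ for any~$\gamma$, both partial derivatives of~$\lrsl$ are automatically non-negative throughout this interior, giving items~(i) and~(ii); the two isolated discontinuity points $(0,2\pi)$ and $(\as, 2\pi-\as)$, where $\lrslp$ is extended by the assigned values $d+2\pi$ and $2\as+2\pi$, match the limits from inside $\Bc$, so the monotonicity extends cleanly to the closure.

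For~(iii) I would apply the chain rule along the diagonal $\beta = 2\pi - \alpha$:
\[
\frac{d}{d\alpha}\lrslp(\alpha, 2\pi-\alpha)
  = (1-\cos\gamma_\tR) - (1-\cos\gamma_\tL)
  = \cos\gamma_\tL - \cos\gamma_\tR.
\]
The orientation constraint along an \prsl-path gives $\gamma_\tL - \gamma_\tR \equiv 2\pi - 2\alpha \pmod{2\pi}$, and continuity combined with the boundary behavior ($\gamma_\tR \to 0$, $\gamma_\tL \to 2\pi$ as $\alpha\to 0^+$, and $\gamma_\tR \to 2\as$, $\gamma_\tL \to 2\pi$ as $\alpha \to \as^-$) pins down the branch $\gamma_\tL - \gamma_\tR = 2\pi - 2\alpha$ throughout $(0, \as)$. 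Substituting and using sum-to-product gives $\cos\gamma_\tL - \cos\gamma_\tR = 2\sin(\gamma_\tR - \alpha)\sin\alpha$, so non-negativity of the derivative reduces to $\sin(\gamma_\tR - \alpha) \geq 0$. I would establish this by writing the tangency equation as $d\cos\theta + 2\cos\alpha\sin\theta = 2$ for the polar angle~$\theta$ of the right tangent point on~$R_S$, setting $\xi := \pi/2 - \theta$ so that $\gamma_\tR - \alpha = \xi$, and verifying $\xi \in [0, \as]$ along the diagonal; the boundary values are immediate, and the interior containment follows from an implicit-function argument on the tangency equation.

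Finally, (iv) is a straightforward composition of (ii) and (iii): for any $(\alpha,\beta) \in \Bc$ we have $\beta \leq 2\pi - \alpha$, so (ii) gives $\lrslp(\alpha, \beta) \leq \lrslp(\alpha, 2\pi - \alpha)$, and since $\alpha \leq \as$, (iii) gives $\lrslp(\alpha, 2\pi - \alpha) \leq \lrslp(\as, 2\pi - \as) = 2\as + 2\pi$. The main technical obstacle in the lemma is the branch identification and the bound $\xi \in [0, \as]$ in step~(iii); everything else follows from the Goaoc et al.\ formulas together with routine trigonometry, which is presumably why these calculations are deferred to the appendix.
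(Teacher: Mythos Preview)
Your handling of (i), (ii), and (iv) matches the paper exactly: the Goaoc et al.\ derivatives $1-\cos\gamma_\tR$ and $1-\cos\gamma_\tL$ are non-negative in the interior of~$\Bc$, and (iv) follows from (ii) and (iii). The only divergence is in (iii). You correctly reduce to showing $\cos\gamma_\tL - \cos\gamma_\tR \geq 0$ and correctly identify the branch $\gamma_\tL - \gamma_\tR = 2\pi - 2\alpha$, arriving at the quantity $2\sin\alpha\sin(\gamma_\tR - \alpha)$. You then propose an implicit-function argument on the tangency equation to pin down $\gamma_\tR - \alpha \in [0,\as]$.

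The paper gets the same expression more directly by exploiting the symmetry of the diagonal $\beta = 2\pi - \alpha$: in this case the configuration is mirror-symmetric about $x = d/2$, so $r_S$ and $\ell_F$ are reflections of each other, and the inner tangent segment $\overline{T_ST_F}$ makes some angle~$\zeta \geq 0$ with the $x$-axis. Reading off the picture then gives $\gamma_\tR = \alpha + \zeta$ and $\gamma_\tL = (2\pi - \alpha) + \zeta$, so $\cos\gamma_\tL - \cos\gamma_\tR = \cos(\alpha - \zeta) - \cos(\alpha + \zeta) = 2\sin\alpha\sin\zeta \geq 0$. Your $\xi$ is precisely the paper's $\zeta$, but the paper obtains $\zeta \geq 0$ as an immediate geometric fact (the tilt angle of the tangent) rather than through an analytic bound, which eliminates what you flagged as the ``main technical obstacle.''
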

Let us define the following function~$\lB(d, \alpha, \beta)$ on~$\Bc$:
\begin{align}
  \label{eq:def-b}
  \lB(d, \alpha, \beta) & = \min \{ \lrslp(d, \alpha, \beta), \, 
  \llrl(d, \alpha, \beta) \}.
\end{align}
Our goal will be to determine $\lB(d) = \sup_{(\alpha, \beta) \in \BD}
\lB(d, \alpha, \beta)$, and then to show that $\dub_{B}(d) =
\lB(d)-d$.  Since~$\lB(d, \alpha, \beta)$ is continuous, we have
$\lB(d) = \max_{(\alpha, \beta) \in \Bb} \lB(d, \alpha, \beta)$.  By
Lemmas~\ref{lem:lrl-changes-alpha-beta}
and~\ref{lem:rsl-changes-alpha-beta}, we have $\lB(d, \alpha, \beta)
\leq \lB(d, \alpha, \blr(\alpha))$ for any $(\alpha, \beta) \in \Bb$,
and so
\[
\lB(d) = \max_{(\alpha, \beta) \in \Bb} \lB(d, \alpha, \beta) = 
\max_{0 \leq \alpha \leq \as} \lB(d, \alpha, \blr(\alpha)).
\]
We now show that $\dub_B(d) = \lB(d) - d$ by showing that the
\plrl-path or the \prsl-path is shorter than any other path type.
Since there is no \plsr-path in case~B, it suffices to exclude path
types~\plsl, \prlr, and~\prsr.  The details can be found in the
appendix.
\begin{lemma}
  \label{lem:region3-lrl-rsl-shorter}
  For $(\alpha, \beta) \in \BD$ we have
  $\llsl(\alpha, \beta) \geq \llrl(\alpha, \beta),$
  $\lrlr(\alpha, \beta) \geq \llrl(\alpha, \beta),$ and
  $\lrsr(\alpha, \beta) \geq \lrsl(\alpha, \beta)$.
\end{lemma}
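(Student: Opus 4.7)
The plan is to verify each of the three inequalities separately, using explicit length formulas. The first two are straightforward; the third is where the main work lies.

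\textbf{For $\llsl \geq \llrl$:} I would essentially repeat the proof of Lemma~\ref{lem:region2-lrl-rlr-shorter}. By Lemma~\ref{lem:ccc-endpoint-locations}, in case~B the endpoints $S$ and $F$ of the \plrl-path lie on the \emph{complementary} arcs $\arc{\SL_0\SL_1}$ of~$L_S$ and $\arc{\FL_1\FL_0}$ of~$L_F$, rather than on the arcs used in case~A. However, the identity $\angle\SL_2\ell_S\SL_0 = \mul$ still holds, and a direct arc count shows that the two arcs of the \plsl-path still exceed the corresponding two arcs of the \plrl-path by $2\pi-\mul$ each. Hence $\llsl - \llrl = \dl + 4(\pi-\mul) \geq 0$, exactly as in case~A.

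\textbf{For $\lrlr \geq \llrl$:} Substituting the closed forms from Lemma~\ref{lem:lrl-length-specific} for case~B gives $\lrlr - \llrl = 4(\mur-\mul) + 4(\pi-\delta)$. The second summand is non-negative on~$\Delta$ since $\delta \leq \pi$. For the first, Lemma~\ref{lem:case-b-basics} gives $\delta > \pi/2$, and $(\alpha,\beta) \in \Delta$ forces $\sigma \geq \delta$, so $\cos\sigma < 0$. Equations~\eqref{eq:dl-sigma-delta}--\eqref{eq:dr-sigma-delta} then yield $\dr^2 - \dl^2 = 8d\sin\delta\cos\sigma < 0$, so $\dr < \dl$; since $d_X \mapsto \pi - \arcsin(d_X/4)$ is decreasing in~$d_X$, we get $\mur > \mul$ and the inequality follows.

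\textbf{For $\lrsr \geq \lrsl$:} This is the main obstacle. Write $\theta_{\tSR}$ for the direction of $\overrightarrow{r_Sr_F}$ (the outer-tangent direction for \prsr) and $\theta_{\tRL}$ for the direction of the inner tangent from~$R_S$ to~$L_F$ used by \prsl{} (computable from $\overrightarrow{r_S\ell_F}$ and $\drl$). Each length then decomposes as an arc sum plus a tangent-segment length:
\begin{align*}
\lrsr &= (\alpha - \theta_{\tSR})_{2\pi} + \dr + (\theta_{\tSR} - \beta)_{2\pi}, \\
\lrsl &= (\alpha - \theta_{\tRL})_{2\pi} + \sqrt{\drl^2-4} + (\beta - \theta_{\tRL})_{2\pi},
\end{align*}
where $(x)_{2\pi}$ denotes the representative of~$x$ in $[0,2\pi)$. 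The case-B bounds of Lemma~\ref{lem:case-b-basics} ($\alpha \in [0,\as]$, $\pi+\alpha < \beta < 2\pi-\alpha$) pin down which residue branch each term falls into: in particular, the two \prsr-arcs sum to $4\pi - 2\delta$, while the two \prsl-arcs sum to $2\delta + 2(\alpha - \theta_{\tRL})_{2\pi}$ with $(\alpha - \theta_{\tRL})_{2\pi} \leq 2\pi - 2\delta$. The inequality then reduces to
\[
(4\pi - 4\delta) - 2(\alpha - \theta_{\tRL})_{2\pi} + \bigl(\dr - \sqrt{\drl^2-4}\bigr) \geq 0,
\]
in which the first bracket is non-negative by the bound on the initial \prsl-turn while the second bracket is sign-ambiguous; the full expression can be verified directly using \eqref{eq:dr-sigma-delta}--\eqref{eq:drl-sigma-delta}. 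The tedious part, which I would defer to the appendix, is the case analysis of the residue branches.
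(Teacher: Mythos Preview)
Your first two inequalities match the paper's proof essentially verbatim, with one small slip: in case~B we still have $\dlr < 2$, so by Lemma~\ref{lem:ccc-endpoint-locations} the endpoint~$F$ remains on the counter-clockwise arc~$\arc{\FL_2\FL_1}$ of~$L_F$, not on the complementary arc as you state. This does not affect the formula $\llsl - \llrl = \dl + 4(\pi-\mul)$, which is correct regardless.

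The real divergence is in the third inequality. The paper does \emph{not} compute arc sums and compare residue branches; instead it gives a short geometric argument. Since $\dlr < 2$, Lemma~\ref{lem:ccc-endpoint-locations} forces the initial right-turn of the \prsr-path to have length at least~$\pi$, while the initial right-turn $\arc{ST_S}$ of the \prsl-path has length less than~$\pi$; hence $\arc{ST_S}$ is a common prefix of both paths. Writing $R_1, R_2$ for the tangent points of the \prsr-path, one obtains
\[
\lrsl - \lrsr = |\arc{R_1 T_S} \cup \overline{T_S T_F}| - |\overline{R_1 R_2} \cup \arc{R_2 F} \cup \arc{FT_F}|,
\]
and both are paths from~$R_1$ to~$T_F$ avoiding the interior of~$R_S$; the first is the \emph{shortest} such path, so the difference is~$\leq 0$. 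No residue branches, no sign-ambiguous terms.

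Your route, by contrast, is left genuinely incomplete. The bound $(\alpha - \theta_{\tRL})_{2\pi} \leq 2\pi - 2\delta$ that you invoke to control the \prsl-arc sum is exactly the statement $\gamma_{\tR} + 2\delta < 2\pi$, and this is not a priori guaranteed in~$\BD$: with $\delta$ close to~$\pi$ and $\gamma_{\tR}$ positive the inequality can fail, forcing the other residue branch. Even on the branch you write down, you concede that $\dr - \sqrt{\drl^2 - 4}$ has no fixed sign, and ``can be verified directly using \eqref{eq:dr-sigma-delta}--\eqref{eq:drl-sigma-delta}'' is not a proof---the two terms are coupled through $\gamma_{\tR}$, which depends on the tangent direction~$\theta_{\tRL}$, and this dependence does not simplify algebraically. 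So the reduction stops short of the inequality you need, and the promised appendix case analysis would still face the core difficulty. The paper's shortest-path-avoiding-a-disk argument sidesteps all of this.
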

It remains to understand the function~$\lB(d)$.  By definition, it is
the maximum of the function 
\[
\alpha \mapsto \lB(d, \alpha, \blr(\alpha)) = 
\min\{\lrslp(d,\alpha,\blr(\alpha)),
\llrl(d,\alpha,\blr(\alpha))\}, 
\]
for $0 \leq \alpha \leq \as$ for fixed~$d$.  We first argue that there
is a unique $\alpha_{\tB} \in [0, \as]$ such that
\[
\lB(d, \alpha, \blr(\alpha)) = 
\begin{cases}
\lrslp(d,\alpha,\blr(\alpha)) & \text{ for } \alpha \leq \alpha_\tB \\
\llrl(d,\alpha,\blr(\alpha)) & \text{ for } \alpha \geq \alpha_\tB 
\end{cases}
\]
This follows directly from the following lemma, whose proof (based on
comparing derivatives) can be found in the appendix.
\begin{lemma}
  \label{lem:lrl-rsl-monotone}
  The function $\alpha \mapsto \llrl(\alpha, \blr(\alpha)) -
  \lrsl(\alpha, \blr(\alpha))$ is monotonically decreasing on~$[0,
    \as]$.
\end{lemma}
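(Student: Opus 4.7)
Set
\[
f(\alpha) = \llrl(\alpha, \blr(\alpha)) - \lrsl(\alpha, \blr(\alpha));
\]
the plan is to show $f'(\alpha) \leq 0$ on $[0, \as]$. By the chain rule,
\[
f'(\alpha) = \left(\frac{\partial \llrl}{\partial \alpha} - \frac{\partial \lrsl}{\partial \alpha}\right) + \blr'(\alpha) \left(\frac{\partial \llrl}{\partial \beta} - \frac{\partial \lrsl}{\partial \beta}\right),
\]
with every partial evaluated at $(\alpha, \blr(\alpha))$. Each factor can be made explicit: the two $\lrsl$-partials are given directly by~\eqref{eq:rsl-dalpha}--\eqref{eq:rsl-dbeta}; the two $\llrl$-partials follow by differentiating the closed form $\llrl(\alpha, \beta) = 4\mul + (\beta-\alpha) - 2\pi$ from Lemma~\ref{lem:lrl-length-specific} (valid throughout~$\BD$), together with $\mul = \pi - 4\arcsin(\dl/4)$ and~\eqref{eq:dl-alpha-beta}; and $\blr'(\alpha)$ falls out of implicit differentiation of $\dlr^{2} = 4$ using~\eqref{eq:dlr-alpha-beta}.

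The sign analysis splits cleanly into two parts. The first parenthesis is non-positive with no further work, since $\frac{\partial \llrl}{\partial \alpha} \leq 0$ throughout $\BD$ by Lemma~\ref{lem:lrl-changes-alpha-beta}, while $\frac{\partial \lrsl}{\partial \alpha} = 1 - \cos \gamma_{\tR} \geq 0$ by~\eqref{eq:rsl-dalpha}. For the second parenthesis I combine it with the factor $\blr'(\alpha) \leq 0$, which holds because $\blr$ is monotonically decreasing on $[0, \as]$ by Lemma~\ref{lem:case-b-alpha-monotone}. After substituting the explicit formulas and using the constraint $\dlr^{2}=4$ on the curve together with sum-to-product identities, I expect the product $\blr'(\alpha)\bigl(\frac{\partial \llrl}{\partial \beta} - \frac{\partial \lrsl}{\partial \beta}\bigr)$ to come out non-positive as well. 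The key geometric input should be that on $\blr$ the open disks $L_S$ and $R_F$ meet tangentially; this contact point lies simultaneously on the middle circle of the \plrl-path and on the degenerate tangent segment of an incipient \plsr-path, yielding a clean linear relation between $\mul$, $\gamma_{\tR}$, and $\gamma_{\tL}$ that collapses the otherwise tangled trigonometric expression.

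The main obstacle is precisely this last comparison. Even once all partial derivatives and $\blr'(\alpha)$ are written out, verifying $f'(\alpha) \leq 0$ reduces to a single trigonometric inequality in $\alpha$, $\beta = \blr(\alpha)$, and~$d$ whose two parentheses can pull in opposite directions, and whose magnitudes must be balanced uniformly over the range $\alpha \in [0, \as]$. Finding the cancellation that makes this inequality manifest is exactly the tedious identity work the authors defer to the appendix, and the one step where the proof cannot get by with the sign information already collected in Lemmas~\ref{lem:lrl-changes-alpha-beta} and~\ref{lem:rsl-changes-alpha-beta}.
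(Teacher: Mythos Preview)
Your chain-rule decomposition is exactly how the paper organises the argument, and your treatment of the first parenthesis is correct. But the ``main obstacle'' you describe does not exist: the second parenthesis also has a definite sign, and no identity work, no use of the constraint $\dlr^{2}=4$, and no contact-point geometry are needed.

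The missing observation is this. On the curve $\blr$ one has $\beta = \blr(\alpha) \geq \blr(\as) = 2\pi - \as > 3\pi/2$, so the quantitative bound in Lemma~\ref{lem:lrl-changes-alpha-beta} applies and gives $\tfrac{\partial\llrl}{\partial\beta}(\alpha,\beta) \geq 1$. For the other term, since $(\alpha,\beta)\in\BD$ the endpoint~$F$ lies on the counter-clockwise arc $\arc{\FL_2\FL_1}$ of~$L_F$ (Lemma~\ref{lem:ccc-endpoint-locations}), which forces the final left-turning arc of the \prsl-path to satisfy $\gamma_{\tL} > \beta \geq 3\pi/2$; hence $\cos\gamma_{\tL} > 0$ and $\tfrac{\partial\lrsl}{\partial\beta} = 1 - \cos\gamma_{\tL} < 1$ by~\eqref{eq:rsl-dbeta}. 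Therefore
\[
\frac{\partial\llrl}{\partial\beta} - \frac{\partial\lrsl}{\partial\beta} \;>\; 0,
\]
and multiplying by $\blr'(\alpha)\leq 0$ makes the second summand of $f'(\alpha)$ non-positive as well. Both summands are $\leq 0$ independently; there is no balancing to do. The paper phrases this without the chain rule---it simply notes that $g(\alpha,\beta)=\llrl-\lrsl$ is decreasing in~$\alpha$ and increasing in~$\beta$ (for $\beta\geq 3\pi/2$) and then steps from $(\alpha_1,\blr(\alpha_1))$ to $(\alpha_2,\blr(\alpha_2))$ one coordinate at a time---but the content is identical.
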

By Lemmas~\ref{lem:lrl-changes-alpha-beta}
and~\ref{lem:case-b-alpha-monotone}, the function $\alpha 
\mapsto \llrl(d,\alpha, \blr(\alpha))$ is decreasing.  The maximum of
$\lB(d, \alpha, \blr(\alpha))$ is therefore at $\alpha = 0$, at
$\alpha = \alpha_B$, or is a local maximum of $\lrslp(d, \alpha,
\blr(\alpha))$ with $\alpha < \alpha_B$.  

In the first case $\alpha = 0$ a simple geometric argument similar to
Lemma~\ref{lem:dub_c_upper_bound} shows that $\lrsl(d, 0, \blr(0))
\leq d + 2\pi$.  Similarly, we show that a local maximum of
$\lrslp(d,\alpha,\blr(\alpha))$ implies a path length at most $d +
2\pi$. The proof for this technical detail looks at both the path
geometry and at the derivatives, and can be found in the appendix.
\begin{lemma}
  \label{lem:blr-no-extremum}
  If $\alpha \mapsto \lrslp(d, \alpha, \blr(\alpha))$ has an extremum for
  $0 < \alpha < \as$, then $\lrslp(d, \alpha, \blr(\alpha)) \leq d + 2\pi$.
\end{lemma}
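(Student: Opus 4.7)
The plan is to combine the first-order optimality condition with a direct analysis of the path geometry at the critical point. For $0 < \alpha < \as$, the configuration $(\alpha, \blr(\alpha))$ lies on the boundary of $\Bc$ strictly away from the two exceptional points $(0, 2\pi)$ and $(\as, 2\pi - \as)$, so $f(\alpha) := \lrslp(d, \alpha, \blr(\alpha))$ agrees with $\lrsl(\alpha, \blr(\alpha))$ and is differentiable. By the chain rule together with \eqref{eq:rsl-dalpha} and \eqref{eq:rsl-dbeta},
\[
f'(\alpha) \;=\; (1 - \cos \gamma_{\tR}) + (1 - \cos \gamma_{\tL})\,\blr'(\alpha),
\]
where $\gamma_{\tR}$ and $\gamma_{\tL}$ denote the lengths of the two circular arcs of the \prsl-path at $(\alpha, \blr(\alpha))$.

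The first step is to obtain a closed-form expression for $\blr'(\alpha)$ by implicitly differentiating the defining equation $\dlr^{2}(\alpha, \blr(\alpha)) = 4$ using \eqref{eq:dlr-alpha-beta}; in case~B we have $\blr'(\alpha) < 0$ by Lemma~\ref{lem:case-b-alpha-monotone}. Setting $f'(\alpha) = 0$ and coupling with the constraint $\dlr = 2$ yields a system in $\alpha$ and $\beta = \blr(\alpha)$ whose solutions characterize all interior extrema. The next step is to translate this system back into the geometry of the \prsl-configuration, exploiting the fact that on $\blr$ the disks $L_S$ and $R_F$ are tangent at a unique point~$T$. Parameterizing the curve $\blr$ by $T$ automatically encodes the constraint and expresses $\gamma_{\tR}$, $\gamma_{\tL}$, and the middle-segment length $\sqrt{\drl^{2} - 4}$ as simple trigonometric functions of the position of $T$, which should make the algebra tractable.

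My hope is that at a critical point the resulting identity forces at least one of the two arc lengths to be small enough that the total length $\lrsl = \gamma_{\tR} + \gamma_{\tL} + \sqrt{\drl^{2} - 4}$ is dominated by $d + 2\pi$. As a natural cross-check, the \plsr-path at the same configuration has zero-length middle segment (since $\dlr = 2$) and is a \pcsc-type path at a point on the boundary of $\CD$; by Lemma~\ref{lem:dub_c_upper_bound} its length is at most $d + 2\pi$. If the critical-point identity implies $\lrsl \leq \llsr$, the bound follows immediately; otherwise the bound must be verified directly from the trigonometric identity.

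The main obstacle will be the algebraic step: the explicit expression for $\blr'(\alpha)$ is a ratio of trigonometric polynomials in $\alpha$ and $\blr(\alpha)$, and substituting it into $f'(\alpha) = 0$ is unlikely to simplify without the tangent-point reparameterization suggested above. I anticipate that most of the work lies in verifying, after this substitution, an inequality of the form $\gamma_{\tR} + \gamma_{\tL} + \sqrt{\drl^{2} - 4} \leq d + 2\pi$ by elementary estimation on the one-parameter family of critical configurations, which is why the paper relegates the full argument to the appendix.
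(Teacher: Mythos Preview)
Your proposal is an outline, not a proof: it correctly sets up the first-order condition $f'(\alpha)=0$ via \eqref{eq:rsl-dalpha}--\eqref{eq:rsl-dbeta}, but the steps after ``My hope is that\ldots'' are left entirely unspecified. The tangent-point reparameterization and the $\lrsl\leq\llsr$ cross-check are both plausible ideas, but you never carry either one out, and neither is what the paper actually does.

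The paper's argument hinges on a dichotomy you do not identify. First, if $\gamma_{\tR}+\gamma_{\tL}\leq 2\pi$, a short geometric argument (translate the initial arc onto $L_F$ and apply the triangle inequality) gives $\lrsl\leq d+2\pi$ directly, with no use of the critical-point condition at all. Second, if $\gamma_{\tR}+\gamma_{\tL}>2\pi$, the paper shows this is \emph{incompatible} with being an extremum. The key device is to pass to the $(\sigma,\delta)$ coordinates and compare the signs of the two components of $\nabla\lrslp$ and $\nabla\dlr^{2}$: the $\sigma$-components force the Lagrange multiplier $\lambda\geq 0$, while under the assumption $\gamma_{\tR}+\gamma_{\tL}>2\pi$ (which gives $\cos\gamma_{\tR}<\cos\gamma_{\tL}$) the $\delta$-components force $\lambda<0$. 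This contradiction is the whole point, and it avoids any explicit computation of $\blr'(\alpha)$ or any one-parameter trigonometric estimation of the kind you anticipate. Your plan to grind through $\blr'(\alpha)$ in $(\alpha,\beta)$ coordinates would, at best, rediscover this sign obstruction after much heavier algebra; as written it has no mechanism to reach the conclusion.
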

Putting these arguments together, we have shown:
\begin{lemma}
  \label{lem:dub_b}
  For $0 < d < 2$ if $\dub_B(d) > 2\pi$, then $\dub_B(d) = \lB(d,
  \alpha_\tB, \blr(\alpha_\tB))-d$, where $\alpha_\tB$ is the unique
  value in~$[0, \as]$ where $\lrslp(d, \alpha_\tB, \blr(\alpha_\tB)) =
  \llrl(d, \alpha_\tB, \blr(\alpha_\tB))$.
\end{lemma}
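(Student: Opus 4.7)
The plan is to combine the monotonicity results with the case analysis already outlined in the discussion preceding the statement. First I would note that by Lemma~\ref{lem:region3-lrl-rsl-shorter}, in region~$\BD$ the shortest Dubins path is either the \plrl-path or the \prsl-path, so $\dub_B(d) = \lB(d) - d$ where $\lB(d) = \sup_{(\alpha,\beta) \in \BD} \lB(d,\alpha,\beta)$. Since $\lrslp$ agrees with $\lrsl$ except at the two boundary points where the supremum values are already handled, $\lB$ is continuous on the closure~$\Bb$, so the sup becomes a max on~$\Bb$. Monotonicity of both $\beta \mapsto \llrl(\alpha,\beta)$ (Lemma~\ref{lem:lrl-changes-alpha-beta}) and $\beta \mapsto \lrslp(\alpha,\beta)$ (Lemma~\ref{lem:rsl-changes-alpha-beta}) in~$\beta$ then forces the maximum to occur on the upper boundary curve~$\blr$, reducing the problem to a one-variable maximization of $\alpha \mapsto \lB(d,\alpha,\blr(\alpha))$ on~$[0,\as]$.

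Next I would use Lemma~\ref{lem:lrl-rsl-monotone} to show that $\llrl - \lrsl$ is monotonically decreasing along~$\blr$, hence vanishes at most once. This yields a unique $\alpha_\tB$ splitting $[0,\as]$ into an initial interval where $\lB = \lrslp$ and a final interval where $\lB = \llrl$. Combined with Lemmas~\ref{lem:lrl-changes-alpha-beta} and~\ref{lem:case-b-alpha-monotone}, which together imply $\alpha \mapsto \llrl(\alpha, \blr(\alpha))$ is decreasing, the maximum of $\lB$ can occur only at $\alpha = 0$, at $\alpha = \alpha_\tB$, or at an interior local maximum of $\lrslp$ on the part of $\blr$ with $\alpha < \alpha_\tB$.

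It then remains to eliminate the first and third possibilities under the assumption $\dub_B(d) > 2\pi$. The interior local maximum case is immediately handled by Lemma~\ref{lem:blr-no-extremum}, which bounds the value by $d + 2\pi$. For the endpoint $\alpha = 0$ I would argue directly from path geometry: when $\alpha = 0$ the initial configuration points along the positive $x$-axis, and a short computation (in the same spirit as the \plsr analysis in the proof of Lemma~\ref{lem:dub_c_upper_bound}, using that the \prsl-path has a horizontal tangent) shows $\lrslp(d, 0, \blr(0)) \leq d + 2\pi$. Under $\dub_B(d) > 2\pi$ neither of these endpoints can realize the maximum, so the maximum must be attained at $\alpha = \alpha_\tB$, giving $\dub_B(d) = \lB(d, \alpha_\tB, \blr(\alpha_\tB)) - d$ as claimed.

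The main obstacle I expect is the verification of the $\alpha = 0$ endpoint bound: one must check carefully that the degenerate geometry (where the \prsl- and \plrl-paths meet the boundary of~$\Bc$) really does produce a length at most $d + 2\pi$, and that this is consistent with the definition of $\lrslp$ at the boundary point $(0, 2\pi)$, where $\lrslp$ is set to $d + 2\pi$ precisely to capture this limiting value. Everything else is a fairly mechanical combination of the monotonicity lemmas already established.
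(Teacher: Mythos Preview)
Your proposal is correct and follows essentially the same approach as the paper: the paper's proof is precisely the discussion preceding the lemma, reducing to the curve~$\blr$ via the $\beta$-monotonicity lemmas, invoking Lemma~\ref{lem:lrl-rsl-monotone} for the unique crossing~$\alpha_\tB$, noting $\llrl$ decreases along~$\blr$, and then ruling out $\alpha=0$ by a geometric argument ``similar to Lemma~\ref{lem:dub_c_upper_bound}'' and interior local maxima by Lemma~\ref{lem:blr-no-extremum}. One small clarification on your stated obstacle: since $\blr(0) < 2\pi$ by Lemma~\ref{lem:case-b-alpha-monotone}, the point $(0,\blr(0))$ is not the degenerate point $(0,2\pi)$, so $\lrslp = \lrsl$ there and no special handling of the redefinition is needed.
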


It remains to argue the monotonicity of~$d \mapsto \dub_B(d)$.
\begin{lemma}
  \label{lem:dub-b-sqrt2-2}
  For $\sqrt{2} \leq d < 2$, $\dub_B(d)$ decreases monotonically from
  $\dub_B(\sqrt{2}) = 5\pi/2 - \sqrt{2}$ to $\dub_B(\ds) =
  2\pi$ for $\ds \approx 1.5874$.  We have $\dub_B(d) \leq 2\pi$
  for $d \geq \ds$.
\end{lemma}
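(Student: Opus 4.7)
\medskip

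\noindent\textbf{Proof plan.}
The strategy is to analyze $\dub_B$ via the characterization of Lemma~\ref{lem:dub_b}: whenever $\dub_B(d)>2\pi$, we have $\dub_B(d) = \lB(d,\alpha_\tB(d),\blr(\alpha_\tB(d)))-d$, with the critical pair $(\alpha_\tB(d),\beta_\tB(d))$, where $\beta_\tB(d)=\blr(\alpha_\tB(d))$, defined implicitly by
\[
\dlr(d,\alpha_\tB,\beta_\tB)=2
\qquad\text{and}\qquad
\lrsl(d,\alpha_\tB,\beta_\tB)=\llrl(d,\alpha_\tB,\beta_\tB).
\]
I split the argument into three steps: establish the boundary value at $d=\sqrt{2}$, prove strict monotone decrease of $\dub_B$ on $[\sqrt{2},2)$, and locate the crossing $\ds$.

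\medskip
\noindent\textbf{Step 1: Boundary value.}
At $d=\sqrt{2}$, Lemma~\ref{lem:max-rectii} places the case-A extremum at the corner $(\sigma,\delta)=(\pi,\pi-\as)$ of $\RECTII$, equivalently at $(\alpha,\beta)=(\as,2\pi-\as)$. A direct substitution in~\eqref{eq:dlr-sigma-delta} shows $\dlr=2$, so this configuration lies on the curve $\blr$; a direct substitution in the length formulas from Section~\ref{sec:ccc-paths} (together with the degenerate definition of $\lrslp$ at the point $(\as,2\pi-\as)$) gives $\llrl=\lrslp=5\pi/2$. Hence $(\alpha_\tB(\sqrt{2}),\beta_\tB(\sqrt{2}))=(\as,2\pi-\as)$ and $\dub_B(\sqrt{2})=5\pi/2-\sqrt{2}$, matching $\dub_A(\sqrt{2})$ as expected at the case-A/case-B interface.

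\medskip
\noindent\textbf{Step 2: Strict decrease.}
I would prove $\frac{d}{dd}\dub_B(d)<0$ wherever $\dub_B(d)>2\pi$. Implicit differentiation of the two defining equations yields a $2\times 2$ linear system
\[
\begin{pmatrix}
\dfrac{\partial \dlr}{\partial\alpha} & \dfrac{\partial \dlr}{\partial\beta}\\[4pt]
\dfrac{\partial(\lrsl-\llrl)}{\partial\alpha} & \dfrac{\partial(\lrsl-\llrl)}{\partial\beta}
\end{pmatrix}
\begin{pmatrix}\alpha_\tB'\\ \beta_\tB'\end{pmatrix}
=-\begin{pmatrix}\dfrac{\partial \dlr}{\partial d}\\[4pt]\dfrac{\partial(\lrsl-\llrl)}{\partial d}\end{pmatrix},
\]
which expresses $\alpha_\tB'(d),\beta_\tB'(d)$ as rational functions of partial derivatives. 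Substituting into $\frac{d}{dd}(\llrl-d)$ (using $\llrl=\lrsl$ at the critical point, so either path's derivative may be plugged in), I would show negativity using the explicit formulas~\eqref{eq:rsl-dalpha},~\eqref{eq:rsl-dbeta} for $\lrsl$, the bound $\partial\llrl/\partial\beta\geq 1$ from Lemma~\ref{lem:lrl-changes-alpha-beta} (which applies since $\beta_\tB>3\pi/2$ in this range), and the elementary dependence of $\dlr$ and of the $\sqrt{\drl^2-4}$ tangent length on $d$. A cleaner alternative I would try first is the ``path-transport'' approach: apply Lemma~\ref{lem:monotonicity-csc} to the \prsl-path and Lemma~\ref{lem:monotonicity-all-but-rlr} to the \plrl-path realizing the optimum at $d_2$ to transport it back to $d_1<d_2$, yielding $\dub_B(d_2)\leq\dub_B(d_1)$ provided the critical configuration at $d_2$ still lies in $\BD(d_1)$; this inclusion needs to be checked by tracking how the curves $\blr,\brl$ deform with $d$ in the relevant window.

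\medskip
\noindent\textbf{Step 3: The breakpoint $\ds$ and the regime $d\geq\ds$.}
Since $\alpha_\tB(d),\beta_\tB(d)$ are continuous (implicit function theorem at every point where the Jacobian in Step~2 is nonsingular, which follows from Lemmas~\ref{lem:lrl-changes-alpha-beta} and~\ref{lem:rsl-changes-alpha-beta}), $\dub_B$ is continuous. Combined with the strict decrease from Step~2 and $\dub_B(\sqrt{2})=5\pi/2-\sqrt{2}>2\pi$, there is at most one $\ds\in(\sqrt{2},2)$ with $\dub_B(\ds)=2\pi$, and once reached, $\dub_B(d)\leq 2\pi$ for $d\geq\ds$. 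Existence of such a $\ds$ and the value $\ds\approx 1.5874$ are obtained by numerically solving the three equations $\dlr=2$, $\lrsl=\llrl$, $\llrl-d=2\pi$ in the three unknowns $\alpha_\tB,\beta_\tB,d$.

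\medskip
\noindent\textbf{Main obstacle.}
The hard part will be Step~2. The envelope/implicit differentiation route is conceptually straightforward but algebraically heavy, with trigonometric expressions for $\dlr$, $\llrl$, and $\lrsl$ that must ultimately reduce to a sign determination over the whole interval $[\sqrt{2},2)$. The path-transport route is cleaner but hinges on a geometric monotonicity claim about the region $\BD$ under variation of $d$, which is not stated anywhere above and may itself require a separate technical argument.
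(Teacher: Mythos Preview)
Your path-transport alternative in Step~2 is exactly the route the paper takes, and the implicit-differentiation option is not used. Two points of comparison are worth making.

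First, the transport direction: Lemmas~\ref{lem:monotonicity-csc} and~\ref{lem:monotonicity-all-but-rlr} only let you extend a path from a \emph{smaller} distance to a \emph{larger} one, not ``transport it back'' as you write. The correct use is: fix the critical configuration $(\alpha_\tB,\beta_\tB)$ at the larger distance, find any \prsl- or \plrl-path for that same configuration at a smaller distance~$d'$ (of length $\leq \dub_B(d')+d'$), and transport \emph{that} forward. This yields $\ell(d,\alpha_\tB,\beta_\tB)\leq \dub_B(d')+d$, hence $\dub_B(d)\leq \dub_B(d')$.

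Second, the inclusion obstacle you flag is real, and the paper sidesteps it by wrapping the argument in a contradiction rather than proving direct monotonicity. Assume non-monotonicity gives $d_1<d_4$ with $2\pi<\dub_B(d_1)<\dub_B(d_4)$; let $d_3\in(d_1,d_4]$ be the infimum of points where $\dub_B$ attains its maximum on~$[d_1,d_4]$, so $\dub_B(d)<\dub_B(d_3)$ for $d\in[d_1,d_3)$. Now one only needs a \emph{single} smaller $d_2\in[d_1,d_3)$ at which $(\alpha_\tB,\beta_\tB)$ lies in~$\BD$, and the paper obtains it by setting $d_2=\max\{d_1,\,2\sin(2\pi-\beta_\tB)\}$. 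This avoids tracking the full deformation of $\blr,\brl$ with~$d$ that you were worried about; only a local check near $d_3$ is needed. The implicit-differentiation route would work in principle but is unnecessary and considerably heavier.
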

\begin{proof}
  Assume the function is not monotone, so there is $\sqrt{2} \leq d_1
  < d_4 < 2$ such that $2\pi < \dub_B(d_1) < \dub_B(d_4)$.  Since
  $\dub_B(d)$ is continuous, it assumes its maximum $D := \max_{d \in
    [d_1,d_4]}\dub_B(d)$ on the closed interval~$[d_1,d_4]$. The set
  $\{d \in [d_1, d_4] \mid \dub_B(d) = D\}$ is compact, and so assumes
  its infimum, say at~$d_3$. So we have $\dub_B(d_3) > \dub_B(d_1) >
  2\pi$, and $\dub_B(d) < \dub_B(d_3)$ for $d_1 \leq d < d_3$.

  By Lemma~\ref{lem:dub_b}, we have $\dub_B(d_3) = \lB(d_3,
  \alpha_\tB, \beta_\tB) - d_3$, where $0 < \alpha_\tB \leq
  \arcsin(d_3/2)$ and $\beta_\tB = \blr(\alpha_\tB)$.  Let us define
  $d_2 = \max\{d_1, 2\sin(2\pi - \beta_\tB)\}$. Then $d_1 \leq d_2 <
  d_3$, and $(\alpha_\tB, \beta_\tB) \in \BD$ for $d = d_2$.  Then
  there is an \prsl- or \plrl-path from $(0, 0, \alpha_\tB)$ to $(d_2,
  0, \beta_\tB)$ of length at most~$\dub_B(d_2) + d_2$.  By
  Lemmas~\ref{lem:monotonicity-csc}
  and~\ref{lem:monotonicity-all-but-rlr}, there is then a path of
  length $\dub_B(d_2) + d$ from $(0,0, \alpha_\tB)$ to $(d, 0,
  \beta_\tB)$ for all $d \geq d_2$, implying that $\ell(d, \alpha_\tB,
  \beta_\tB) \leq \dub_B(d_2) + d$, a contradiction to the assumption
  that $\dub_B(d_3) > \dub_B(d_2)$.

  Continuity and monotonicity of $\dub_B(d)$ imply that there must be
  a value~$\ds$ with $\dub_B(\ds) = 2\pi$.  We have numerically
  computed the approximation $\ds \approx 1.5874$.  For a given~$d$,
  we first approximate~$\alpha_\tB$ numerically by binary search on
  the interval~$[0, \as]$ using Lemma~\ref{lem:lrl-rsl-monotone}.  We
  can then compute $\blr(\alpha_\tB)$ and~$\lB(d)$.
\end{proof}

\section*{Acknowledgments}

We thank Sylvain Lazard, Xavier Goaoc, and Jaesoon Ha for helpful
discussions. Janghwan Kim studied the case $d \geq 2$ in this master
thesis at KAIST.

\bibliographystyle{plain}
\bibliography{dubins}

\clearpage
\appendix
\section{Appendix}

%%%%%%%%%%%%%%%%%%%%%%%%%%%%%%%%%%%%%%%%%%%%%%%%%%%%%%%%%%%%%%%%%%%%%%
%%% REGIONS
%%%%%%%%%%%%%%%%%%%%%%%%%%%%%%%%%%%%%%%%%%%%%%%%%%%%%%%%%%%%%%%%%%%%%%

\paragraph{Proof of Lemma~\ref{lem:region-boundaries}}

\begin{proof}
  %We now argue that there is a curve $(\sigma, \deltalr(\sigma))_{0 \leq
  %  \sigma \leq \pi}$ in $\SQUARE$ that connects the two points
  %$(0,\as)$ and $(\pi,\as)$, lies strictly between $\delta = \as$ and
  %$\delta = \pi/2$ except for its endpoints, and such that $\dlr = 2$ on
  %the curve, $\dlr < 2$ between the curve and the line $\delta = \pi/2$,
  %and $\dlr > 2$ below the curve. Recall 
  We first argue about the curve~$\deltalr(\sigma)$.
  From Eq.~\eqref{eq:dlr-sigma-delta} we have $\dlr^{2} = d^{2} +
  4d\cos\delta\sin\sigma + 4\cos^{2}\delta$.  For $\delta = \as$, we
  have $\cos^{2}\delta = 1-\sin^{2}\delta = 1 - d^{2}/4$, and so
  $\dlr^{2} = 4 + 4d\cos\as\sin\sigma$.  This is equal to~$4$ for
  $\sigma\in\{0,\pi\}$, and otherwise larger than~$4$.  For $\delta =
  \pi/2$, we have $\dlr^{2} = d^{2} < 4$.  Finally, for $\delta \in
  (0, \pi/2)$, we have $\frac{\partial}{\partial\delta}\dlr^2 =
  -4d\sin\delta\sin\sigma-8\cos\delta\sin\delta < 0$, and so $\delta
  \mapsto \dlr$ is a decreasing function for $\sigma \in (0, \pi)$,
  proving the first claim.

  Consider now $\drl^{2} = d^{2} - 4d\cos\delta\sin\sigma +
  4\cos^{2}\delta$ by Eq.~\eqref{eq:drl-sigma-delta}. For $\as \leq \delta
  \leq \pi/2$, we have $\cos^{2}\delta = 1 - \sin^{2}\delta \leq 1 -
  d^{2}/4$, and so $\drl^{2} \leq 4$, with equality only for $\sigma
  \in \{0,\pi\}$ and $\delta = \as$ which proves the third claim.  On
  the interval $0 \leq \sigma \leq \sigmas$, we have $\drl^{2} \geq 4$
  for $\delta = 0$ (with equality only for $\sigma = \sigmas$),
  $\drl^{2} \leq 4$ for $\delta = \as$ (with equality only for $\sigma
  = 0$), and $\drl^{2} \geq 4$ for $\delta = \pi$.  Since for
  fixed~$\sigma$, $\drl^{2} = 4$ is a quadratic polynomial in
  $\cos\delta$, it has at most two roots in $0 \leq \delta \leq \pi$,
  and thus there must be a unique value $0\leq \deltarl(\sigma) \leq
  \as$ on the interval $0 \leq \sigma \leq \sigmas$ where $\drl =
  2$. This proves the second claim.
  %The same argument applies to the interval
  %$\pi-\sigmas \leq \sigma \leq \pi$.

  In the remaining region $\sigmas < \sigma < \pi-\sigmas$, $0 \leq
  \delta \leq \as$, we have $\drl^2 < 4$. Indeed, in this region we
  have $\sin\sigma > d/4$, and so $\drl^2 < D(\sigma, \delta)$, where
  $D(\sigma, \delta) = d^2-d^2\cos\delta+4\cos^2\delta$.  We have
  $\frac{\partial}{\partial\delta}D(\sigma, \delta) =
  d^2\sin\delta-8\cos\delta\sin\delta =
  \sin\delta(d^2-8\cos\delta)$. Since $\sin^2\delta \leq d^2/4$, we
  have $\cos^2\delta \geq 1-d^2/4$, which gives $64\cos^2\delta \geq
  64-d^2/16 \geq d^4$ since $d < 2$. This means that $d^2 \leq
  8\cos\delta$, which implies that $\delta \mapsto D(\sigma, \delta)$
  is decreasing. Since $D(\sigma, 0) = 4$, the last claim follows.
\end{proof}

\paragraph{Proof of Lemma~\ref{lem:case-b-alpha-monotone}}
\begin{proof}
  Let us fix an $\alpha\in[0,\as)$, so the points $\ell_S$ and~$r_S$
    are fixed.  While $\beta$ ranges over $[0, 2\pi]$, the point
    $\ell_F$ makes a full circle around~$F$.  This means that the
    distance $\drl$ is strictly increasing for half a period, and
    strictly decreasing for the other half period.  This implies that
    in the range $\alpha + \pi \leq \beta \leq 2\pi -\alpha$ there is
    at most one extremum of~$\drl$.  The same argument shows that
    $\dlr$ has at most one extremum in the range.

    Consider first $\beta = 2\pi - \as$.  We observe from
    Figure~\ref{fig:alpha-star}(a) that $S = (0,0)$ lies on the
    boundary of~$R_F$, and so $\dlr < 2$. Also, since $\alpha < \as$,
    $\drl > 2$.  Consider now $\beta = \alpha + \pi$, so $\delta =
    \pi/2$.  By Eqns.~\eqref{eq:dlr-sigma-delta}
    and~\eqref{eq:drl-sigma-delta}, we have $\dlr = \drl = d < 2$.
    Finally, consider $\beta = 2\pi - \alpha$, so that $\sigma = \pi$.
    Since $\alpha < \as$ we have $\delta > \pi - \as$, so $\sin \delta
    < d/2$, so $\cos^{2} \delta = 1 - \sin^{2}\delta > 1 - d^{2}/4$.
    Again by Eqns.~\eqref{eq:dlr-sigma-delta}
    and~\eqref{eq:drl-sigma-delta}, we then have $\dlr^{2} = \drl^{2}
    = d^{2} + 4\cos^{2}\delta > 4$, and so $\dlr = \drl > 2$.

    Since $\dlr = \drl < 2$ for $\beta = \alpha + \pi$, $\dlr = \drl >
    2$ for $\beta = 2\pi - \alpha$, and both functions have only one
    extremum in this range, both functions must assume the value two
    exactly once in this range, at values $\blr(\alpha)$ and
    $\brl(\alpha)$.  For $\beta = 2\pi -\as$ we have $\dlr < 2$ and
    $\drl > 2$, so we have $\brl(\alpha) < 2\pi - \as < \blr(\alpha)$.
    The two functions are clearly continuous, and since we have $\dlr
    = \drl = 2$ for $(\alpha, \beta) = (\as, 2\pi - \as)$ (see
    Figure~\ref{fig:alpha-star}), we have $\blr(\as) = \brl(\as) =
    2\pi-\as$.
  
    Pick any point $(\alpha, \blr(\alpha))$.  This is a configuration
    where $L_S$ and $R_F$ are touching.  If we now increase $\alpha$,
    the point~$\ell_S$ rotates left around~$S$, and so the distance
    $\dlr$ increases (at least locally).  This implies that
    $\blr(\alpha)$ is a decreasing function of~$\alpha$.

  By Lemma~\ref{lem:case-b-basics}, $\BD$ is contained in the triangle
  $0 \leq \alpha \leq \pi/2$, $\alpha + \pi \leq \beta \leq 2\pi
  -\alpha$.  Consider Figure~\ref{fig:alpha-star}(a).  We first
  observe that for $\as < \alpha \leq \pi/2$, the point~$F = (d, 0)$
  is contained in the interior of~$R_S$, and so $\drl < 2$.  It
  follows that for $(\alpha, \beta) \in \BD$ we must have $\alpha \in
  [0, \as)$, and the expression for~$\BD$ follows from the above.
\end{proof}

%%%%%%%%%%%%%%%%%%%%%%%%%%%%%%%%%%%%%%%%%%%%%%%%%%%%%%%%%%%%%%%%%%%%%%
%%% CCC LENGTH
%%%%%%%%%%%%%%%%%%%%%%%%%%%%%%%%%%%%%%%%%%%%%%%%%%%%%%%%%%%%%%%%%%%%%%

\paragraph{Proof of Lemma~\ref{lem:lrl-length-specific}}
\begin{proof}
  We note the following angles (see Figure~\ref{fig:lrl-rlr-regions}):
  \begin{align}
    \label{eq:gamma1}
    \angle \SL_1 \ell_S \SL_2 & =
    \angle \FL_2 \ell_F \FL_1 = 2\mul - \pi,\\
    \label{eq:gamma2}
    \angle \SR_1 r_S \SR_2 & =
    \angle \FR_2 r_F \FR_1 = 2\mur - \pi.
  \end{align}
  Let us first assume that $(\sigma,\delta) \in A$.  We have $2\as <
  2\delta < 2\pi-2\as$. On the other hand, $\gamma_1 + \gamma_2 -
  2\mul \geq -2\mul \geq -2\pi$, where $\gamma_1$ is an initial
  left-turning arc of length on $L_S$ and $\gamma_2$ is a final
  left-turning arc of length on~$L_F$ of an \plrl-path (see
  Figure~\ref{fig:feasible_region2_lrl}).  By
  Lemma~\ref{lem:ccc-endpoint-locations}, $S \in \arc{\SL_1\SL_2}$ and
  $F \in \arc{\FL_2\FL_1}$, and so we can extend the \plrl-path to a
  complete clockwise loop as in Figure~\ref{fig:region2_loop}.  The
  loop uses additional left-turns $\zeta_1$ and $\zeta_2$, and an
  additional right-turn of length~$2\mul$.  The total turning angle of
  a clockwise loop is~$-2\pi$, and thus $\gamma_1 + \gamma_2 +
  \delta_1 + \delta_2 - 4\mul = -2\pi$.  Since $2\mul \leq 2\pi$ this
  implies that $\gamma_1 + \gamma_2 - 2\mul \leq 0$.  From $-2\pi \leq
  \gamma_1 + \gamma_2 - 2\mul \leq 0$ and $0 \leq 2\as< 2\delta
  <2\pi-2\as \leq 2\pi$, we conclude that $\gamma_1 + \gamma_2 -
  2\mul\equiv 2\delta \pmod{2\pi}$ implies $\gamma_1 + \gamma_2 -
  2\mul = 2\delta - 2\pi$.  This shows that $\llrl = 4\mul +2\delta -
  2\pi$.
  \begin{figure}
    \centerline{\subfigure[$0 \leq \gamma_1, \gamma_2 \leq 2\mul-\pi$]{%
	\includegraphics{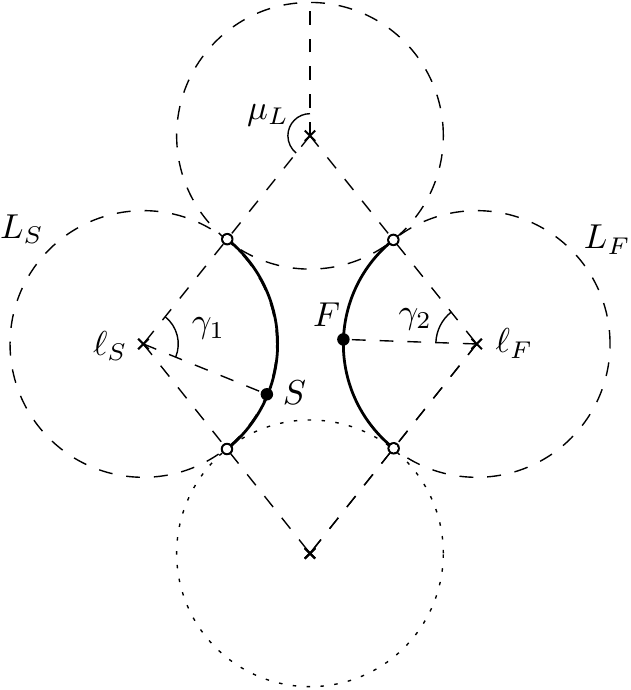}%
	\label{fig:feasible_region2_lrl}}
      \hspace{2cm}
      \subfigure[Completing to a loop]{\includegraphics{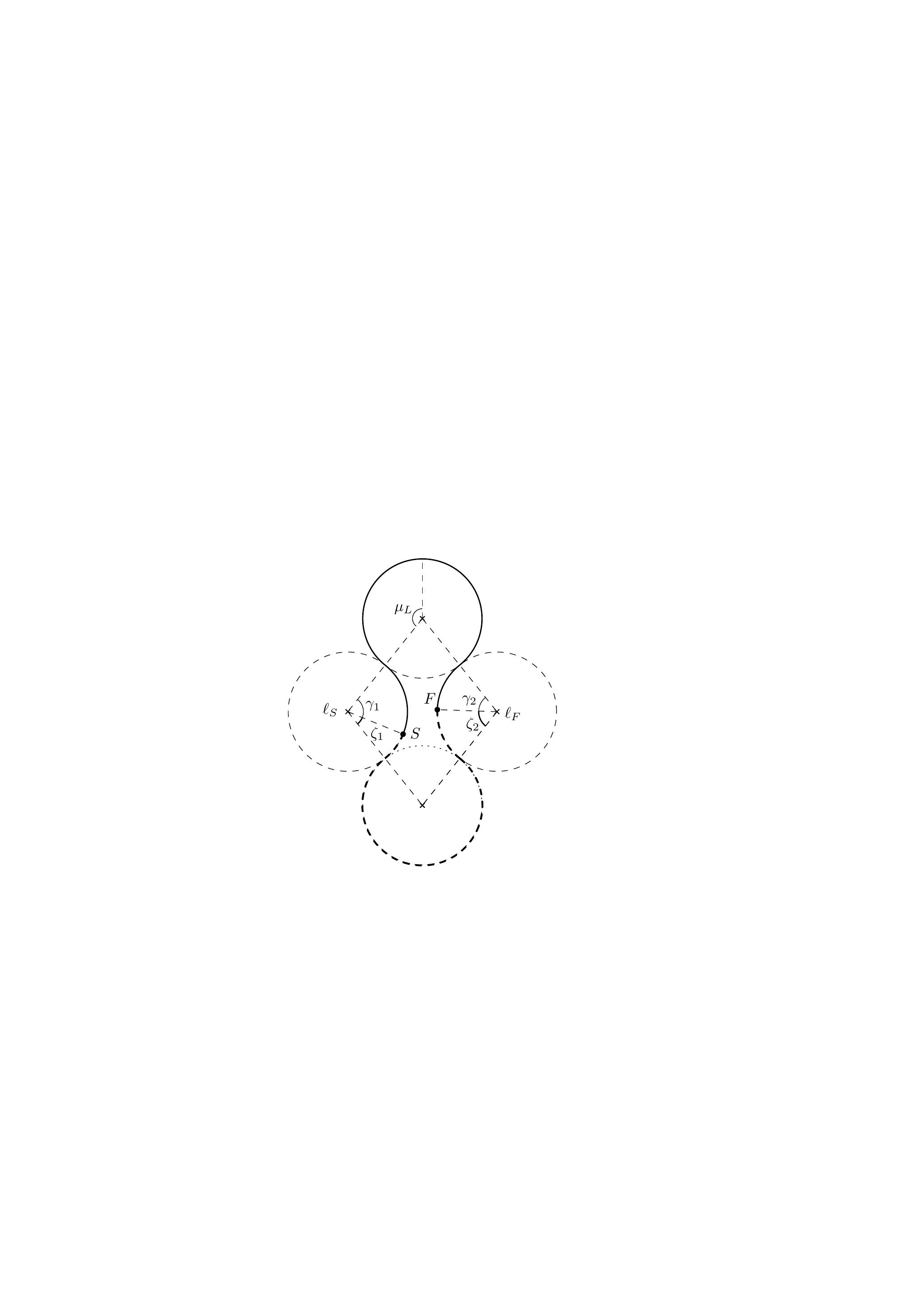}%
	\label{fig:region2_loop}}}
    \caption{Proof of Lemma~\ref{lem:lrl-length-specific} for case~A}
  \end{figure}

  For \prlr-paths, we could argue analogously, or we can simply observe
  that 
  \[
  \lrlr(\sigma,\delta) = \llrl(\pi-\sigma,\pi-\delta) =
  4\mul(\pi-\sigma,\pi-\delta) +2(\pi-\delta) - 2\pi =
  4\mur(\sigma,\delta) - 2\delta.
  \]

  Assume now that $(\sigma, \delta) \in B$. By
  Lemma~\ref{lem:case-b-basics}, we have $\pi < 2\delta \leq 2\pi$.
  By Lemma~\ref{lem:ccc-endpoint-locations} and Eq.~\eqref{eq:gamma1} we
  have
  \begin{align*}
    2\mul-\pi & \leq \gamma_1 < 2\pi \\
    0 & \leq \gamma_2 \leq 2\mul-\pi
  \end{align*}
  Together these imply $-\pi \leq \gamma_1+\gamma_2-2\mul < \pi$.
  Since $\gamma_1+\gamma_2-2\mul \equiv 2\delta \pmod{2\pi}$, we must
  have $\gamma+\gamma_2-2\mul = 2\delta-2\pi$.  It follows that
  $\llrl(\sigma, \delta) = 4\mul+2\delta-2\pi$.

  We turn to the \prlr-path in case~B.  By
  Lemma~\ref{lem:ccc-endpoint-locations} and Eq.~\eqref{eq:gamma2} we have
  (here, $\gamma_1$ and $\gamma_2$ are the right-turning arcs)
  \begin{align*}
    0 & \leq \gamma_1 \leq 2\mur-\pi \\
    2\mur-\pi & \leq \gamma_2 < 2\pi
  \end{align*}
  Together we have $-\pi \leq \gamma_1 + \gamma_2 -2\mur \leq \pi$.
  Since $-\gamma_1-\gamma_2+2\mur \equiv 2\delta \pmod{2\pi}$, we must
  have $-\gamma_1 + 2\mur - \gamma_2 = 2\delta - 2\pi$, which shows
  that $\lrlr = \gamma_1+2\mur+\gamma_2 = 4\mur-(2\delta-2\pi) =
  4\mur-2\delta+2\pi$.
\end{proof}

\paragraph{Proof of Lemma~\ref{lem:lrl-changes-alpha-beta}}

\begin{proof}
  We first prove that for $(\alpha, \beta) \in \BD$, the function~$\alpha
  \mapsto \llrl(\alpha, \beta)$ is decreasing.
  By Lemma~\ref{lem:lrl-length-specific} we have
  $\llrl(\alpha,\beta)=4\mul+2\delta-2\pi$ in~$\BD$.  We have $2\delta =
  \beta-\alpha$, $\mul=\pi-\arcsin({\dl}/{4})$, $\dl^2 =
  (d-\sin\beta+\sin\alpha)^2 + (\cos\beta-\cos\alpha)^2$
  by Eq.~\eqref{eq:dl-alpha-beta}, and
  $\frac{\partial\arcsin{x}}{\partial{x}} = \frac{1}{\sqrt{1-x^2}}$.
  Setting $D_\tL = \dl\sqrt{1-({\dl/4})^{2}}$,
  we have
  \begin{align*}
    \frac{\partial\llrl}{\partial\alpha}(\alpha,\beta) & = 
    -\frac{(d-\sin\beta+\sin\alpha)\cos\alpha +
      (\cos\beta-\cos\alpha)\sin\alpha}{D_L} - 1.
  \end{align*}
  We also have
  \begin{align*}
    (d-\sin\beta+\sin\alpha)\cos\alpha + (\cos\beta-\cos\alpha)\sin\alpha
    = & d\cos\alpha - \sin\beta\cos\alpha + \cos\beta\sin\alpha \\
    = & d\cos\alpha - \sin(\beta-\alpha) \geq 0.
  \end{align*}
  The last inequality holds since $0 \leq \alpha \leq \pi/2$ and
  $\beta-\alpha \geq \pi$ in~$\BD$ by Lemma~\ref{lem:case-b-basics}.
  It follows that $\frac{\partial\llrl}{\partial\alpha}(\alpha, \beta)
  \leq -1 <0$.

  We now prove that if $(\alpha, \beta) \in \BD$ and $\beta \geq 3\pi/2$ then 
  $\frac{\partial\llrl}{\partial\beta}(\alpha, \beta) \geq 1$.
  We have
  \begin{align*}
    \frac{\partial\llrl}{\partial\beta}(\alpha,\beta) & = 
    \frac{(d-\sin\beta+\sin\alpha)\cos\beta +
      (\cos\beta-\cos\alpha)\sin\beta}{D_L} + 1.
  \end{align*}
  %Note that the derivative is not defined if $D_L = 0$, but
  %as we noted before, that can happen only on the boundary of the region.
  % 
  $\beta \geq {3\pi}/{2}$
  implies that $\sin\beta \leq 0$ and $\cos\beta \geq
  0$. By Lemma~\ref{lem:case-b-basics} we have $\pi+\alpha \leq \beta \leq
  2\pi-\alpha$, which implies that $\cos\beta \leq \cos\alpha$.  It
  follows that the first term of
  $\frac{\partial\llrl}{\partial\beta}(\alpha,\beta)$ is nonnegative,
  proving that $\frac{\partial\llrl}{\partial\beta}(\alpha, \beta)
  \geq 1$.

  We finally prove that for $(\alpha, \beta) \in \AD \cup \BD$ the
  function $\beta \mapsto \llrl(\alpha, \beta)$ is increasing.  Let $U
  = d-\sin\beta+\sin\alpha$, and $V = \cos\beta-\cos\alpha$ (note that
  $\dl^2 = U^2+V^2$).  If $U\cos\beta+V\sin\beta \geq 0$, we have
  $\frac{\partial\llrl}{\partial\beta}(\alpha,\beta) > 0$, and $\beta
  \mapsto \llrl(\alpha,\beta)$ is increasing. Now let us assume that
  $U\cos\beta+V\sin\beta$ is negative, and let us compare the squared
  terms of $\frac{\partial\llrl}{\partial\beta}$ (we only consider the
  numerator since the denominator is always positive).
  \begin{align*}
    (U\cos\beta+V\sin\beta)^2 - \dl^2\pth{1-\frac{\dl^2}{16}}
    = & (U^2\cos^2\beta+V^2\sin^2\beta+2UV\cos\beta\sin\beta) - (U^2+V^2) +
    \frac{(U^2+V^2)^2}{16} \\
    = & (-U^2\sin^2\beta-V^2\cos^2\beta+2UV\cos\beta\sin\beta) +
    \frac{(U^2+V^2)^2}{16} \\
    = & \frac{1}{16}\pth{-16(U\sin\beta-V\cos\beta)^2+(U^2+V^2)^2}.
  \end{align*}

  For $(\alpha, \beta) \in \AD \cup \BD$, we have $\dlr <
  2$. From Eq.~\eqref{eq:dlr-alpha-beta}, $\dlr^2 =
  (d+\sin\beta+\sin\alpha)^2 + (\cos\beta+\cos\alpha)^2 < 4$. By
  substituting $d+\sin\alpha = U+\sin\beta$ and $\cos\alpha =
  \cos\beta-V$, we have
  \begin{align*}
    & \dlr^2 < 4 \\
    & \Leftrightarrow (U+2\sin\beta)^2 + (2\cos\beta-V)^2 < 4 \\
    & \Leftrightarrow
    U^2+V^2+4(U\sin\beta-V\cos\beta)+4(\sin^2\beta+\cos^2\beta) < 4 \\
    & \Leftrightarrow (U^2+V^2)+4(U\sin\beta-V\cos\beta) < 0.
  \end{align*}

  Since $U^2+V^2 = \dl^2 \geq 0$, this implies that
  $U\sin\beta-V\cos\beta < 0$, and further the squared term
  $16(U\sin\beta-V\cos\beta)^2$ is greater than $(U^2+V^2)^2$, which
  implies that $(U^2+V^2)^2-16(U\sin\beta-V\cos\beta)^2 < 0$,
  completing the proof.
\end{proof}

%%%%%%%%%%%%%%%%%%%%%%%%%%%%%%%%%%%%%%%%%%%%%%%%%%%%%%%%%%%%%%%%%%%%%%
%%% CASE A
%%%%%%%%%%%%%%%%%%%%%%%%%%%%%%%%%%%%%%%%%%%%%%%%%%%%%%%%%%%%%%%%%%%%%%

\paragraph{Proof of Lemma~\ref{lem:l-r-fixed}}

\begin{proof}
  Setting $D_\tL = \dl\sqrt{1-({\dl/4})^{2}}$ and
  $D_\tR = \dr\sqrt{1-({\dr/4})^{2}}$, 
  we obtain the derivatives of $\lL$ and $\lR$ using $\mul = \pi -
  \arcsin ({\dl}/{4})$, $\mur = \pi - \arcsin({\dr}/{4})$, and
  Eqns.~\eqref{eq:dl-sigma-delta}--\eqref{eq:drl-sigma-delta}:
  \begin{align}
    \frac{\partial\lL}{\partial\delta}(\sigma,\delta) & = 
    2 + \frac{-4\cos\delta\sin\delta +
      2d\cos\delta\cos\sigma}{D_\tL},\label{eq:l-derivative-delta} \\
    \frac{\partial\lL}{\partial\sigma}(\sigma,\delta) & = 
    \frac{-2d\sin\delta\sin\sigma}{D_\tL},\label{eq:l-derivative-sigma}\\
    \frac{\partial\lR}{\partial\delta}(\sigma,\delta) & = 
    -2 + \frac{-4\cos\delta\sin\delta -
      2d\cos\delta\cos\sigma}{D_\tR},\label{eq:r-derivative-delta}\\
    \frac{\partial\lR}{\partial\sigma}(\sigma,\delta) & = 
    \frac{2d\sin\delta\sin\sigma}{D_\tR}.\label{eq:r-derivative-sigma}
  \end{align}
  The derivatives are not defined when $D_\tL = 0$ or $D_\tR = 0$.
  $D_\tL=0$ occurs when $\dl = 0$ or $\dl = 4$, and $D_\tR = 0$ occurs
  when $\dr = 0$ or $\dr = 4$.

  In the interior of~$\RECTII$, we have $\sin\delta > d/2$, which
  implies $\cos\delta^2 < 1-d^2/4$. So for $\as < \delta \leq \pi/2$,
  we have $\drl < 2$ by Eq.~\eqref{eq:drl-sigma-delta}, and for $\pi/2
  \leq \delta < \pi-\as$, we have $\dlr < 2$
  by Eq.~\eqref{eq:dlr-sigma-delta}.  By the triangle inequality it
  follows that $\dl < 4$ and $\dr < 4$. Also,
  by Eq.~\eqref{eq:dl-sigma-delta}
  $\dl^2=(d-2\sin\delta\cos\sigma)^2+4\sin\delta(-\cos\sigma+1)$. Since
  $\sin\delta > 0$, $\dl = 0$ can occur only when $\sigma = 0$ and
  $\delta \in \{\as, \pi-\as \}$, which occurs at a corner
  of~$\RECTII$. Similar arguments hold for the case where
  $\dr^2=0$. Thus, in the interior of~$\RECTII$, $D_\tL \neq 0$ and
  $D_\tR \neq 0$.

  Consider Eqns.~\eqref{eq:l-derivative-sigma}
  and~\eqref{eq:r-derivative-sigma}. Since $\sin\delta >0$ in
  $\RECTII$ and $\sin\sigma \geq 0$ with equality only for $\sigma \in
  \{0, \pi\}$, we have $\frac{\partial\lL}{\partial\sigma} < 0$ and
  $\frac{\partial\lR}{\partial\sigma} > 0$ in the interior
  of~$\RECTII$.

  It remains to discuss the two functions of~$\delta$.  We show that
  $\frac{\partial\lL}{\partial\delta}(\sigma, \delta) > 0$. If $Z =
  2\cos\delta(-2\sin\delta + d\cos\sigma) \geq 0$, this is true. Let
  us thus assume that $Z < 0$. Since $\sin\delta \geq d/2$ implies
  that $-2\sin\delta+d\cos\sigma \leq -2(d/2)+d\cos\sigma =
  d(-1+\cos\sigma) \leq 0$, we have $\cos\delta >0$ and so
  $-2\sin\delta+d\cos\sigma <0$.  $\sin\delta \geq d/2$ also implies
  that $d^2 \leq 4\sin^2\delta$, so by Eq.~\eqref{eq:dl-sigma-delta}
  $\dl^2 = d^2 - 4d\sin\delta\cos\sigma + 4\sin^2\delta \leq
  8\sin^2\delta - 4d\sin\delta\cos\sigma =
  4\sin\delta(2\sin\delta-d\cos\sigma)$. Since $\dl^2 \geq 0$, we have
  \begin{align}
    \label{eq:dl-smaller}
    \dl^4 \leq 16\sin^2\delta(2\sin\delta-d\cos\sigma)^2.
  \end{align}
  On the other hand, we have 
  \begin{align}
    \label{eq:dl-larger}
    \dl^2  &= d^2 - 4d\sin\delta\cos\sigma + 4\sin^2\delta \nonumber\\
    & \geq 
    d^2\cos^2\sigma - 4d\sin\delta\cos\sigma + 4\sin^2\delta =
    (2\sin\delta-d\cos\sigma)^2. 
  \end{align}
  Now we want to show that $2D_\tL \geq -Z$, which will imply
  $\frac{\partial\lL}{\partial\delta}(\sigma, \delta) \geq 0$. Let us
  compare the squared terms:
  \begin{align*}
    & 4{D_\tL}^2 - Z^2   \\
    &= 4D_\tL^2 - 4\cos^2\delta(-2\sin\delta+d\cos\sigma)^2 \\ 
    &= 4\pth{
      \dl^2\pth{1-\frac{\dl^2}{16}}
      -\cos^2\delta(2\sin\delta-d\cos\sigma)^2}\\
    & \geq 4\pth{
      \dl^2 - \sin^2\delta(2\sin\delta-d\cos\sigma)^2
      - \cos^2\delta(2\sin\delta-d\cos\sigma)^2} \quad
    \text{(by~\eqref{eq:dl-smaller})} \\
    & = 4\pth{
      \dl^2 - (2\sin\delta-d\cos\sigma)^2}  \geq 0 \quad
    \text{(by~\eqref{eq:dl-larger})}.
  \end{align*}
  One of the inequalities in above formula is a strict inequality:
  if~\eqref{eq:dl-larger} is an equality, then $\dl^2 = (2\sin\delta -
  d\cos\sigma)^2$, which means that $\cos^2\sigma = 1$. This implies
  that $d^2 < 4\sin^2\delta$ since we have $-2\sin\delta + d\cos\sigma
  < 0$, so~\eqref{eq:dl-smaller} is strict. In the case
  where~\eqref{eq:dl-smaller} is an equality, we can argue similarly
  that~\eqref{eq:dl-larger} is strict.

  Similarly, we prove that $\frac{\partial\lR}{\partial\delta}(\sigma,
  \delta) < 0$, since $\sin\delta \geq d/2$ again implies that $\dr^2
  = d^2 + 4d\sin\delta\cos\sigma + 4\sin^2\delta \leq
  4\sin\delta(2\sin\delta + d\cos\sigma)$, so
  \begin{align*}
    \dr^4 & \leq 16\sin^2\delta(2\sin\delta+d\cos\sigma)^2,
  \end{align*}
  and we have 
  \begin{align*}
    \dr^2 & \geq d^2\cos^2\sigma + 4d\sin\delta\cos\sigma + 4\sin^2\delta =
    (2\sin\delta+d\cos\sigma)^2.\qedhere
  \end{align*}
\end{proof}

\paragraph{Proof of Lemma~\ref{lem:lrl-eq-rlr}}

\begin{proof}
   By Lemma~\ref{lem:l-r-fixed}, neither $\lL$ nor $\lR$ has a local
   extremum in the interior of~$\RECTII$, so any local extremum of
   $\lC(\sigma,\delta)$ must be a point in the set~$\LER$ of points
   $(\sigma,\delta)$ with $\lL(\sigma,\delta) = \lR(\sigma,\delta)$. By
   Lemma~\ref{lem:l-r-fixed}, $\LER$ is a $\delta$-monotone curve.
   Since $\lL(\sigma,\delta) = \lR(\pi-\sigma,\pi-\delta)$, the
   curve~$\LER$ passes through the point~$(\pi/2, \pi/2)$. By
   Lemma~\ref{lem:l-r-fixed}, this implies that $\lL(\sigma, \delta) <
   \lR(\sigma, \delta)$ for the quadrant~$\pi/2 \leq \sigma \leq \pi$,
   $\as \leq \delta \leq \pi/2$, and that $\lR(\sigma, \delta) <
   \lL(\sigma, \delta)$ for the quadrant~$0 \leq \sigma \leq \pi/2$,
   $\pi/2 \leq \delta \leq \pi-\as$ except at the point~$(\pi/2,
   \pi/2)$.  By point symmetry, we can restrict our attention to the
   range $\pi/2<\sigma< \pi$, $\pi/2< \delta < \pi-\as$.
 
   Assume for a contradiction that $(\sigma,\delta) \in \LER$ is a
   local extremum of~$\lL$, restricted to~$\LER$.  This implies that
   the gradient $\nabla \lL(\sigma,\delta)$ and the normal of $\LER$ in
   $(\sigma,\delta)$ are linearly dependent, by the method of Lagrange
   Multipliers.  The normal of $\LER$ is the gradient of
   $\lL(\sigma,\delta) - \lR(\sigma,\delta)$, so
   $\nabla\lL(\sigma,\delta)$ and $\nabla\lR(\sigma,\delta)$ must be
   linearly dependent. 
 
   For the two vectors to be linearly dependent, we would have to have
   \[
   D_\tL\frac{\partial\lL}{\partial\delta}(\sigma,\delta) + 
   D_\tR\frac{\partial\lR}{\partial\delta}(\sigma,\delta) = 0,
   \]
   which means
   \[
   2D_\tL -2D_\tR - 8 \cos\delta\sin\delta = 0.
   \]
   In the range under consideration, $-8\cos\delta\sin\delta > 0$. We
   will show that $D_\tL > D_\tR$, a contradiction.  We have
   \begin{align*}
     16(D_\tL^{2} - D_\tR^{2}) & = \dl^{2}(16-\dl^{2}) - 
     \dr^{2}(16-\dr^{2}) \\
 	& = \dr^4 - \dl^4 + 16(\dl^2-\dr^2) \\
 	& = (\dl^2-\dr^2)(16-(\dl^2+\dr^2)) \\
     %  & = -128d\cos\sigma\sin\delta + 16d^{3}\cos\sigma\sin\delta 
     %    + 64d\sin^{3}\delta\cos\sigma\\
     & = -8d\cos\sigma\sin\delta(16 - (2d^{2} + 8\sin^{2}\delta)).
   \end{align*}
   Since $\cos\sigma < 0$ and $d < 2$, the expression is positive.
 \end{proof}

\paragraph{Proof of Lemma~\ref{lem:a-monotonicity}}

\begin{proof}
  We will show below that the two functions $d \mapsto \lA(d)$ and $d \mapsto
  \lA(d)-d$ have no extremum on the interval~$(0, \sqrt{2})$.  This will imply
  the claim if we observe
  that
  \begin{align*}
    \lA(0) & = 7\pi/3, \\
    \lA(\sqrt{2}) & = 5\pi/2 > \lA(0) \\
    \lA(\sqrt{2}) - \sqrt{2} & = 5\pi/2 - \sqrt{2} < \lA(0) - 0.
  \end{align*}    

  Again we will employ Lagrange multipliers.  Let us first give the necessary
  derivatives.  Setting $\sigma = \pi$ and $\pi/2 < \delta \leq \pi-\as$ (by
		  Lemma~\ref{lem:max-rectii}), we have:
  \begin{align*}
    \dl & = d + 2 \sin \delta \qquad \text{by
      (\ref{eq:dl-sigma-delta})}\\
    \dr & = 2 \sin \delta - d \qquad \text{by
      (\ref{eq:dr-sigma-delta})}\\
    \lL(d, \pi, \delta) &= 2\pi - 4\arcsin(\dl/4) + 2\delta \qquad \text{by
      (\ref{eq:def-l})}\\
    \lR(d, \pi, \delta) &= 4\pi - 4\arcsin(\dr/4) - 2\delta \qquad \text{by
      (\ref{eq:def-r})}
  \end{align*}
  Let us introduce $F_\tL = \sqrt{1-(\dl/4)^{2}}$ and $F_\tR =
  \sqrt{1-(\dr/4)^{2}}$ to obtain:
  \begin{align*}
    \frac{\partial}{\partial d}\lL(d, \pi, \delta)
    & = - \frac{1}{F_\tL} \qquad
    \frac{\partial}{\partial \delta}\lL(d, \pi, \delta)
     = 2 - \frac{2\cos\delta}{F_\tL} \\
    \frac{\partial}{\partial d}\lR(d, \pi, \delta)
    & = \frac{1}{F_\tR} \qquad
    \frac{\partial}{\partial \delta}\lR(d, \pi, \delta)
     = -2 - \frac{2\cos\delta}{F_\tR}
  \end{align*}

  We consider first the function $d \mapsto \lA(d)$.  An extremum of
  $\lA(d)$ is an extremum of the two-parameter function $(d, \delta)
  \mapsto \lL(d, \pi, \delta)$ under the restriction $\lL(d, \pi,
  \delta) = \lR(d, \pi, \delta)$.  Such an extremum would have to
  satisfy the condition $\nabla \lL(d, \pi, \delta) = \lambda
  \nabla\lR(d, \pi, \delta)$. For this to hold:
  \begin{align}
    \lambda = -\frac{F_\tR}{F_\tL} 
	& = \frac{2-2\cos\delta/F_\tL}{-2-2\cos\delta/F_\tR}, \nonumber
  \end{align}
  which implies $F_\tR-F_\tL = -2\cos\delta$.  Since $\dr < 2$, $F_\tR
  > \sqrt{3}/2$, and since $\dl < 2+\sqrt{2} < 2\sqrt{3}$, $F_\tL >
  1/2$.  This implies $F_\tR + F_\tL > 1$, and so
  \begin{align}
    -2\cos\delta = F_\tR - F_\tL < (F_\tR - F_\tL)(F_\tR + F_\tL) =
    F_\tR^2-F_\tL^2 = \frac{d}{2}\sin\delta. \label{eq:FR-FL}
  \end{align}
  The condition $\lL(d, \pi, \delta) = \lR(d, \pi, \delta)$ implies
  (using $\arcsin x - \arcsin y \geq x - y$ for $0 \leq y \leq x \leq
  1$)
  \[
  4\delta = 2\pi + 4\big(\arcsin\frac{\dl}{4} - \arcsin\frac{\dr}{4}\big)
  \geq 2\pi + (\dl - \dr) = 2\pi + 2d.
  \]
  It follows that $\pi/2 + d/2 \leq \delta \leq \pi$, and so we have
  $\sin\delta \leq \cos\frac{d}{2}$ and $-\cos\delta \geq \sin\frac{d}{2}$.
  With~(\ref{eq:FR-FL}) this gives
  \[
  2\sin\frac{d}{2} \leq -2\cos\delta < \frac{d}{2} \sin\delta 
  \leq \frac{d}{2}\cos \frac{d}2.
  \]
  Setting $f(x) := x\cos x - 2\sin x$, this implies $f(d/2) > 0$.  But
  this is impossible, since $f(0) = 0$ and $f'(x) = -2\cos x - x\sin x
  \leq 0$ for $0 \leq x \leq d/2 < \pi/2$.

  \medskip

  Consider next the function $d \mapsto \lA(d) - d$.  An extremum of
  $\lA(d) - d$ is an extremum of the two-parameter function $(d, \delta)
  \mapsto \lL(d, \pi, \delta) - d$ under the restriction $\lL(d, \pi,
  \delta) = \lR(d, \pi, \delta)$.  Such an extremum would have to
  satisfy the condition $\lambda \nabla (\lL(d, \pi, \delta) -  d) =
  \nabla\big(\lL(d, \pi, \delta) - \lR(d, \pi, \delta)\big)$, or
  \begin{align*}
    \lambda ( - \frac{1}{F_\tL} - 1) & = 
    - \frac{1}{F_\tL} - \frac{1}{F_\tR}, \\
    \lambda (2 - \frac{2 \cos \delta}{F_\tL})  & = 
    4 - \frac{2 \cos \delta}{F_\tL} + \frac{2 \cos \delta}{F_\tR}
  \end{align*}
  The two components give us the following conditions on~$\lambda$:
  \begin{align*}
    \lambda & = 1 + \frac{1/F_\tR - 1}{1/F_\tL + 1} 
    = 1 + \frac{2 + (2\cos\delta)/F_\tR}{2 - (2\cos\delta)/F_\tL}.
  \end{align*}
  This is equivalent to 
  \begin{align*}
    \Big( 2 - \frac{2\cos\delta}{F_\tL}\Big)
    \Big( \frac{1}{F_\tR} - 1 \Big)
    & = 
    \Big( 2 + \frac{2\cos\delta}{F_\tR}\Big)
    \Big( \frac{1}{F_\tL} + 1 \Big).
    \intertext{Multiplying out and rearranging the terms gives}
	(2-2\cos\delta)\Big(\frac{1}{F_\tR} - \frac{1}{F_\tL}\Big)
	& = 4\Big( \frac{\cos\delta}{F_\tR F_\tL} + 1 \Big).
  \end{align*}
  Since $\dl - \dr = 2d > 0$, we have $F_\tR > F_\tL$.  With
  $\cos\delta < 0$ this implies that the left-hand side is negative.
  We will now show that the right-hand side is non-negative, a
  contradiction, and so $d \mapsto \lA(d) - d$ cannot have a local
  extremum.

  It is enough to show that $\cos^2\delta \leq (F_\tR F_\tL)^2$:
  \begin{align*}
    (F_\tR F_\tL)^2 - \cos^2\delta
    & = \Big(1-\frac{\dl^2 + \dr^2}{16} + \frac{(\dl\dr)^2}{16^2}
    \Big)  - \cos^2\delta\\ 
    & = -\frac{2d^2+8\sin^2\delta}{16} + \frac{(\dl\dr)^2}{16^2} +
    \sin^2\delta \\ 
    & = \frac{-2d^2+8\sin^2\delta}{16} + \frac{(\dl\dr)^2}{16^2} \geq 0.
  \end{align*}
  The last inequality holds since $\sin\delta \geq d/2$.
\end{proof}

%%%%%%%%%%%%%%%%%%%%%%%%%%%%%%%%%%%%%%%%%%%%%%%%%%%%%%%%%%%%%%%%%%%%%%
%%% CASE B
%%%%%%%%%%%%%%%%%%%%%%%%%%%%%%%%%%%%%%%%%%%%%%%%%%%%%%%%%%%%%%%%%%%%%%

\paragraph{Proof of Lemma~\ref{lem:rsl-changes-alpha-beta}}

\begin{proof}
  The two derivatives in Eqns.~\eqref{eq:rsl-dalpha}
  and~\eqref{eq:rsl-dbeta} are defined and both positive in the
  interior of~$\Bc$, implying the first two claims.

  For the third claim, we need to show that 
  \begin{align*}
    0 & \leq \frac{\partial}{\partial\alpha}\lrsl(\alpha, \beta) -  
    \frac{\partial}{\partial\beta}\lrsl(\alpha, \beta) 
    = -\cos\gamma_{\tR} + \cos\gamma_{\tL}
  \end{align*}
  for $\beta = 2\pi - \alpha$. By
  Lemma~\ref{lem:ccc-endpoint-locations}, $S$ lies on the
  arc~$\arc{\SL_0\SL_1}$ of~$L_S$.  If $\beta=2\pi-\alpha$, then
  $F$~lies on the counter-clockwise arc~$\arc{\FL_1\FL_0}$ of~$L_F$.
  The two circular arcs of the \prsl-path have two components, namely,
  $\gamma_{\tR} = \alpha+\zeta$ and $\gamma_{\tL} = \beta+\zeta = 2\pi
  -\alpha+\zeta$, see Figure~\ref{fig:along-diagonal}. This implies
  \begin{figure}
    \centerline{\includegraphics{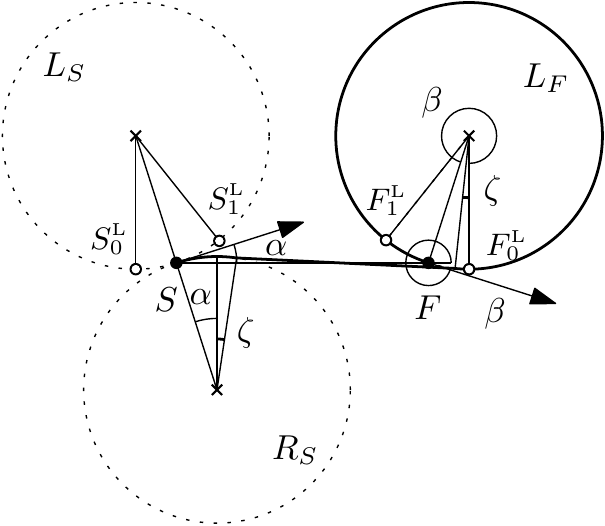}}
    \caption{An \prsl-path for $\beta=2\pi-\alpha$}
    \label{fig:along-diagonal}
  \end{figure}
  \begin{align*}
    -\cos\gamma_{\tR} + \cos\gamma_{\tL} 
    & = - \cos(\alpha + \zeta) + \cos(\alpha - \zeta)
    = 2\sin\alpha\sin{\zeta} \geq 0. 
  \end{align*}

  The second and third claim immediately imply the last one.
\end{proof}

\paragraph{Proof of Lemma~\ref{lem:region3-lrl-rsl-shorter}}

\begin{proof}
  We first compare the \plrl-path and the \plsl-path. Let $\gamma_1$
  and $\gamma_2$ be the two arcs on the \plrl-path.  By
  Lemma~\ref{lem:ccc-endpoint-locations}, we have $2\mul-\pi \leq
  \gamma_1 \leq \mul$ and $0 \leq \gamma_2 \leq 2\mul-\pi$.  The
  \plsl-path has length $\llsl = \gamma_1 + \gamma_2 + 2(2\pi-\mul) +
  \dl$.  We thus have $\llsl -\llrl = \dl + 4(\pi - \mul) \geq 0$.

  Consider now the \plrl-path and the
  \prlr-path. By~\eqref{eq:dl-sigma-delta}
  and~\eqref{eq:dr-sigma-delta}, we have $\dr^{2} - \dl^{2} =
  8d\sin\delta\cos\sigma \leq 0$ for $(\sigma, \delta)\in \BD$, and so
  $\dr \leq \dl$, implying~$\mul \leq \mur$.  By
  Lemma~\ref{lem:lrl-length-specific}, we have $\lrlr - \llrl = 4(\mur - \mul -
  \delta + \pi) \geq 0$.
  
  Finally, we compare \prsl-path and \prsr-path.  The \prsl-path
  consist of an initial right-turning arc $\arc{ST_S}$, a segment
  $\overline{T_S T_F}$, and a final left-turning arc~$\arc{T_F F}$.
  The \prsr-path consist of an initial right-turning arc $\arc{SR_1}$,
  a segment $\overline{R_1 R_2}$, and a final right-turning arc
  $\arc{R_2 F}$, see Figure~\ref{fig:dlr2_rsr_ge_rsl}.
  \begin{figure}
    \centerline{\includegraphics{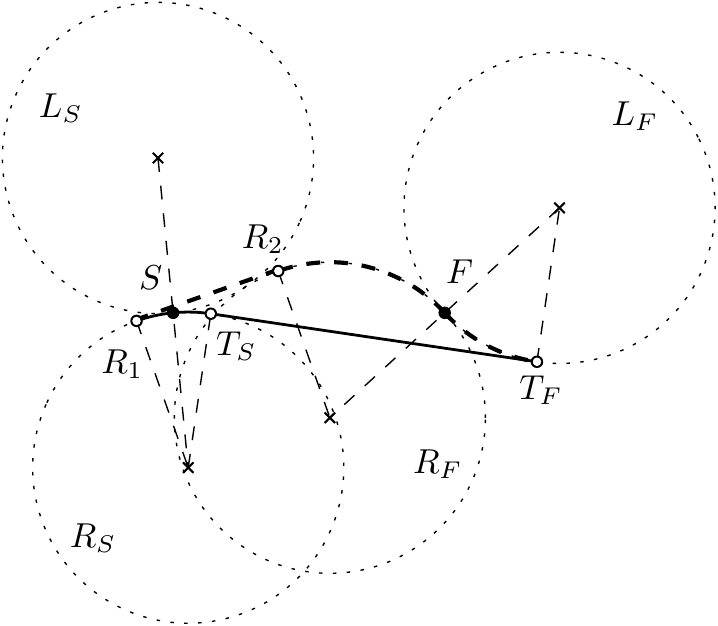}}
    \caption{$\lrsl(\alpha, \beta) \leq
      \lrsr(\alpha, \beta)$ for $(\alpha, \beta) \in \BD$ with $\beta$
      close to $\blr(\alpha)$.}
    \label{fig:dlr2_rsr_ge_rsl}
  \end{figure}
  
  We first claim that the arc~$\arc{ST_S}$ is common to both paths.
  Indeed, since $\dlr < 2$, by Lemma~\ref{lem:ccc-endpoint-locations},
  the initial arc~$\arc{SR_1}$ of the \prsr-path must have length at
  least~$\pi$ (Figure~\ref{fig:rlr-regions}), while the
  arc~$\arc{ST_S}$ must be shorter than~$\pi$
  (Figure~\ref{fig:lrl-regions}).
  
  Since $|\arc{T_FF}| = 2\pi-|\arc{FT_F}|$ and $|\arc{T_S R_1}| =
  2\pi-|\arc{R_1 T_S}|$, where $\arc{FT_F}$ is a left-turning arc on
  the disk $L_F$ and $\arc{R_1 T_S}$ is a right-turning arc on disk
  $R_S$, we have
  \begin{align*}
    \lrsl - \lrsr & = 
    \big(|\overline{T_S T_F}| - |\arc{FT_F}|\big) - 
    \big(-|\arc{R_1 T_S}| + |\overline{R_1 R_2}| + |\arc{R_2 F}|\big) \\
    & = |\arc{R_1 T_S} \cup \overline{T_S T_F}| - 
    |\overline{R_1 R_2} \cup \arc{R_2 F} \cup \arc{FT_F}|.
  \end{align*}
  The path $\overline{R_1 R_2} \cup \arc{R_2 f} \cup \arc{FT_F}$ is a
  path connecting~$R_1$ with~$T_F$ while avoiding the interior
  of~$R_S$. However, the path $\arc{R_1 T_S} \cup \overline{T_S T_F}$
  is clearly the shortest path of this kind, and so $\lrsl - \lrsr
  \leq 0$.
\end{proof}

\paragraph{Proof of Lemma~\ref{lem:lrl-rsl-monotone}}

\begin{proof}
  We first observe that the function $\alpha \mapsto
  \llrl(\alpha,\beta) - \lrsl(\alpha, \beta)$ is decreasing. This
  follows immediately from Lemmas~\ref{lem:rsl-changes-alpha-beta}
  and~\ref{lem:lrl-changes-alpha-beta}.  We also claim that the function
  $\beta \mapsto \llrl(\alpha,\beta) - \lrsl(\alpha, \beta)$ is
  increasing for $\beta\geq 3\pi/2$.  Together, these facts prove the
  lemma: consider two values $0 \leq \alpha_1 < \alpha_2 \leq \as$.
  Since $\blr(\alpha)$ is a decreasing function, we have
  $\blr(\alpha_1) > \blr(\alpha_2) \geq 2\pi - \as > 3\pi/2$, and so
  \begin{align*}
    \llrl(\alpha_1, \blr(\alpha_1)) - \lrsl(\alpha_1, \blr(\alpha_1)) &
    \geq 
    \llrl(\alpha_1, \blr(\alpha_2)) - \lrsl(\alpha_1, \blr(\alpha_2)) \\
    & \geq \llrl(\alpha_2, \blr(\alpha_2)) - \lrsl(\alpha_2, \blr(\alpha_2)).
  \end{align*}

  It remains to show that for $(\alpha, \beta) \in \BD$ with $\beta
  \geq {3\pi}/{2}$, the function $\beta \mapsto
  \llrl(\alpha,\beta) - \lrsl(\alpha, \beta)$ is increasing.  Since we
  are in case~$\BD$, by Lemma~\ref{lem:ccc-endpoint-locations} the
  point $S$ lies on the arc~$\arc{\SL_0\SL_1}$ of~$L_S$, while $F$
  lies on the arc~$\arc{\FL_2\FL_1}$ of~$L_F$ (see
  Figure~\ref{fig:lrl-regions}). It follows that $\gamma_\tL > \beta
  \geq 3\pi/2$, and so $\cos\gamma_\tL > 0$.  This implies that
  $\frac{\partial}{\partial\beta}\lrsl(\alpha, \beta) = 1
  -\cos\gamma_\tL < 1$.  On the other hand, by
  Lemma~\ref{lem:lrl-changes-alpha-beta}, $\frac{\partial}{\partial\beta}
  \llrl(\alpha,\beta) \geq 1$ for $\beta \geq {3\pi}/{2}$, and the
  claim follows.
\end{proof}

\paragraph{Proof of Lemma~\ref{lem:blr-no-extremum}}

\begin{proof}
  We first claim that for $(\alpha, \beta)\in \Bc$ and $d < 2$, if
  $\gamma_{\tR} + \gamma_{\tL} \leq 2\pi$, then $\lrsl(d, \alpha,
  \beta) \leq d+2\pi$.

  Let $T_S$ and $T_F$ denote the points of tangency of the $S$-segment
  to~$R_S$ and~$L_F$.  We observed above that $\gamma_{\tR} \leq \pi$.
  If we also have $\gamma_{\tL} \leq \pi$ then $d \geq |T_S T_F|$, and
  the claim follows immediately.
  
  If $\gamma_{\tR} \geq \pi/2$, then we have $\gamma_{\tL} \leq 2\pi -
  \gamma_{\tR} \leq 3\pi/2$.  But then $d = |SF| \geq 2$, a
  contradiction.  It follows that we must have $\gamma_{\tR} < \pi/2$.
  See Figure~\ref{fig:arc-sum-less-2pi}.  
  \begin{figure}
    \centerline{\includegraphics{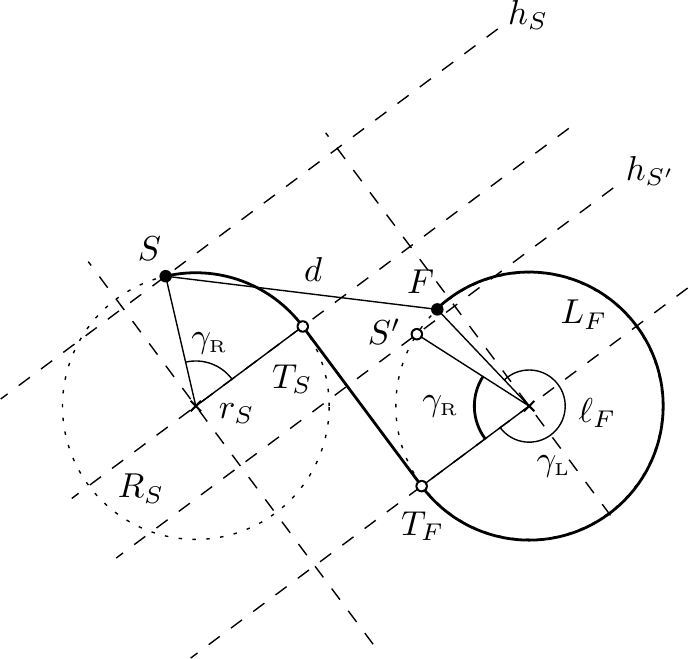}}
    \caption{If $\gamma_{\tR} + \gamma_{\tL} \leq 2\pi$ then $\lrsl \leq
      d+2\pi$}
    \label{fig:arc-sum-less-2pi}
  \end{figure}
  Let $S'$ be the point on~$L_F$ such that the counter-clockwise
  arc~$\arc{S'T_F}$ on~$L_F$ has length~$\gamma_{\tR}$.  Let $h_S$ and
  $h_{S'}$ be lines through $S$ and~$S'$ orthogonal to the
  segment~$\overline{T_S T_F}$.  The distance between $h_S$
  and~$h_{S'}$ is $|T_S T_F|$, and so we have $|S S'| \geq |T_S T_F|$.
  By the triangle inequality, we have $d + |\arc{FS'}| = |SF| +
  |\arc{FS'}| \geq |SS'| \geq |T_S T_F|$, so $|T_S T_F| - |\arc{FS'}|
  \leq d$. Since $2\pi - |\arc{FS'}| = \gamma_{\tR} + \gamma_{\tL}$,
  we have
  \[
  \lrsl(d, \alpha,\beta) = \gamma_\tR + |T_S T_F| + \gamma_{\tL} 
  = |T_S T_F| + 2\pi - |\arc{FS'}| \leq d + 2\pi, 
  \]
  and the claim follows.

  Now let $\alpha_0$ be such an extremum, and consider the \prsl-path for
  $(\alpha_0, \blr(\alpha_0))$.  If $\gamma_{\tR} + \gamma_{\tL} \leq
  2\pi$, then by above argument we have $\lrsl(d,
  \alpha_0, \blr(\alpha_0)) \leq d + 2\pi$.  We can therefore assume
  $\gamma_{\tR} + \gamma_{\tL} > 2\pi$.  The point $(\alpha_0,
  \blr(\alpha_0))$ is an extremum of the function $\lrslp(\alpha,
  \beta)$, under the constraint that $\dlr^{2} = 4$, and so there must
  be a constant~$\lambda$ such that $\nabla \lrslp(\alpha_0,
  \blr(\alpha_0)) = \lambda \nabla \dlr^{2}(\alpha_0,
  \blr(\alpha_0))$.

  Using~$\pi/2 < \sigma, \delta \leq \pi$
  and~\eqref{eq:dlr-sigma-delta} we have
  \begin{align}
    \label{eq:dlr-dsigma}
    \frac{\partial}{\partial \sigma} \dlr^{2} 
    & = 4d\cos\delta \cos\sigma > 0 \\
    \label{eq:dlr-ddelta}
    \frac{\partial}{\partial \delta} \dlr^{2}
    & = -4\sin\delta(d\sin\sigma+2\cos\delta). \\
    \intertext{Using~\eqref{eq:rsl-dalpha} and~\eqref{eq:rsl-dbeta} we get}
    \label{eq:rsl-dsigma}
    2\frac{\partial}{\partial \sigma} \lrslp & = 
    \frac{\partial}{\partial \beta} \lrslp + 
    \frac{\partial}{\partial \alpha} \lrslp 
    = 2 - \cos\gamma_{\tL} - \cos\gamma_{\tR} \geq 0 \\
    \label{eq:rsl-ddelta}
    2\frac{\partial}{\partial \delta} \lrslp & = 
    \frac{\partial}{\partial \beta} \lrslp - 
    \frac{\partial}{\partial \alpha} \lrslp 
    = - \cos\gamma_{\tL} + \cos\gamma_{\tR}.
  \end{align}
  Inequalities~\eqref{eq:dlr-dsigma} and~\eqref{eq:rsl-dsigma} imply
  that $\lambda \geq 0$, so let us consider~\eqref{eq:dlr-ddelta}.  By
  Lemma~\ref{lem:case-b-alpha-monotone}, we have $\sigma + \delta =
  \blr(\alpha_0) \geq \blr(\as) = 2\pi -\as > 3\pi/2$ for $d < 2$, and
  so $\cos\delta < \cos(3\pi/2 - \sigma) = -\sin\sigma$.  It
  follows that $2\cos\delta < -2\sin\sigma < -d\sin\sigma$, and
  so~\eqref{eq:dlr-ddelta} is positive.

  We have $0 \leq \gamma_{\tR} \leq \pi$ and we assumed that
  $\gamma_{\tR} + \gamma_{\tL} > 2\pi$.  It follows that $\gamma_\tL >
  2\pi - \gamma_\tR \geq \pi$.  Since $\gamma_{\tR} > 2\pi -
  \gamma_{\tL}$ we have $\cos\gamma_{\tR} < \cos\gamma_{\tL}$. This
  implies that~\eqref{eq:rsl-ddelta} is negative.  But this means that
  $\lambda < 0$, a contradiction.
\end{proof}

%%%%%%%%%%%%%%%%%%%%%%%%%%%%%%%%%%%%%%%%%%%%%%%%%%%%%%%%%%%%%%%%%%%%%%

\end{document}